\newtheorem{theorem}[equation]{Theorem}
\newtheorem{lemma}[equation]{Lemma}
\newtheorem{proposition}[equation]{Proposition}
\newtheorem{definition}[equation]{Definition}
\newtheorem{corollary}[equation]{Corollary}
\renewcommand{\section}
{\@startsection{section}{1}{0pt}{\medskipamount}{\medskipamount}{\large\bf}}
\makeatletter\renewcommand{\subsection}
{\@startsection{subsection}{2}{\z@}{-3.25ex plus -1ex minus -.2ex}
{1.5ex plus .2ex}{\it }}
\makeatletter\renewcommand{\subsubsection}
{\@startsection{subsubsection}{3}{\z@}{-3.25ex plus -1ex minus -.2ex}
{1.5ex plus .2ex}{\noindent\underline}}
\numberwithin{equation}{section}
\def\={\ =\ }
\def\dd{{\rm d}}
\newcommand{\Tr}[1]{\:{\rm Tr}\,#1}
\def\e{{\,\rm e}\,}
\newcommand{\mbf}[1]{{\boldsymbol {#1} }}
\newcommand{\hol}{{\sf hol}}
\newcommand{\Tor}{{\sf Tor}}
\newcommand{\Fred}{{\tt Fred}}
\newcommand{\Heis}{{\tt Heis}}
\newcommand{\Pic}{{\tt Pic}}
\newcommand{\ort}{{\tt or}}
\newcommand{\ttt}{{\tt t}}
\newcommand{\ttc}{{\tt c}}
\newcommand{\CS}{{\sf CS}}
\newcommand{\rank}{{\tt rank}}
\renewcommand{\Re}{\ensuremath{\mathfrak{Re}}}
\newcommand{\nablatr}{{{\Large\blacktriangledown}}}
\DeclareMathOperator{\K}{\rm K}
\DeclareMathOperator{\E}{\rm E}
\DeclareMathOperator{\Hom}{\sf Hom}
\DeclareMathOperator{\RH}{\rm H}
\DeclareMathOperator{\End}{\sf End}
\DeclareMathOperator{\Aut}{\sf Aut}
\DeclareMathOperator{\Ext}{\sf Ext}
\DeclareMathOperator{\Comp}{\sf Comp}
\DeclareMathOperator{\Symm}{\sf Symm}
\DeclareMathOperator{\ch}{ch}
\DeclareMathOperator{\Index}{\sf Index}
\def\ii{{\,{\rm i}\,}}
\newcommand{\cG}{\mathcal{G}}
\newcommand{\cale}{\mathcal{E}}
\newcommand{\cals}{\mathcal{S}}
\newcommand{\calu}{\mathcal{U}}
\newcommand{\Vcal}{\mathcal{V}}
\newcommand{\fM}{\mathfrak{M}}
\newcommand{\fD}{\mathfrak{D}}
\newcommand{\unit}{\mathbbm{1}}   			
\newcommand{\scrF}{\mathscr{F}}
\newcommand{\scrH}{\mathscr{H}}
\newcommand{\scrG}{\mathscr{G}}
\newcommand{\scrE}{\mathscr{E}}
\newcommand{\scrK}{\mathscr{K}}
\newcommand{\scrAb}{\mathscr{A}b}
\newcommand{\Bord}{\mathscr{B}ord}
\newcommand{\Vect}{\mathscr{V}ect}
\newcommand{\scrZ}{\mathscr{Z}}
\newcommand{\Ocal}{\mathcal{O}}
\def\alg{{\mathcal A}}
\def\hil{{\mathcal H}}
\def\bun{{\mathcal E}}
\newcommand{\IZ}{\mathbb{Z}}
\newcommand{\IC}{\mathbb{C}}
\newcommand{\IN}{\mathbb{N}}
\newcommand{\IR}{\mathbb{R}}
\newcommand{\calH}{{\cal H}}
\newcommand{\cO}{{\cal O}}
\def\Id{{\rm id}}
\def\pt{{\rm pt}}
\def\ch{{\sf ch}}
\def\PD{{\sf Pd}}
\newcommand{\bbt}{{\mathbb T}}
\newcommand{\calc}{{\mathcal C}}
\def\Pic{{\sf Pic}}
\def\Dirac{{D\!\!\!\!/\,}} 
\newcommand{\torus}{{\mathbb{T}}}
\def\e{\epsilon}
\def\beq{\begin{equation}}
\def\eeq{\end{equation}}
\def\bea{\begin{eqnarray}}
\def\eea{\end{eqnarray}}
\def\beqa{\begin{eqnarray*}}
\def\eeqa{\end{eqnarray*}}
\renewcommand{\e}{\,\mathrm{e}\,}
\newcommand{\im}{\,\mathrm{i}\,}
\newcommand{\R}{{\mathbb{R}}}
\newcommand{\uo}{{{\rm U}(1)}}
\def\Dirac{{D\!\!\!\!/\,}} 
\def\Hom{{\sf Hom}}
\def\Ext{{\sf Ext}}
\def\End{{\sf End}}
\def\>{\rangle}
\def\<{\langle}
\def\+{\dagger}
\def\={\ =\ }
\begin{document}

\setcounter{page}{1}
\begin{flushright}
HWM--12--13\\
EMPG--12--20\\
ESI--2385
\end{flushright}

\vskip 0.5cm

\begin{center}

{\Large\bf Quantization of Higher Abelian Gauge Theory \\[6pt] in
  Generalized Differential Cohomology}\footnote{Invited Contribution
  to the Proceedings of the 7th International Conference on
  Mathematical Methods in Physics, Rio de Janeiro, Brazil, April 16--20, 2012; to be
  published in {\sl Proceedings of Science}.}

\vspace{5mm}

{\large Richard J. Szabo}
\\[3mm]
\noindent {\em Department of Mathematics\\ Heriot--Watt University\\
Colin Maclaurin Building, Riccarton, Edinburgh EH14 4AS, U.K.} \\ and
\\ \emph{Maxwell Institute
  for Mathematical Sciences, Edinburgh, U.K.}
\\[3mm]
{Email: {\tt R.J.Szabo@hw.ac.uk}}

\bigskip

\begin{abstract}

\noindent
We review and elaborate on some aspects of the quantization of
certain classes of higher abelian gauge theories using techniques of generalized
differential cohomology. Particular emphasis is placed on the examples
of generalized Maxwell theory and Cheeger--Simons cohomology, and of
Ramond--Ramond fields in Type~II superstring theory and differential K-theory.

\end{abstract}


\end{center}

\bigskip

{\baselineskip=12pt
\tableofcontents
}


\bigskip

\section{Introduction\label{Intro}}

\noindent
This paper is devoted to a survey of some topics in the mathematical
formulation of generalized abelian gauge theories as they arise in
string theory and M-theory. It has been realised
that the proper mathematical treatment of the quantization of such
systems involves techniques from generalized differential cohomology
theories, and this has sparked a wealth of activity in both physics
and mathematics in recent years, as well as intensive interactions
between the two disciplines. Our presentation is neither complete nor
is it exhaustive in the treatment of topics we have chosen to
cover. Instead, we broadly overview various topics, presenting some
aspects with a different emphasis compared to other treatments of the
subject, and also elucidate
certain calculational details which are probably well-known to the
experts but which we have not found in the
literature. Our presentation is intentionally mathematical but with
all motivations, discussions and prejudices towards certain features
inspired by physical considerations from string theory; we have
attempted to define all pertinent mathematical concepts for the
non-experts whilst describing physical concepts more formally.

Before describing the precise contents of this article, let us explain the
general setting we shall consider and which important omissions the
reader can anticipate in the following sections. The different theories we are
interested in all have the same low-energy limit, which is ten-dimensional
supergravity defined on a manifold $M$; we can generalize $M$ to
include groupoids which enables us to formulate the theory on e.g. orbifolds. The relevant field content includes a riemannian metric $g$, a
dilaton field $\phi:M\to\IR$, a closed three-form $H$, and the
Ramond--Ramond gauge fields $F$ which are inhomogeneous differential forms on
$M$ that are $H$-twisted closed, i.e. $(\dd+H\wedge)F=0$. The relevant
string theories are quantized by certain generalized differential
cohomology theories $\check\E{}^\bullet(M)$ depending on the physical
constraints imposed, which gives a suitable lattice in de~Rham
cohomology $\RH^\bullet(M;\IR)$ where the fields should live in order
to enforce charge quantization; in this paper these
theories will always be certain smooth refinements of ordinary
cohomology and K-theory. Examples of theories with the correct
low-energy limit include bosonic string theory which is
quantized by ordinary differential cohomology
$\check\RH{}^3(M)$, heterotic string theory where $H$ is not closed
and which is thereby described by differential cochains, and
Type~I string theory which involves self-dual fields that are quantized by
differential KO-theory $\check{\rm KO}(M)$. For Type~II superstring
theory, the quantization of the $B$-field is somewhat more involved and
has been studied thoroughly in~\cite{DFM,DFM2}; the issue is to
reconcile the worldsheet and target space descriptions of the $B$-field,
which also serves as a differential twisting of K-theory in which the
suitable refinements of the Ramond--Ramond fields $F$ to classes or
cocycles should live. This latter construction includes Type~I string
theory as a special case of the Type~II theory via an orientifold construction.
Various generalizations of this theme can also be considered: For
instance in
Kaluza--Klein compactifications one sets $M=X\times K$ with $K$ a compact
``internal'' manifold and regards the gauge fields as forms $F\in\Omega^\bullet(X;{\tt
  harm}^\bullet(K))$ where ${\tt harm}^\bullet(K)$ is the vector space of harmonic
forms on $K$, while if $M$ is unoriented then the dual gauge fields
live in $\Omega^\bullet(M;{\tt or}(M))$ where ${\tt or}(M)$ is the
orientation bundle of $M$. With this setting in mind, we can now spell out a few of our main
omissions of topics.

Firstly, we do not discuss in detail the appearence or role of
anomalies, which are a physical driving force for the uses of K-theory
in string theory and should be properly addressed within some
framework of categorified index theory; see~\cite{Freed:2000ta,Freed:2000tt,Freed:2004yc}
for extensive discussions on the importance of generalized
differential cohomology theories in this context.

Secondly, we do not
consider twistings of our cohomology theories. The ingredients of
twisted generalized differential cohomology theories are described
in~\cite{Kahle}. Twisted differential K-theory is defined in~\cite{Carey}
using the twisting groupoid of abelian bundle gerbes with
2-connection; this construction is then based on sections of bundles
of Fredholm operators.

Finally, we do not consider orbifolds. They represent a broad class of consistent
string backgrounds with rich interesting features for
which equivariant geometric versions of K-theory seem to have the
appropriate features; here one should use the suggestion~\cite{Sharpe} that string orbifolds are best regarded as
quotient stacks, at least from the perspective of the
worldsheet sigma-model. Differential equivariant complex
K-theory is developed in~\cite{SV,Ortiz,bunke-20092}, while an account of orbifold Ramond--Ramond fields in the path integral
framework and with more general backgrounds can be found
in~\cite{DFM,DFM2}. The original model of~\cite{SV} is defined
using classifying spaces for equivariant K-theory and Bredon cohomology with coefficients in
the real representation ring of the orbifold group; this
approach nicely captures salient features of Ramond--Ramond fields on
Type~II orbifolds, such as flux quantization. The subsequent
construction of~\cite{Ortiz} is rooted in homotopy theory and utilizes
delocalized de~Rham equivariant cohomology; in this approach a ring
structure and push-forwards are readily constructed by taking
real-valued forms as fixed points of a suitable real
structure. Finally, the model of~\cite{bunke-20092} is set in the framework of equivariant local index theory
and differentiable \'etale stacks, and uses delocalized de~Rham
cohomology; this approach is powerful enough to construct all necessary
ingredients in gauge theory such as products, push-forwards, an intersection pairing,
and an $\IR$-valued subfunctor, at the price of using broad
classes of cocycles (geometric families of Dirac operators), some of
which have no immediate physical interpretation in terms of
Ramond--Ramond gauge theory.

The outline of the remainder of this paper, together with directions
for complementary reading, is as follows.

In \S\ref{Abgauge}, we start from the Maxwell gauge theory of
electromagnetism and generalize it to higher abelian gauge
theories. We explain how Dirac charge quantization is naturally rooted
in a description involving Cheeger--Simons cohomology; foundational
aspects of this line of reasoning can be found in~\cite{Freed:2000ta},
while a more
pedagogical introduction geared at physicists
is~\cite{Freed:2006yc}. We also describe the configuration space of
abelian gauge fields in details, and how to properly incorporate
currents. Gauge fields are
modelled by groupoids, and equivalent groupoids are equally good for
the purposes of defining the functional integral of the quantum gauge
theory; we explore one such model which is contained in the seminal
paper~\cite{Hopkins:2002rd}. A complementary review can be found
in~\cite{Valentino:2008xd}, while more detailed mathematical aspects of 
differential cohomology are reviewed in~\cite{Bunke-rev}. 

In \S\ref{RRfields}, we begin by briefly reviewing the relationship
between D-branes and K-theory: We introduce the notion of D-brane, describe
D-branes as K-cycles for geometric K-homology, explain the physical
relevance of the Atiyah--Bott--Shapiro
construction, and give the formula for D-brane charges. This leads us
into the notion of Ramond--Ramond charges and how they are related to the
semi-classical quantization of Ramond--Ramond fields in Type~II
superstring theory. We survey various models for
differential K-theory from a physical perspective, describe the mathematical properties of flat
Ramond--Ramond fields, and formulate the gauge theory of
Ramond--Ramond fields, focusing in particular on their realization as
self-dual fields and the use of differential K-theory in the
description of their holonomies on D-branes. A review of certain
aspects of D-branes and K-theory from a mathematical perspective can be found
in~\cite{Szabo:2008hx}, while various aspects of differential K-theory in string
theory is presented in~\cite{Freed:2002qp}. A mathematical survey of differential K-theory
is found in~\cite{Bunke:2010mq}. 

Finally, in \S\ref{HamQuant} we begin with a quick overview of the
mathematical formulation of functional integral quantization of
generalized abelian gauge theories, together with many explicit examples. We
then consider in some detail the hamiltonian quantization of self-dual generalized
abelian gauge fields using the concept of Pontrjagin self-duality. We
apply this formalism to the quantization of
Ramond--Ramond fields, through the theory of Heisenberg groups and their
representations; the seminal work on this approach to quantization is~\cite{Freed:2006ya}. We work only
with the simplest backgrounds that contain no $H$-flux or D-branes, and
explicitly carry out the hamiltonian quantization of the
Ramond--Ramond gauge theory by constructing the pertinent Heisenberg
groups along the lines of~\cite{Freed:2006ya}. By choosing the natural
polarization on the configuration space of the self-dual
Ramond--Ramond gauge theory, the Heisenberg group admits a unique
irreducible unitary representation which is identified as the quantum
Hilbert space of the gauge theory.

\bigskip

\section{Abelian gauge theory and differential cohomology
  \label{Abgauge}}

\subsection{Maxwell theory}

In undergraduate physics courses on classical electromagnetic theory
one learns about perhaps the most fundamental set of equations in
physics, the Maxwell equations; these equations govern all classical
electromagnetic phenomena and are responsible for much of modern
technology. In this section we begin with a mathematical
introduction to the classical Maxwell theory.

Let $M=\IR\times N$ be a four-manifold with lorentzian signature
metric $\dd t\otimes \dd t-g$, where $t\in\IR$ parametrizes the
``time'' direction and $(N,g)$ is a connected riemannian
three-manifold which we will think of as ``space''. Classical
electromagnetism takes place in \emph{Minkowski spacetime} where $N$
is taken to be the vector space $\IR^3$.

Introduce a pair of differential forms
\beqa
F\in\Omega^2(M) \qquad \mbox{and} \qquad j_e\in\Omega_c^3(M) \ ,
\eeqa
where the two-form $F$ is called the gauge field strength or flux, while
$j_e$ is a differential form of compact spatial support called the
electric current. Maxwell's equations then read
\beq
\dd F=0 \qquad \mbox{and} \qquad \dd \star F=j_e \ ,
\label{Maxwelleqs}\eeq
where $\star$ is the Hodge duality operator associated to the
lorentzian metric on $M$. Consistency of the second equation in (\ref{Maxwelleqs}) requires the conservation
law
\beqa
\dd j_e=0 \ ,
\eeqa
which defines the electric charge
\beqa
q_e:=\big[j_e\big|_N\,\big] \ \in \ \RH_c^3(N;\IR)\cong \IR \ .
\eeqa

We can generalize the first equation of (\ref{Maxwelleqs}) to include
a magnetic current three-form $j_m$ with
\beqa
\dd F=j_m \ ,
\eeqa
so that $\dd F=0$ outside the support of $j_m$. This defines the
magnetic charge
\beqa
q_m:=\big[j_m\big|_N\,\big] \ \in \ \RH_c^3(N;\IR)\cong \IR \ .
\eeqa
The magnetic current vanishes in the ``classical theory'', but the
``quantum theory'' allows for it; as we discuss below, this leads to the quantization of
electric \emph{and} magnetic charge. Since the two-form $F$ is not
required to have compact support, the charges $q_{e}$ and $q_m$ are non-zero generally; in
fact, they live in the kernel of the natural forgetful map which
forgets about the compact support condition,
\beqa
q_{e},q_{m} \ \in \ \ker\big(\RH_c^3(N;\IR) \ \longrightarrow \ \RH^3(N;\IR)
\big) \ ,
\eeqa
since the currents $j_{e}$ and $j_{m}$ are trivialized by the flux $F$ on the
interior of the three-manifold $N$.

Absence of magnetic charge in the classical theory implies that the
de~Rham cohomology class of the flux is trivial,
\beq
[F]_{\rm dR}=0 \qquad \mbox{in} \quad \RH^2(M;\IR) \ .
\label{FdR0}\eeq
Hence there exists a one-form $A\in\Omega^1(M)$ such that
\beq
F=\dd A \ .
\label{FdA}\eeq
The \emph{gauge potential} $A$ is only defined up to
gauge transformations $A\mapsto A+\alpha$ by closed differential one-forms
$\alpha\in\Omega_{\rm cl}^1(M)$. If $M$ is contractible (in particular
if $N=\IR^3$), then the two conditions (\ref{FdR0}) and (\ref{FdA})
are automatically satisfied and equivalent to each other. In general,
the space of classical electromagnetic fields modulo gauge
transformations is the infinite-dimensional abelian Lie group
\beq
\scrF_{\rm class}(M)=\Omega^1(M)\,\big/\, \Omega_{\rm cl}^1(M) \ .
\label{calfclassM}\eeq
Alternatively, we can take $A$ to be a connection on a principal
$\IR$-bundle over $M$ (with the additive group structure on the real
line $\IR$);
the space of such connections is an affine space based on
$\Omega^1(M)/\dd \Omega^0(M)$, and the quotient by gauge equivalence classes
of flat connections is an affine space modelled on~(\ref{calfclassM}). This is the correct configuration space of fields
for classical electromagnetism; as we discuss below, the story is
rather different for the quantum theory. 

It is also possible to derive these results from an action
principle. Maxwell's equations (\ref{Maxwelleqs}) in this instance are the variational
equations for the action functional
\beq
S_M[A]=\int_M\, \Big(-\frac12\, \dd A\wedge\star \dd A+A\wedge j_e\Big)
\label{Maxwellaction}\eeq
with respect to compactly supported variations of $A$ such that
$\int_M\, \dd(A\wedge\star F)=0$. Then $S_M[A+\alpha]= S_M[A]$ for
$\alpha\in\Omega^1_{\rm cl}(M)$ up to an exact term $-\int_M\,
\dd(\alpha \wedge \star F)$; in this sense the action functional
(\ref{Maxwellaction}) is classically well-defined on the quotient
space (\ref{calfclassM}).

\subsection{Semi-classical quantization\label{Maxwellsemi}}

If $\RH^2(N;\IR)$ is non-trivial, we may well have $[F]_{\rm dR}\neq
0$ in $\RH^2(M;\IR)$, e.g. outside the support of a magnetic current
$j_m$ in $M$. The \emph{Dirac quantization condition} states that the
de~Rham cohomology class of the flux sits in a lattice
\beqa
\mbox{$\frac1{2\pi}$}\, [F]_{\rm dR} \ \in \ \Lambda\subset\RH^2(M;\IR) \ ,
\eeqa
where
\beqa
\Lambda = \RH^2(M;\IZ) \, \big/ \, \Tor\, \RH^2(M;\IZ)
\eeqa
is the full lattice $\RH^2(M;\IZ)\hookrightarrow \RH^2(M;\IR)$ induced
in cohomology by the inclusion of abelian groups $\IZ\hookrightarrow\IR$, whose kernel
consists of torsion classes in $\RH^2(M;\IZ)$.

Let us pause to briefly explain how this is related to the usual
notion of Dirac charge quantization in physics (see
e.g.~\cite{Freed:2000ta} for further details). Let us take $N=\IR^3$,
and the
electric and magnetic currents to be of the form
\beqa
j_e=q_e\, \delta_W \qquad \mbox{and} \qquad j_m=q_m\,\delta_\IR \ ,
\eeqa
where $\delta_W=\PD_M(W)$ is Poincar\'e dual to an oriented
one-manifold $W\subset M$ (the ``worldline'' of a charged particle), while $\delta_\IR$ is a distributional
three-form on $M=\IR\times\IR^3$ dual to the one-manifold
$\IR\times0\subset M$. The global obstruction to the local
representation (\ref{FdA}) on $\IR\times(\IR^3-0)$ is then
\beq
\int_{S^2}\, F=\int_{B^3}\, \dd F=\int_{B^3}\, q_m\, \delta_\IR = q_m \ ,
\label{globalobstr}\eeq
where $S^3=\partial B^3$ is the unit sphere in $t\times(\IR^3-0)$ for
all $t\in \IR$. This obstruction is due to the non-trivial cohomology
$\RH^2(\IR^3-0;\IR)\neq0$ and it may be thought of as originating through
the \emph{Dirac string}, which is a semi-infinite solenoid represented
by a ray from the origin $0\in\IR^3$; requiring the string to be
physically invisible then yields the global obstruction
(\ref{globalobstr}). The Dirac quantization law then ensures that in the quantum theory
the exponentiated charge coupling $\exp\big(\im \int_M\, A\wedge
j_e\big)=\exp\big(\ii q_e\, \int_W\, A\big)$ is
well-defined. Since $\int_W\, A\in \IR/q_m\, \IZ$ by
(\ref{globalobstr}), the coupling is well-defined if
\beq
q_e\, q_m \ \in \ 2\pi\,\IZ \ .
\label{qeqmquant}\eeq
In the following we will interpret the quantization condition
(\ref{qeqmquant}) geometrically. The wavefunction of a
non-relativistic quantum mechanical particle of charge $q_e$ on
$\IR^3-0$ is a section of a line bundle associated to the
representation
\beqa
\IR\,\big/\, q_m\, \IZ \ \longrightarrow \ \uo \ , \qquad x \
\longmapsto \ \e^{\ii q_e\, x} \ ,
\eeqa
and is therefore well-defined if and only if (\ref{qeqmquant})
holds. In this sense the quantization of charge is a consequence of
the compactness of the gauge group $\IR/q_m\, \IZ$.

There is an elegant model in differential geometry which combines
locality of the gauge field $F$ with global obstructions, including
Dirac charge quantization. For this, we take $F$ to be the curvature
of a connection $A$ on a principal $\bbt$-bundle $\pi:L\to M$, $\bbt=\IR/\IZ$, with
first Chern class $c_1(L)\in\RH^2(M;\IZ)$, i.e. $A$ is a
right-invariant one-form on $L$ such that $\dd A=\pi^*F$. Then the
classical configuration space of fields (\ref{calfclassM}) is replaced
by the quantum groupoid of fields $\scrF_{\rm qu}(M)$. Recall that a \emph{groupoid} is a small category in which all morphisms
(viewed as arrows between objects) are invertible. The objects of the
category $\scrF_{\rm qu}(M)$ are principal $\bbt$-bundles with
connection, while its morphisms are connection-preserving bundle
isomorphisms (gauge transformations). It has the structure of an
infinite-dimensional abelian Lie group, with unit the trivial bundle, under tensor product of circle
bundles with connection. The set of isomorphism classes
$\pi_0{\scrF_{\rm qu}(M)}$ is an infinite-dimensional abelian Lie
group that fits into an exact sequence
\beqa
0 \ \longrightarrow \ \RH^1(M;\bbt) \ \longrightarrow \
\pi_0{\scrF_{\rm qu}(M)} \ \xrightarrow{ \ F \ } \ \scrF_{\rm
  class}(M) \ \longrightarrow \ 0
\eeqa
which describes the quantum configuration space as an extension of the
classical one (\ref{calfclassM}) by gauge equivalence classes of flat
connections; whence such connections are detectable quantum
mechanically, but not classically. Symbolically,
\beqa
\pi_0{\scrF_{\rm qu}(M)} = \bigsqcup_{c_1\in\RH^2(M;\IZ)}\, \alg(L_{c_1}) \,
\big/ \, \cG \ ,
\eeqa
where $\alg(L_{c_1})$ is the affine space of smooth connections on a
line bundle of first Chern class $c_1$ while
$\cG=\Omega^0(M;\bbt)$ is the gauge group. The group $\RH^1(M;\bbt)$ of flat fields can be
described as follows. The short exact sequence of abelian groups
\beq
0 \ \longrightarrow \ \IZ \ \hookrightarrow \ \IR \ \longrightarrow \
\bbt=\IR/\IZ \ \longrightarrow \ 0
\label{ZRTseq}\eeq
induces a short exact
sequence in cohomology
\beqa
0 \ \longrightarrow \ \RH^1(M;\IZ)\otimes\bbt \ \longrightarrow \
\RH^1(M;\bbt) \ \xrightarrow{ \ \beta \ } \ \Tor\, \RH^2(M;\IZ) \
\longrightarrow \ 0 \ \,
\eeqa
where the group $\RH^1(M;\IZ)\otimes\bbt$ is the identity component of
$\RH^1(M;\bbt)$, the torsion subgroup of $\RH^2(M;\IZ)$ is the group of components of
$\RH^1(M;\bbt)$, and $\beta$ is the Bockstein homomorphism.

Another point of view of the space of quantum fields $\pi_0{\scrF_{\rm
  qu}(M)}$ is as the group of holonomies
\beqa
\chi_A \,:\, Z_1(M) \ \longrightarrow \ \uo \ , \qquad
\chi_A(\gamma)=\exp\Big( \im \oint_\gamma\, A \Big) \ ,
\eeqa
where $Z_p(M)$ is the group of smooth $p$-cycles on $M$. Such group
homomorphisms are characterized by the feature that there is a unique integral
two-form $F\in 2\pi\, \Omega_\IZ^2(M)$ with
\beqa
\chi_A(\partial D)=\exp\Big(\im \int_D\, F\Big) \qquad \mbox{for}
\quad D\in C_2(M) \ ,
\eeqa
where $\Omega_\IZ^p(M)$ is the lattice of closed $p$-forms on $M$ with
integer periods, and $C_p(M)$ denotes the group of smooth $p$-chains on $M$. Then the group $\pi_0{\scrF_{\rm
  qu}(M)}$ may be characterized by the short exact sequence
\beq
0 \ \longrightarrow \ \Omega^1(M)\, \big/ \, \Omega^1_\IZ(M) \
\longrightarrow \ \pi_0{\scrF_{\rm
  qu}(M)} \ \xrightarrow{ \ c_1 \ } \ \RH^2(M;\IZ) \ \longrightarrow \ 0 \ ,
\label{FquMshort}\eeq
where the kernel of the characteristic class map $c_1$
consists of connections on the trivial line bundle over $M$ modulo
gauge equivalence; the quotient in (\ref{FquMshort}) is the group of
topologically trivial one-form fields on $M$. More generally, the
space of quantum fields completes the pullback square
\beqa
\xymatrix{
\pi_0{\scrF_{\rm
  qu}(M)} \ \ar[r] \ar[d] & \ \Omega_{\rm cl}^2(M) \ar[d] \\
\RH^2(M;\IZ) \ \ar[r] & \ \RH^2(M;\IR)
}
\eeqa
so that $\pi_0{\scrF_{\rm
  qu}(M)}\subset \RH^2(M;\IZ)\times \Omega_{\rm cl}^2(M)$; thus a quantum
field is a representative in $\Omega_{\rm cl}^2(M)$ of a first Chern
class $c_1\in\RH^2(M;\IZ)$
in de~Rham cohomology.

\subsection{Higher abelian gauge theory\label{GAGT}}

There are various generalizations of Maxwell theory motivated
from string theory which use higher degree differential form fields,
for instance the two-form $B$-field of superstring theory, the
three-form $C$-field of M-theory, and other supergravity fields. The
higher Maxwell equations on a lorentzian $n+1$-manifold $M=\IR\times N$, with
$N$ a riemannian $n$-manifold, read again
\beq
\dd F=0 \qquad \mbox{and} \qquad \dd\star F=j_e \ ,
\label{GAGTeom}\eeq
but now generally $F\in\Omega^p(M)$ and
$j_e\in\Omega_c^{n-p+2}(M)$ with $\dd j_e=0$. In the absence of electric
current $j_e=0$, the classical flux group is
\beqa
\big([F]_{\rm dR}\,,\,[\star F]_{\rm dR}\big) \ \in \
\RH^p(M;\IR)\oplus \RH^{n-p+1}(M;\IR) \ ,
\eeqa
which possesses ``electric-magnetic duality'' interchanging
(magnetic) $p$-forms with (electric) $n-p+1$-forms. On the other hand, the group of
classical electric charges is
\beqa
\big[j_e\big|_N\,\big] \ \in \ Q_e:= \ker\big(\RH^{n-p+2}_c(N;\IR) \ \longrightarrow \ \RH^{n-p+2}(N;\IR)
\big) \ .
\eeqa
From the exact sequence
\beqa
\RH^{n-p+1}(M;\IR) & \xrightarrow{ \ i^* \ } &
\RH^{n-p+1}\big(M-{\rm supp}(j_e)\,;\,\IR\big) \ \longrightarrow \\ &
\longrightarrow &
\RH^{n-p+1}\big(M\,,\, M-{\rm supp}(j_e)\,;\,\IR\big) \ \xrightarrow{
  \ \delta \ } \ \RH^{n-p+2}(M;\IR)
\eeqa
one identifies the charge group as
\beqa
Q_e\cong \RH^{n-p+1}\big(M-{\rm supp}(j_e)\,;\,\IR\big)\, \big/\,
\RH^{n-p+1}(M;\IR) \ .
\eeqa
We interpret this as the group of ``charges measured by the flux at
infinity'', which are given by integrating the form $\star F$
over a gaussian sphere $S_\infty^{n-p+1}$.

Again these equations can be obtained from
  a variational principle for the action functional
\beq
S_M[A]:= \int_M\, \Big(-\frac12\, F\wedge\star F+A\wedge j_e\Big) \ ,
\label{GAGTaction}\eeq
where $F=\dd A$. Since the current form $j_e$ is closed and
compactly supported on $M$, by Poincar\'e duality there is a dual class in
the real homology $[W_e]:= \PD_M(j_e)\in \RH_{p-1}(M;\IR)$ which is
represented by a compact oriented submanifold $W_e \subset M$ such that
\beq
\int_M\, a\wedge j_e = -\int_{W_e} \, a\big|_{W_e}
\label{PDint}\eeq
for any closed $p-1$-form $a$; if $a$ is not closed, then the
formula (\ref{PDint}) still holds but $j_e$ must be now regarded as a
de~Rham current, i.e. a distributional form
supported on $W_e \subset M$. We think of the $p_e+1$-dimensional
submanifold $W_e$ as the worldvolume of an
``electrically charged $p_e$-brane'' where $p_e=p-2$, with 
uniform charge density $q_e=\big[\star j_e\big|_{W_e} \big]
\in\RH^0(W_e;\IR)$ induced by the current which the electrically
charged brane produces. Alternatively, given $q_e\in\RH^0(W_e;\IR)$,
the current $[j_e]=i_!(q_e)\in \RH^{n-p+2}(M,M- W_e)$ is the
pushforward of $q_e$ induced by the embedding $i:W_e\hookrightarrow M$.

We can also
introduce a magnetic current $j_m \in\Omega^{p+1}_c(M)$, which modifies
the first equation of motion in (\ref{GAGTeom}) to $\dd F=j_m$, and
the corresponding magnetic $p_m$-brane $W_m\subset M$ with
$p_m=n-p-1$. Then the classical dyonic charge group is
\beq
\big([j_m]\,,\, [j_e]\big) \ \in \ \RH^{p+1}(M,M- W_m;\IR)\oplus
\RH^{n-p+2}(M,M- W_e;\IR)  \ .
\label{classchargegp}\eeq
When $p$ is even the lattice of charges is symplectic,
while for $p$ odd the lattice is symmetric.

Everything we said before concerning the semi-classical Maxwell theory
has an analogue for these higher abelian gauge fields. In particular,
Dirac quantization implies the quantization of classical charges and
the quantum charge group is the real image of the lattice
\beqa
\RH^{p+1}(M,M- W_m;\IZ) \ \oplus \ 
\RH^{n-p+2}(M,M- W_e;\IZ)
\eeqa
in de~Rham cohomology. From this one might expect that the quantum
flux group is the real image of the lattice $\RH^p(M;\IZ)\oplus
\RH^{n-p+1}(M;\IZ)$ in de~Rham cohomology, but we shall see that there
are some subtleties with this naive guess. 
The proper geometric
interpretation of the quantum theory of higher abelian gauge fields, which is a higher generalization of the description of
electromagnetic fields in terms of line bundles with connection, is
provided by studying isomorphism classes of fields in Cheeger--Simons
differential cohomology~\cite{Alexander:1985aa,Brylinski}.

\begin{definition}
The \emph{$p$-th Cheeger--Simons differential cohomology group} of $M$
is the subgroup
\beqa
\check\RH{}^p(M)\subset \Hom_{\scrAb}\big(Z_{p-1}(M)\,,\, \uo\big)
\eeqa
in the category $\scrAb$ of abelian groups consisting of homomorphisms $\chi$, called \emph{differential characters}, such
that there exists a unique closed integral $p$-form
$F_\chi\in 2\pi\, \Omega_\IZ^p(M)$, called the \emph{curvature} of the
differential character $\chi$, with
\beqa
\chi(\partial B) = \exp\Big(\im \int_B\, F_\chi \Big) \qquad
\mbox{for} \quad B\in C_p(M) \ .
\eeqa
\label{CheegerSimonsdef}\end{definition}

In the following we use a multiplicative notation for characters
$\chi$ which are valued in the circle group $\uo$, and an additive
notation for their classes $[\check A]$ which are valued in
the abelian Lie algebra $\bbt=\IR/\IZ$; when we wish to utilize both
descriptions simultaneously we will also write $\chi_{\check A}$. For reasons that will eventually become clear, the field theories
we are interested in all fall into the following
characterization.

\begin{definition}
A \emph{higher abelian gauge theory} is a field theory on a
smooth manifold $M$ whose (semi-classical) configuration space of
gauge inequivalent field configurations is given by the differential
cohomology group $\check\RH{}^p(M)$ for some $p\in\IZ$, and whose
\emph{charge group} is the integer cohomology $\RH^p(M;\IZ)$.
\end{definition}

\subsubsection*{Properties}

\begin{enumerate}
\item The map $M\mapsto\check\RH{}^p(M)$ is a contravariant functor,
  with $\check\RH{}^p(M)$ an infinite-dimensional abelian Lie group
  whose connected components are labelled by the charge group
 $\pi_0\check\RH{}^p(M)= \RH^p(M;\IZ)$.
\item There is a \emph{graded ring structure}
  $\check\RH{}^{p_1}(M)\otimes\check\RH{}^{p_2}(M)\to \check\RH{}^{p_1+p_2}(M)$, denoted $\chi_1\smile \chi_2$
  for $\chi_1,\chi_2\in \check\RH{}^\bullet(M)$, and an \emph{integration map}
  $\int^{\check\RH{}}\!\!\!\int_M \, :\check\RH{}^{n+2}(M)\to
  \check\RH{}^1(\pt)\cong\bbt$ where $\pt$ denotes a one-point space; the existence of this integration
  requires a suitable notion of $\check\RH{}$-orientation on $M$. More
  generally, given an $\check\RH{}$-oriented bundle of manifolds
  $M\hookrightarrow X\to P$, there is an integration over the fibres
  $\int^{\check\RH{}}\!\!\!\int_{X/P} \, :\check\RH{}^s(X)\to \check\RH{}^{s-n-1}(P)$.
\item There is a surjective \emph{field strength map} defined by a
  natural transformation
\beqa
F\,:\, \check\RH{}^p(M) \ \longrightarrow \ \Omega_\IZ^p(M) \ , \qquad
F(\chi):= \mbox{$\frac1{2\pi}$}\, F_\chi
\eeqa
which is a graded ring homomorphism, i.e.
\beqa
F(\chi_1\smile\chi_2)= F_{\chi_1} \wedge F_{\chi_2} \ .
\eeqa
Then integration obeys a version of Stokes' theorem
\beqa
\int^{\check\RH{}}\!\!\!\!\!\int_{\partial N} \, [\check A] =  \,
\int^{\check\RH{}}\!\!\!\!\!\int_N \, F\big([\check A]\big)
\eeqa
for $[\check A] \in\check\RH{}^p(N)$.
\item There is a surjective \emph{characteristic class map} defined by a
  natural transformation
\beqa
c\,:\, \check\RH{}^p(M) \ \longrightarrow \ \RH^p(M;\IZ)
\eeqa
which is a ring homomorphism, i.e.
\beqa
c(\chi_1\smile\chi_2)= c(\chi_1)\smile c(\chi_2) \ ,
\eeqa
and which is compatible with the field strength map, i.e.
\beqa
\big[(c\otimes\IR)(\chi)\big]= \mbox{$\frac1{2\pi}$}\, \big[F_\chi\big]_{\rm dR} \ .
\eeqa
Together the maps $c$ and $F$ define the pullback square
\beqa
\xymatrix{
\check\RH{}^p(M) \ \ar[r] \ar[d] & \ \Omega_{\rm cl}^p(M) \ar[d] \\
\RH^p(M;\IZ) \ \ar[r] & \ \RH^p(M;\IR)
}
\eeqa
which leads to the exact sequence
\beqa
0 \ \longrightarrow \ \RH^{p-1}(M;\IZ)\otimes \IR/\IZ \
\longrightarrow \ \check\RH{}^p(M) \ \longrightarrow \ \Omega_\IZ^p(M)
\times_{[-]} \RH^p(M;\IZ)
\eeqa
where $\Omega_\IZ^p(M)
\times_{[-]} \RH^p(M;\IZ):= \big\{(\omega,\xi) \ \big| \ [\omega]_{\rm
  dR}= \xi \big\}$.
\item Topologically trivial or flat fields $\chi_1=\chi_{A_1}$
  correspond to the class of a globally defined differential form $A_1$ on $M$; its
  product with any other character $\chi_2$ is also topologically
  trivial and given by
\beqa
\chi_{A_1}\smile \chi_2= \chi_{A_1\wedge F_{\chi_2}} \ .
\eeqa
More generally, the products of $\phi\in\ker F\subset \check\RH{}^p(M)$
and $\xi\in\ker c\subset \check\RH{}^p(M)$
with any character $\chi\in\check\RH{}^l(M)$ correspond respectively to the classes $(-1)^l\,
c(\chi)\smile\phi$ and $(-1)^l\,
F(\chi)\smile\xi$.
\item The Cheeger--Simons groups are completely characterized by
  two short exact sequences, which can be summarised in the diagram
\beq
\xymatrix{
0 \ar[dr] & & & & 0 \\
 & \RH^{p-1}(M;\bbt) \ar[dr] & & \RH^p(M;\IZ) \ar[ur] & \\
 & & \check\RH{}^p(M) \ar[ur]^c \ar[dr]^F & & \\
 & \Omega^{p-1}(M) \,\big/ \, \Omega_\IZ^{p-1}(M) \ar[ur] & &
 \Omega^p_\IZ(M) \ar[dr] & \\
0 \ar[ur] & & & & 0
}
\label{CSexactseqs}\eeq
The sequence running from top to bottom is the field strength
sequence, where, by Poincar\'e duality,
$\RH^{p-1}(M;\bbt)\cong\Hom_{\scrAb}( \RH_{p-1}(M;\IZ),\bbt)$ is the group of \emph{flat
fields} $\chi$ with $F_\chi=0$; it defines a torus
$\torus^p(M) \subset\check\RH{}^p(M)$ with fundamental group
\beqa
\pi_1\check\RH{}^p(M)\cong \RH^{p-1}(M;\IZ)\,\big/\, \Tor\,
\RH^{p-1}(M;\IZ)
\eeqa
based at the identity $0$. The sequence from bottom to top is the
characteristic class sequence, with $\Omega^{p-1}(M) /
\Omega_\IZ^{p-1}(M)$ the torus of \emph{topologically trivial fields} whose
classes $[A]$ have curvature $F([A])=\dd A$. One also has the exact
sequence
\beq
0 \ \longrightarrow \ \RH^{p-1}(M;\IR)\,\big/\, \RH^{p-1}(M;\IZ) \
\longrightarrow \ \RH^{p-1}(M;\torus) \ \xrightarrow{ \ \beta \ } \ \Tor\,
\RH^p(M;\IZ) \ \longrightarrow \ 0 \ ,
\label{CSflatseq}\eeq
where the torsion subgroup is the group of \emph{discrete Wilson
  lines} and $\beta$ is the Bockstein homomorphism.
This
yields a geometric picture of $\check\RH{}^p(M)$ as consisting of
infinitely many connected components $\check\RH{}^p_c(M)$ labelled by the charges $c\in\RH^p(M;\IZ)$, with each
topological sector a torus fibration over a vector space whose fibres are
finite-dimensional tori $\Omega^{p-1}_{\rm cl}(M) /
\Omega_\IZ^{p-1}(M)$ represented by topologically trivial flat
fields, called \emph{Wilson lines}. In particular, there is a non-canonical splitting
$\check\RH{}^p(M)= \bigsqcup_{c\in\RH^p(M;\IZ)}\,
\check\RH{}^p_c(M)\cong T\times \Gamma\times V$ where $T$ is the torus of
Wilson lines, $\Gamma$ is the subgroup of topologically trivial flat
fields, and $V\cong{\rm im}(\dd^\dag)$ is the vector space of
\emph{oscillator modes}; there is an isomorphism
$\check\RH{}^p_0(M)\, /\, \torus^p(M)\cong \dd\Omega^{p-1}(M)$ of vector
spaces. Crucially, in contrast to ordinary cohomology groups, the
differential cohomology contains information about
both flat and topologically trivial fields on $M$, as they generally
have non-zero classes in $\check\RH{}^p(M)$.
\item The Cheeger--Simons groups satisfy \emph{Pontrjagin--Poincar\'e
    duality}
\beqa
\Hom_{\scrAb}\big(\check\RH{}^p(M)\,,\, \bbt\big) \cong \check\RH{}^{n+2-p}(M) \ .
\eeqa
This duality is a consequence of the fact that integration defines a
perfect bilinear pairing
\beqa
\check\RH{}^p(M)\times \check\RH{}^{n+2-p}(M) \ \longrightarrow \ 
\check\RH{}^1(\pt)= \bbt
\eeqa
by
\beqa
\langle \chi_1,\chi_2\rangle:= \int^{\check\RH{}}\!\!\!\!\!\int_M \, \chi_1\smile \chi_2
\ .
\eeqa
On cocycles we also denote this pairing by $([\check A_1],[\check
A_2])\mapsto \int^{\check\RH{}}\!\!\!\int_M\, [\check A_1]\smile[\check
A_2] =: \langle[\check A_1],[\check A_2]\rangle$. If the characteristic class of $\check A_1$ is zero then the
pairing is $\int_M\, A_1\wedge F_{\check A_2} \ {\rm mod}\ \IZ$, and if also
$c(\check A_2)=0$ then this becomes $\int_M\, A_1\wedge\dd A_2 \ {\rm
  mod}\ \IZ$; note that both of these pairings are given by integrals
of forms. On the other hand, if the curvature $F_{\check A_1}=0$ with
$[\check A_1]=\alpha_1\in\RH^{p-1}(M;\torus)$, then the pairing is
$\int^{\RH{}}\!\!\!\!\int_M\, \alpha_1\smile c({\check A_2}) \ {\rm mod}\
\IZ$.
\item A differential
character $\chi\in \check
\RH{}^p(M)$ defines a \emph{holonomy}
\beq
\hol_\Sigma(\chi):= \exp\Big(\ii\oint_\Sigma\, A_\chi
\Big) \ \in \ \uo
\label{charhol}\eeq
for any $p-1$-cycle $\Sigma\in
Z_{p-1}(M)$, where the potential $A_\chi\in\Omega^{p-1}(\Sigma)$ is defined by
$F_{\chi|_\Sigma}=\dd A_\chi$ and we have used $\RH^p(\Sigma;\IZ)=0$. For
flat fields $F_\chi=0$, the holonomy defines a class
$\big[\hol(\chi)\big] \in \RH^{p-1}(M;\uo)$.
\end{enumerate}

\subsubsection*{Examples}

The groups $\check\RH{}^p(M)$ vanish for all $p<0$. For the first few
non-vanishing groups we have the following identifications:
\begin{itemize}
\item For $p=0$, $\check\RH{}^0(M)\cong \RH^0(M;\IZ)$ is identified
  via the characteristic class map $c$ as the group of
  connected components of $M$; the field strength map $F$ assigns an
  integer to each component.
\item For $p=1$, $\check\RH{}^1(M)\cong\Omega^0(M;\uo)$ is the space
  of differentiable circle-valued maps $f:M\to\uo\cong S^1$. The field strength and 
  characteristic class maps are given in this case
  by
\beqa
f \
\stackrel{F}{\longmapsto} \ \mbox{$\frac1{2\pi\im}$}\, \dd \log f \qquad \mbox{and} \qquad f \ \stackrel{c}{\longmapsto} \ f^*[\dd\theta]  \ ,
\eeqa
where $[\dd\theta]$ is the fundamental class of $S^1$; these maps
describe how the function $f$ acts on cohomology. The holonomy is
the evaluation $\hol_x(f)=f(x)$ of $f$ at $x\in M$.
\item For $p=2$, $\check\RH{}^2(M)\cong \Pic_\nabla(M)$ is the Picard group of gauge
  equivalence classes of line bundles with connection $(L,\nabla)$ on
  $M$ and gauge group generated by $\check\RH{}^1(M)$. The field strength and characteristic class
  maps are given in this case by
\beqa
(L,\nabla) \ \stackrel{F}{\longmapsto} \ \mbox{$\frac1{2\pi\im}$}\, \nabla^2 \qquad \mbox{and}
\qquad (L,\nabla) \ \stackrel{c}{\longmapsto} \ c_1(L) \ .
\eeqa
The connection $\nabla$ determines a holonomy $\hol_\gamma(\nabla)$ for $\gamma\in Z_1(M)$
which coincides with
the holonomy (\ref{charhol}) of the corresponding differential
character.
\item For $p=3$, $\check\RH{}^3(M)$ is isomorphic to the group of gauge
  equivalence classes of $\uo$ gerbes $\scrG\downarrow M$ with
  2-connection $(A,B)$ and gauge group generated by the differential
  cohomology $\check\RH{}^2(M)$. The
  field strength map gives the curvature $H=\dd B$ of the $B$-field,
  while the characteristic class map returns the Dixmier--Douady class of
  $\scrG$. The 2-connection $(A,B)$ determines a holonomy
  $\hol_\Sigma(B) \in\uo$ for $\Sigma\in Z_2(M)$.
\item When $M$ is a point, one can use the exact sequences in
  (\ref{CSexactseqs}) to explicitly compute
\beqa
\check\RH{}^p(\pt)= \left\{ \begin{matrix} \IZ \ , & \quad p=0 \ , \\
    \IR/\IZ \ , & \quad p=1 \ , \\ 0 \ , & \quad p>1 \ ,
  \end{matrix} \right.
\eeqa
where for $p=0$ only the characteristic class contributes while for
$p=1$ only topologically trivial flat fields contribute. From the
field strength exact sequence in (\ref{CSexactseqs}) one also finds
\beqa
\check\RH{}^{n+2}(M)\cong \RH^{n+1}(M;\bbt) \qquad \mbox{and} \qquad 
\check\RH{}^p(M)=0 \quad \mbox{for} \quad p>n+2 \ .
\eeqa
\end{itemize}

\subsubsection*{Deligne cohomology}

An explicit cochain model for the Cheeger--Simons groups is provided
by Deligne cohomology, see e.g.~\cite{Brylinski}, which makes explicit
the
previous properties and examples together with their higher generalizations. The
degree $p$ smooth Deligne cohomology is the $p$-th \v{C}ech
hypercohomology of the truncated sheaf complex
\beqa
0\ \longrightarrow \ \uo_M \ \xrightarrow{\dd\log} \ \Omega_M^1 \
\xrightarrow{ \ \dd \ } \ \Omega_M^2 \ 
\xrightarrow{ \ \dd \ } \ \cdots \ \xrightarrow{ \ \dd \ } \
\Omega_M^p \ ,
\eeqa
where $\uo_M$ is the sheaf of smooth $\uo$-valued functions on
$M$ and $\Omega_M^p$ is the sheaf of differential $p$-forms on
$M$. The degree $p$ Deligne cohomology group $\RH_{\mathfrak{D}}^p(M)$
can be calculated as the cohomology of the total complex of the double
complex with respect to a good open cover
$\fM=\{M_\alpha\}_{\alpha\in I}$ of $M$ given by
\beqa
\xymatrix{
\vdots & \vdots & & \vdots \\
C^2\big(\fM\,;\,\uo_M\big) \ \ar[u]^\delta \ \ar[r]^{\ \dd\log} & \ 
C^2\big(\fM\,;\, \Omega_M^1\big) \ \ar[u]^\delta \ar[r]^{ \ \  \ \  \ \dd }  & \ 
\cdots \ \ar[r]^{\!\!\!\!\!\!\!\!\!\!\!\!\dd } & \ 
C^2\big(\fM\,;\, \Omega_M^p\big) \ \ar[u]^\delta \\ 
C^1\big(\fM\,;\,\uo_M\big) \ \ar[u]^\delta \ \ar[r]^{\ \dd\log} & \ 
C^1\big(\fM\,;\, \Omega_M^1\big) \ \ar[u]^\delta \ar[r]^{  \ \  \ \ \ \dd } & \ 
\cdots \ \ar[r]^{\!\!\!\!\!\!\!\!\!\!\!\!\dd } & \ 
C^1\big(\fM\,;\, \Omega_M^p\big) \ \ar[u]^\delta \\ 
C^0\big(\fM\,;\,\uo_M\big) \ \ar[u]^\delta \ \ar[r]^{\ \dd\log} & \ 
C^0\big(\fM\,;\, \Omega_M^1\big) \ \ar[u]^\delta \ar[r]^{  \ \  \ \ \ \dd } & \ 
\cdots \ \ar[r]^{\!\!\!\!\!\!\!\!\!\!\!\!\dd } & \ 
C^0\big(\fM\,;\, \Omega_M^p\big) \ \ar[u]^\delta
}
\eeqa
which is the quotient of the abelian group of Deligne
$p$-cocycles by the subgroup of Deligne $p$-coboundaries; here
$\delta$ is the usual \v{C}ech coboundary operator, and
$C^p(\fM;\uo_M)$ and $C^p(\fM;\Omega^k_M)$ denote the
\v{C}ech $p$-cochains. Below we describe these groups explicitly in
low degree; contractible $k$-fold intersections of open sets of the cover
$\fM$ are
denoted by
$M_{\alpha_1\dots\alpha_k}:= M_{\alpha_1}\cap\cdots\cap
M_{\alpha_k}$.

A degree~$0$ smooth Deligne class in $C^0(\fM;\uo_M)$ is just a
smooth map $g:M\to\uo$. This case applies to the sigma-model on $M$
of a scalar field $g$ compactified on a circle; the charge group is the
group of winding numbers $c\in\RH^1(M;\IZ)$ of the field $g$ around
$S^1$. In
  particular, $\check\RH{}^1(S^1)$ is the loop group $L\uo$ and if
  $\sigma\sim\sigma+1$ is the coordinate on the circle $M=S^1$, then
  Fourier series expansion gives the explicit decomposition
\beqa
\log g(\sigma)= 2\pi\ii g_0+2\pi\ii c\, \sigma+ \sum_{k\neq0}\,
\frac{g_k}k \, \e^{2\pi\ii k\, \sigma} \ \in \ \bbt \oplus \Gamma_c \oplus V
\ .
\eeqa

A Deligne one-cocycle is a pair
\beqa
(g_{\alpha\beta},A_\alpha) \ \in \
C^1\big(\fM\,;\,\uo_M\big)\oplus C^0\big(\fM\,;\,
\Omega_M^1\big)
\eeqa
satisfying the cocycle conditions
\beqa
g_{\alpha\beta}\, g_{\beta\gamma}\, g_{\gamma\alpha}&=& 1 \qquad
\mbox{on} \quad M_{\alpha\beta\gamma} \ , \nonumber \\[4pt]
A_\alpha-A_\beta &=& \dd \log g_{\alpha\beta} \qquad \mbox{on}
\quad M_{\alpha\beta} \ .
\eeqa
A Deligne one-coboundary defines a gauge transformation and is a pair of the type
\beqa
\big(h \,,\,
\dd\log h\big)
\eeqa
for a
smooth map $h:M\to\uo$. The $\uo$ \v{C}ech
one-cocycle $g_{\alpha\beta}:M_{\alpha\beta}\to
\uo$ determines smooth transition functions on overlaps for a hermitian
line bundle $L\to M$. This cocycle represents the first Chern class $c_1(L)=
[g_{\alpha\beta}]\in \RH^1(M;\uo)\cong \RH^2(M;\IZ)$, where
the canonical isomorphism follows from the exponential sequence
\beqa
0 \ \longrightarrow \ \IZ \ \longrightarrow \ \IR \ \xrightarrow{\exp}
\ \uo \ \longrightarrow \ 1 \ ;
\eeqa
it is 
the obstruction to triviality of the line bundle $L\to M$. The local one-forms
$A_\alpha\in\Omega^1(M_\alpha)$ define a unitary connection
$\nabla=\dd+A$ on $L$. This is the case that arose in Maxwell theory.

A Deligne two-cocycle is a triple
\beqa
(g_{\alpha\beta\gamma},A_{\alpha\beta},B_\alpha) \ \in \
C^2\big(\fM\,;\, \uo_M\big)\oplus
C^1\big(\fM\,;\,\Omega_M^1\big)\oplus
C^0\big(\fM\,;\,\Omega^2_M\big)
\eeqa
satisfying the cocycle conditions
\bea
g_{\alpha\beta\gamma}\,
g^{-1}_{\beta\gamma\delta}\, g_{\gamma\delta\alpha}\,
g^{-1}_{\delta\alpha\beta} &=& 1 \qquad \mbox{on} \quad
M_{\alpha\beta\gamma\delta} \  , \nonumber \\[4pt]
A_{\alpha\beta}+A_{\beta\gamma}+ A_{\gamma\alpha} &=& 
\dd \log g_{\alpha\beta\gamma} \qquad \mbox{on} \quad
M_{\alpha\beta\gamma} \ , \nonumber \\[4pt]
B_\alpha-B_\beta &=& \dd A_{\alpha\beta} \qquad \mbox{on} \quad
M_{\alpha\beta} \ .
\label{2cocycleconds}\eea
A Deligne two-coboundary defines a gauge transformation and is a triple of the type
\beqa
\big(h_{\alpha\beta}\, h_{\beta\gamma}\, h_{\gamma\alpha}\,,\,
\dd\log
h_{\alpha\beta}+a_\alpha-a_\beta\,,\, \dd a_\alpha\big)
\eeqa
for $(h_{\alpha\beta},a_\alpha)\in C^1(\fM;\uo_M) \oplus C^0(\fM;
\Omega_M^1)$.
The $\uo$ \v{C}ech
two-cocycle $g_{\alpha\beta\gamma}:M_{\alpha\beta\gamma}\to \uo$ specifies a hermitian
``transition'' line bundle $L_{\alpha\beta}$ over each overlap
$M_{\alpha\beta}$, an isomorphism
$L_{\alpha\beta}\cong L^*_{\beta\alpha}$, and a trivialization
of the line bundle
$L_{\alpha\beta}\otimes L_{\beta\gamma}\otimes L_{\gamma\alpha}$ on
each triple overlap $M_{\alpha\beta\gamma}$; the pair
$\scrG=(L_{\alpha\beta},g_{\alpha\beta\gamma})$ defines a {gerbe}
on $M$. This cocycle
represents the Dixmier--Douady class $dd(\scrG)= [g_{\alpha\beta\gamma}]\in
\RH^2(M;\uo)= \RH^3(M;\IZ)$; it is the obstruction to triviality of
the gerbe $\scrG\downarrow M$.
The \v{C}ech one-cochain $A_{\alpha\beta}$ defines connection one-forms on each line bundle
$L_{\alpha\beta}\to M_{\alpha\beta}$ such that the section
$g_{\alpha\beta\gamma}$ is covariantly constant with respect to the
induced connection on $L_{\alpha\beta}\otimes L_{\beta\gamma}\otimes
L_{\gamma\alpha}$; it defines a {0-connection} (or
{connective structure}) on the gerbe $\scrG$. The collection of two-forms
$B_\alpha\in\Omega^2(M_\alpha)$ defines a {1-connection} (or
{curving}) on $\scrG$. The pair $(A_{\alpha\beta},B_\alpha)$ defines a
{$2$-connection} on the gerbe $\scrG=(L_{\alpha\beta},g_{\alpha\beta\gamma})$.
The {gauge group} of the gerbe is generated by line bundles
$\ell \to M$ with connection $\nabla=\dd+a$ and curvature
$f_\nabla=\dd a$ with $F([\ell,\nabla]) = \frac1{2\pi
  \ii}\,f_\nabla\in\Omega_\IZ^2(M)$ through the gauge transformations
\beqa
L_{\alpha\beta} \ \longmapsto \ L_{\alpha\beta}\otimes \ell\big|_{M_{\alpha\beta}} \ ,
  \qquad A_{\alpha\beta} \ \longmapsto \ A_{\alpha\beta}+
  a\big|_{M_{\alpha\beta}} \qquad \mbox{and} \qquad B_\alpha \
  \longmapsto \ B_\alpha+f_\nabla \ .
\eeqa
The Deligne cohomology $\RH^2_{\fD}(M)$ applies to the
higher abelian gauge theory of the
Kalb--Ramond $B$-field of superstring theory, with $M$ a
ten-dimensional manifold. See~\cite{Mathai:2005sw} for a description
of the instanton moduli space in the bundle gerbe version of
generalized Maxwell theory.

For $p=3$, a Deligne class is represented by a quadruple
\beqa
(g_{\alpha\beta\gamma\delta},A_{\alpha\beta\gamma},B_{\alpha\beta}, C_\alpha) \ \in \
C^3\big(\fM\,;\, \uo_M\big)\oplus
C^2\big(\fM\,;\,\Omega_M^1\big)\oplus
C^1\big(\fM\,;\,\Omega^2_M\big)\oplus
C^0\big(\fM\,;\,\Omega^3_M\big)
\eeqa
satisfying the equations
\bea
g_{\alpha\beta\gamma\delta}\, g_{\alpha\beta\delta\kappa}\, 
g_{\beta\gamma\delta\kappa}\, g^{-1}_{\alpha\gamma\delta\kappa}\, g^{-1}_{\alpha\beta\gamma\kappa}
 &=& 1 \qquad \mbox{on} \quad
M_{\alpha\beta\gamma\delta\kappa} \  , \nonumber \\[4pt]
A_{\alpha\beta\gamma}+ A_{\alpha\gamma\delta}-A_{\alpha\beta\delta}-A_{\beta\gamma\delta} &=& 
\dd \log g_{\alpha\beta\gamma\delta} \qquad \mbox{on} \quad
M_{\alpha\beta\gamma\delta} \ , \nonumber \\[4pt]
B_{\alpha\beta}+B_{\beta\gamma}+ B_{\gamma\alpha} &=& \dd A_{\alpha\beta\gamma} \qquad \mbox{on} \quad
M_{\alpha\beta\gamma} \ , \nonumber \\[4pt]
C_\alpha-C_\beta&=& \dd B_{\alpha\beta} \qquad \mbox{on} \quad
M_{\alpha\beta} \ . \nonumber 
\eea
Gauge transformations are generated by the Deligne three-coboundaries
\beqa
\big(h_{\alpha\beta\gamma}\, h^{-1}_{\alpha\gamma\delta}\,
h_{\beta\gamma\delta}\, h^{-1}_{\alpha\beta\delta}\,,\, \dd \log
h_{\alpha\beta\gamma} +a_{\alpha\beta}+a_{\beta\gamma}+
a_{\gamma\alpha}\,,\, \dd a_{\alpha\beta}+b_\alpha-b_\beta\,,\, \dd
b_\alpha\big)
\eeqa
for $(h_{\alpha\beta\gamma},a_{\alpha\beta},b_\alpha)\in C^2(\fM;\uo_M)\oplus
C^1(\fM;\Omega_M^1)\oplus
C^0(\fM;\Omega^2_M)$. This cocycle represents a 2-gerbe
$\underline{\scrG}=(\scrG_{\alpha\beta\gamma},g_{\alpha\beta\gamma\delta})$ with
3-connection
$(A_{\alpha\beta\gamma},B_{\alpha\beta},C_\alpha)$~\cite{Stevenson-thesis,Johnson,Brylinski2}.
The Deligne cohomology $\RH^3_{\fD}(M)$ is the one
appropriate to the abelian gauge theory of the three-form $C$-field of
M-theory, with $M$ an 11-dimensional manifold.

\subsubsection*{Holonomy and curvature}

The construction of holonomy and curvature of a Deligne class defines
an isomorphism~\cite{Brylinski} $$ \RH_{\mathfrak{D}}^{p}(M) \ \xrightarrow{ \
  \approx \ } \ \check \RH{}^{p+1} (M) \ . $$

Given a degree~$1$ smooth Deligne class $[(g_{\alpha\beta},A_\alpha)]$
represented by a hermitian line bundle $L\to M$ with unitary
connection $\nabla=\dd+A$, the
curvature is the globally defined two-form given by $F_\nabla=\dd
A_\alpha$ on $M_\alpha$ with $F([L,\nabla]) = \frac1{2\pi
  \ii}\,F_\nabla\in\Omega_\IZ^2(M)$.
By Stokes' theorem, the holonomy of $\nabla$ around any one-cycle
$\gamma\subset M$ is then obtained from the product formula~\cite{Kiyonori:2001aa}
\beqa
\hol_\gamma(A) =\prod_{\alpha\in I} \, \exp\Big(\ii
\int_{\gamma_\alpha}\, A_\alpha\Big) \ \prod_{\alpha,\beta\in I}\,
g_{\alpha\beta}(\gamma_{\alpha\beta}) \ ,
\eeqa
where $\gamma_\alpha\subset M_\alpha$ is a path in a subdivision of
the loop $\gamma$ into segments and $\gamma_{\alpha\beta}=\gamma_\alpha\cap
\gamma_\beta$ is a point in $M_{\alpha\beta}$. This
definition agrees with the general definition of holonomy
(\ref{charhol}) in terms of
differential characters.

Given a degree~$2$ smooth Deligne class
$[(g_{\alpha\beta\gamma},A_{\alpha\beta},B_\alpha)]$ represented by a
gerbe $\scrG\downarrow M$ with 2-connection $(A,B)$, the curvature
of the corresponding differential character is the globally defined
closed three-form $H=H_{(A,B)}$ given by $H=\dd B_\alpha$ on
$M_\alpha$ with $F([\scrG,A,B]) = \frac1{2\pi\ii}\, H\in\Omega_\IZ^3(M)$,
while its characteristic class is the
Dixmier--Douady class $dd(\scrG) \in \RH^3(M;\IZ)$ of the gerbe $\scrG$.
Its holonomy around a two-cycle $\Sigma\subset M$ is obtained
by choosing a triangulation $\{\Sigma_\alpha\}_{\alpha\in I}$ of $\Sigma$ subordinate to the open cover
$\Sigma\cap\fM$. Keeping careful track of orientations, by repeated
application of Stokes' theorem one arrives at the product
formula~\cite{Brylinski,Kiyonori:2001aa}
\beqa
\hol_\Sigma(B)=\prod_{\alpha\in I}\, \exp\Big(\ii\int_{\Sigma_\alpha}\,
B_\alpha\Big) \ \prod_{\alpha,\beta\in I}\,
\exp\Big(\ii\int_{\Sigma_{\alpha\beta}}\, A_{\alpha\beta}\Big) \
\prod_{\alpha,\beta,\gamma\in I}\,
g_{\alpha\beta\gamma}(\Sigma_{\alpha\beta\gamma})
\eeqa
where $\Sigma_{\alpha\beta}$ is the common boundary edge of the surfaces $\Sigma_\alpha$ and
$\Sigma_\beta$, and $\Sigma_{\alpha\beta\gamma}=\Sigma_{\alpha\beta}\cap
\Sigma_{\beta\gamma}\cap\Sigma_{\gamma\alpha}$ are vertices of the
triangulation of $\Sigma$. The coincidence between this expression and the
general formula in terms of differential characters (\ref{charhol}) is shown
explicitly in~\cite{Carey:2002xp}. This construction first appeared in
the context of the Wess--Zumino--Witten model in~\cite{Gawedzki:1987ak}.

For $p=3$, the smooth Deligne class
$[(g_{\alpha\beta\gamma\delta},A_{\alpha\beta\gamma},B_{\alpha\beta},
C_\alpha)]$ of a 2-gerbe $\underline{\scrG}\downarrow\downarrow M$
with 3-connection $(A,B,C)$ has curvature $G$ given by $G=\dd C_\alpha$ on
$M_\alpha$ with $F([\,\underline{\scrG}\,,A,B,C]) = \frac1{2\pi\ii}\, G
\in\Omega_\IZ^4(M)$, and a degree~$4$ characteristic class
$[g_{\alpha\beta\gamma\delta}]\in \RH^4(M;\IZ)$. For a smooth three-cycle
$\Sigma\subset M$, choose a triangulation subordinate to the open
cover $\fM$ used to define the Deligne class; it consists of tetrahedra
$\Sigma_\alpha$, faces $\Sigma_{\alpha\beta}$, edges
$\Sigma_{\alpha\beta\gamma}$, and vertices
$\Sigma_{\alpha\beta\gamma\delta}$. Then the holonomy around $\Sigma$ is given
by~\cite{Johnson}
\beqa
\hol_\Sigma(C)&=&\prod_{\alpha\in I}\, \exp\Big(\ii\int_{\Sigma_\alpha}\,
C_\alpha\Big) \ \prod_{\alpha,\beta\in I}\,
\exp\Big(\ii\int_{\Sigma_{\alpha\beta}}\, B_{\alpha\beta}\Big) \
\prod_{\alpha,\beta,\gamma\in I}\,
\exp\Big(\ii\int_{\Sigma_{\alpha\beta}\gamma}\,
A_{\alpha\beta\gamma}\Big) \\ && \times \ \prod_{\alpha,\beta,\gamma, 
  \delta \in I}\,
g_{\alpha\beta\gamma\delta}(\Sigma_{\alpha\beta\gamma\delta}) \ .
\eeqa

A general Deligne $p$-cocycle is represented by
\beqa
\big(g_{\alpha_1\dots\alpha_{p+1}} , A_{\alpha_1\dots\alpha_p}^1,\dots,
  A^p_\alpha\big) \ \in \ C^p\big(\fM\,;\, \uo_M\big) \ \oplus \
  \bigoplus_{k=1}^p\, 
  C^{p-k}\big( \fM\,;\, \Omega_M^k\big) \ .
\eeqa
To compute its holonomy, triangulate a smooth $p$-cycle $\Sigma\subset
M$ by a $p$-dimensional simplicial complex $\mathfrak{S}(\Sigma)$; the
$k$-simplices of $\mathfrak{S}(\Sigma)$ are denoted $\sigma^k$ for
$k=0,1,\dots,p$. Let $\rho:\mathfrak{S}(\Sigma)\to I $ be the
index map for the triangulation~\cite{Kiyonori:2001aa}. Then the
holonomy around $\Sigma$ is given by~\cite{Johnson}
\beqa
\hol_\Sigma(A^p)= \prod_{k=1}^p \
\prod_{\underline{\sigma}^k\in\mathfrak{S}(\Sigma)} \,
\exp\Big(\ii\int_{\sigma^k}\, A^k_{\rho(\sigma^p)\dots\rho(\sigma^k)}
\Big) \ \prod_{\underline{\sigma}^0 \in\mathfrak{S}(\Sigma)} \,
g_{\rho(\sigma^p)\dots\rho(\sigma^1) \rho(\sigma^0)}(\sigma^0) \ ,
\eeqa
where the products are taken over \emph{flags} of simplices
$\underline{\sigma}^k:= \big\{(\sigma^k,\sigma^{k+1},\dots,\sigma^p) \
\big| \ \sigma^k\subset\sigma^{k+1}\subset\cdots \subset
\sigma^p\big\}$ for $k=0,1,\dots,p$.

\subsection{Configuration space and gauge transformations\label{Confspace}}

As in Maxwell theory (or more generally in Yang--Mills theory),
locality forces us to work with \emph{gauge potentials} $A$, rather
than with isomorphism classes of gauge fields $F$. The most convenient
mathematical framework for dealing with local quantum field theory is
through categorification, wherein we work directly at the level of
cochain complexes; this enables one to build a quantum field theory by
``gluing'' elementary constituents together. In particular, following the treatment of the
configuration space of semi-classical Maxwell theory, we seek a suitable groupoid of
higher abelian gauge fields which plays the role of the
semi-classical configuration space. This framework also leads to a quantum
definition of charge.

\subsubsection*{Quantum groupoid of fields}

We seek a groupoid
$\check\scrH{}^p(M)$ whose objects are gauge potentials $\check A$,
whose morphisms are ``gauge transformations'', and whose set of
isomorphism classes are gauge equivalence classes, so that
\beqa
\pi_0{\check\scrH{}^p(M)} \cong\check\RH{}^p(M) \ .
\eeqa
In
particular, every object has a group of automorphisms $\RH^{p-2}(M;\bbt)$. The particular
sort of groupoid that we want is an example of an \emph{action
  groupoid}, obtained by the
action of a gauge group on a set of objects: Given an action $G\times X\to X$ of a group $G$ on a set
$X$, there is a groupoid $G\rightrightarrows X$ whose objects are the points
$x\in X$ and whose morphisms are the group actions $x\xrightarrow{ \ g
  \ } x'=g\triangleright x$ for $g\in G$ and $x\in X$. When $X$ is
e.g. a smooth manifold, this groupoid
is sometimes denoted $[X/G]$ and called a \emph{quotient stack}; it is
naturally related to the orbifold $X/G$. 
The desired
properties of the category $\check\scrH{}^p(M)$ can be described as
follows.

Firstly, the gauge transformation law on corresponding
characters $\chi_{\check A}$ is given on $p$-chains $\Sigma$ with
$\partial\Sigma\neq0$ as
\beqa
\chi_{\check A}(\Sigma) \ \longmapsto \ \check\chi(\partial\Sigma)\,
\chi_{\check A}(\Sigma) \qquad \mbox{with} \quad
\check\chi(\partial\Sigma) = \exp\Big(\im \int_\Sigma\, F_{\check\chi}
\Big) \ .
\eeqa
Hence gauge transformations are given by differential characters
$\check\chi\in\check\RH{}^{p-1}(M)$; the action of $[\check
C] \in\check\RH{}^{p-1}(M)$ on a gauge potential $\check A$ is given by
\beq
g_{[\check C]}\triangleright \check A = \check A+ F\big([\check C] \big) \ .
\label{gcheckCcheckA}\eeq
The groupoid of morphisms is thus
the gauge group $\check\RH{}^{p-1}(M)$.

Secondly, the connected components $\check Z_c^p(M)$ of the space of
objects $\check Z^p(M)$ are labelled
by the charge group $c\in \RH^p(M;\IZ)$, each taken as a torsor for $\Omega^{p-1}(M)$. From
(\ref{gcheckCcheckA}) it follows that the flat characters in
$\RH^{p-2}(M;\bbt)$ act trivially on the space of gauge fields
$\check\scrH{}^p(M)$, and hence the group of automorphisms of any object
$\check A$ is
\beqa
\Aut_{\check\scrH{}^p(M)}(\check A)=\RH^{p-2}(M;\bbt) \ .
\eeqa
On $M=\IR\times N$, an automorphism $\alpha\in\RH^{p-2}(N;\bbt)$ acts
on ``wavefunctionals'' $\psi(\check A)$ in the quantum Hilbert space
$\hil(N)$ of the gauge theory via elements of the quantum electric
charge group $Q\in\RH^{n-p+2}(N;\IZ)$ as $\alpha\triangleright\psi(\check
A)=\e^{2\pi\ii\langle\alpha,Q\rangle}\, \psi(\check A)$; the precise
definition of the Hilbert space $\hil(N)$ is the subject of \S\ref{QuantHAGT} In each topological sector
$c\in\RH^p(M;\IZ)$ we choose a field strength $F_c$. Then an arbitrary
field strength in this sector can be written as $F=F_c+\dd a$ for
$a\in\Omega^{p-1}(M)$; these are the oscillator modes. Gauge transformations act through $a\mapsto
a+\omega$ with $\omega\in\Omega_\IZ^{p-1}(M)$; if
$\omega=\dd\varepsilon$ is exact then
this is called a \emph{small gauge transformation} and there are also
gauge transformations of the gauge transformations given by
$\varepsilon\mapsto\varepsilon + \eta$ with
$\eta\in\Omega^{p-2}_\IZ(M)$.

The requisite quantum groupoid then has connected components given by the action groupoids $\check\scrH{}_c^p(M)=
\big[\check Z{}^p_c(M)\, \big/\, \check\RH{}^p(M)\big]$.

\subsubsection*{Category of differential cocycles}

An explicit model for the category $\check\scrH{}^p(M)$ is described in~\cite{Hopkins:2002rd}. In this formulation the space of
objects $\check Z{}^p(M)$ of
$\check\scrH{}^p(M)$ are \emph{cocycles}; these are triples
\beqa
(c,h,\omega) \ \in \ \check C{}^p(M):= C^p(M;\IZ)\times C^{p-1}(M;\IR) \times
\Omega^p(M)
\eeqa
satisfying the cocycle condition
\beqa
\dd_{\check\RH}(c,h,\omega) := (\delta c,\omega-c-\delta h, \dd\omega) = (0,0,0)
\eeqa
with $\dd_{\check\RH}^2=0$. Here we think of $c$ as the characteristic class
$c(\chi)$ of
a differential character~$\chi$, $\omega$ as its field
strength $F_\chi$, and $h$ as the ``monodromy'' $\log\chi$. Note that the
connected components of the space of objects of the category are indeed
labelled by the charge group $\RH^p(M;\IZ)$. Two objects
are connected
by a morphism $(c_1,h_1,\omega_1)\longrightarrow (c_2,h_2,\omega_2)$ if and only if
\beqa
(c_1,h_1,\omega_1)=(c_2,h_2,\omega_2)+\dd_{\check\RH} (b,k,0)
\eeqa
for some $(b,k)\in C^{p-1}(M;\IZ)\times C^{p-2}(M;\IR)$ subject to the
equivalence relation
\beqa
(b,k,0)\sim (b,k,0)-\dd_{\check\RH} (b',k',0) \qquad \mbox{for} \quad (b',k'\,)
\in C^{p-2}(M;\IZ)\times C^{p-3}(M;\IR) \ .
\eeqa

Then the set of isomorphism classes of objects in the category
$\check\scrH{}^p(M)$ is isomorphic to the Cheeger--Simons differential cohomology
group $\check\RH{}^p(M)$. Moreover, the automorphism group of the trivial
object is given by
\beqa
\Aut_{\check\scrH{}^p(M)}(0,0,0)\cong \RH^{p-2}(M;\bbt) \ ,
\eeqa
and whence the flat characters in $\RH^{p-2}(M;\bbt)$ act trivially on
the configuration space of abelian gauge fields
$\check\scrH{}^p(M)$. However, in this model there is no groupoid
decomposition $\check Z{}^p(M)=\bigsqcup_{c\in\RH^p(M;\IZ)}\, \check
Z_c^p(M)$ into topological sectors.

For $p=0$, $\check\scrH{}^0(M)$ is the category of maps $M\to\IZ$ and
identity morphisms.

For $p=1$, $\check\scrH{}^1(M)$ is the category of
smooth maps $M\to\uo$ and identity morphisms; to each object
$(c,h,\omega)\in C^1(M;\IZ)\times C^0(M;\IR)\times \Omega^1(M)$ we
associate its differential character.

For $p=2$, $\check\scrH{}^2(M)$ is the monoidal category of
$\uo$-bundles with connection, with groupoid structure of tensor
product. An object $\check A=(c,h,\omega)\in \check\scrH{}^2(M)$ determines a
$\uo$-bundle with connection in the following way: For each open set
$U\subset M$, a principal homogeneous space $\Gamma(U)$ for the group of
isomorphism classes in the group $\check\scrH{}^1(U)$ of smooth maps
$U\to\uo$ is given by the set of trivialising ``scalar potentials''
$\check\sigma= (b,k,\theta)\in C^1(U;\IZ)\times C^0(U;\IR)\times
\Omega^1(U)$ with $\dd_{\check\RH}\check\sigma=\check A$ modulo the
equivalence relation $\check\sigma\sim
\check\sigma+\dd_{\check\RH}\check\tau$ for $\check\tau\in
C^1(U;\IZ)\times \{0\} \times
\Omega^0(U)$; any two potentials in $\Gamma(U)$ differ by an object of
the category $\check\scrH{}^1(M)$. The connection
$\nabla:\Gamma(U)\to\Omega^1(U)$ then sends $\check\sigma\mapsto\theta$.

\subsubsection*{Quantum 2-groupoid of fields}

The model of differential cocycles illustrates that the construction of the category $\check\scrH{}^p(M)$ may be iterated to give higher categories as well,
by the well-known process of categorification, which works for any
cochain complex: We replace the equivalence relation on morphisms
with 2-morphisms, and so on. These higher morphisms are necessary to
obtain a fully local gauge theory with local gauge symmetries as well;
they typically appear in the
quantization of reducible gauge systems with large symmetries, for example in
Batalin--Vilkovisky quantization the off-shell higher gauge symmetries
require introduction of both ghost fields and ghosts-for-ghosts. We
recall here
the pertinent category theory definitions; see e.g.~\cite{Baez} for
further details and references.

Here we shall only consider
the first member of this higher categorical hierarchy. A
\emph{2-category} consists of a set of objects $\check A$ (gauge potentials), a set of morphisms
$g:\check A\longrightarrow \check B$ from a source potential $\check
A$ to a target potential $\check B$ (gauge transformations), and a
set of 2-morphisms $\alpha:g\Longrightarrow h$ from a source gauge
transformation $g:\check A\longrightarrow \check B$ to a target gauge
transformation $h:\check A\longrightarrow \check B$ (gauge-for-gauge transformations). The composition of gauge transformations $g:\check B\longrightarrow
\check C$ and $h:\check A\longrightarrow \check B$ is written $g\circ h: \check A\longrightarrow
\check C$; it defines an associative multiplication with unit the identity
gauge transformation $\unit_{\check A}:\check
A\longrightarrow \check A$ for all gauge potentials $\check A$. For the gauge-for-gauge transformations there are two
types of composition. The ``vertical'' composition of 2-morphisms
$\alpha:g\Longrightarrow g'$ and $\alpha':g'\Longrightarrow
g^{\prime\prime}$ between gauge transformations $g,g',g^{\prime\prime}
:\check A\longrightarrow
\check B$ with the same source and target is denoted $\alpha\cdot\alpha':g\Longrightarrow
g^{\prime\prime}$; it defines an associative multiplication with unit
the identity vertical gauge-for-gauge transformation
$\unit_g:g\Longrightarrow g$ for all gauge transformations $g:\check A\longrightarrow
\check B$. The ``horizontal'' composition of 2-morphisms
$\alpha_1:g_1\Longrightarrow g_1'$ and $\alpha_2:g_2\Longrightarrow
g_2'$ between gauge transformations $g_1,g_1':\check B\longrightarrow
\check C$ and $g_2,g_2':\check A\longrightarrow \check B$ is written
$\alpha_1\circ\alpha_2: g_1\circ g_2\Longrightarrow g_1'\circ g_2'$, where
$g_1\circ g_2,g_1'\circ g_2':\check A\longrightarrow \check C$; again it
gives an associative multiplication with unit the identity horizontal
gauge-for-gauge transformation $\unit_{\unit_{\check A}}:\unit_{\check
  A}\Longrightarrow\unit_{\check A}$ for all gauge potentials $\check
A$. The two compositions of 2-morphisms obey the interchange law
\beq
(\alpha_1'\cdot\alpha_1)\circ(\alpha_2'\cdot\alpha_2)=
(\alpha_1'\circ\alpha_2')\cdot (\alpha_1\circ\alpha_2) \ .
\label{interchangelaw}\eeq

A \emph{2-groupoid} structure on this 2-category is obtained by
requiring that every gauge transformation $g:\check A\longrightarrow \check
B$ has an inverse $g^{-1}: \check B\longrightarrow \check A$ such that
$g^{-1}\circ g=\unit_{\check A}$ and $g\circ g^{-1}=\unit_{\check B}$, while
every
gauge-for-gauge transformation $\alpha:g\Longrightarrow h$ has a vertical inverse
$\alpha_v^{-1}:h\Longrightarrow g$ such that $\alpha_v^{-1}\cdot
\alpha=\unit_g$ and $\alpha\cdot \alpha_v^{-1}=\unit_h$, and a
horizontal inverse $\alpha_h^{-1}:g^{-1} \Longrightarrow h^{-1}$ such that
$\alpha_h^{-1}\circ \alpha=\unit_{\unit_{\check A}}$ and
$\alpha\circ\alpha_h^{-1}= \unit_{\unit_{\check B}}$.

The corresponding set of 2-isomorphism classes now has the structure
of a \emph{2-group} $\scrG$, which is a 2-groupoid with one object;
the morphisms $g$ form a group $G$ under composition whose unit
element is the identity morphism, while the 2-morphisms
$\alpha:g\Longrightarrow g'$ for $g,g' \in G$ form a group under horizontal
composition and can be composed vertically with the two compositions
tied together via the interchange law
(\ref{interchangelaw}). Equivalently, a 2-group is a groupoid $\scrG$
equipped with a monoidal structure that obeys the usual
group axioms; in many cases of interest we require that the axioms
(and in general all associativity conditions on the category) hold
only in a ``weak'' sense, i.e. up to natural isomorphism imposed
e.g. by equivalence relations. 

An
important classification result states that 2-groups are the same things as \emph{crossed
  modules}~\cite{Baez}, which consist of pairs of groups 
$G,H$ together with an
action of $G$ on $H$ by automorphisms and a group homomorphism $\ttt:H\to G$ which is
$G$-equivariant, i.e. 
\beqa
\ttt(g\triangleright h)=g\, \ttt(h)\, g^{-1}
\eeqa
for all $g\in G$ and $h\in H$, and which satisfies the Peiffer
identity
\beqa
\ttt(h)\triangleright h' = h\, h'\, h^{-1}
\eeqa
for all $h,h'\in H$. Given a 2-group $\scrG$, we construct a crossed
module by taking $G$ to be the group of morphisms of $\scrG$,
$H$ as the group of 2-morphisms of $\scrG$ whose source is the
identity morphism, the homomorphism $\ttt:H\to G$ is defined by sending a
2-morphism to its target morphism, and the $G$-action on $H$ is defined
by $g\triangleright h =\unit_g\circ h\circ \unit_{g^{-1}}$. In this classification
the 2-morphisms of $\scrG$ form the crossed-product group $G\ltimes H$ under
horizontal composition: 2-morphisms $g\Longrightarrow
g'$ are equivalent to pairs $(g,h)\in G\times H$ with $g'=\ttt(h)\,
g$. Moreover, the vertical composition of $g\Longrightarrow \tilde
g=\ttt(h)\, g$ and $g'\Longrightarrow \tilde g'=\ttt(h'\,)\, g'$ is given by
\beqa
(g,h)\cdot(g',h'\,) = (g,h'\, h)
\eeqa
for composable 2-morphisms, i.e. when $g'=\ttt(h)\, g$ so that $\tilde
g'=\ttt(h'\,)\, \ttt(h)\, g=\ttt(h'\, h)\, g$, while horizontal composition
can be represented as
\beqa
(g,h)\circ(g',h'\,)= \big(g\, g'\,,\, h\, (g\triangleright h'\,)\big) \ .
\eeqa
In the ``weak'' case one replaces the homomorphism $\ttt:H\to G$ with
an element $[a]$ of the group cohomology $\RH^3(G,H)$ which comes from
the associator isomorphism $a$ on $\scrG\times\scrG\times \scrG$~\cite{Baez}. 

For example, define the \emph{shifted} group ${\rm b}\uo$ as the Lie
2-group where $G=\unit$ is the trivial group, $H=\uo$, and $\ttt$ is the trivial
homomorphism. Then a principal ${\rm b}\uo$-2-bundle with
2-connection, i.e. an object in the 2-category of $\uo$-bundles with connection, is a gerbe with 2-connection~\cite{Baez1}. On a trivial
gerbe, a 2-connection is a globally defined two-form
$B\in\Omega^2(M)$; in this case the holonomies of paths, regarded
as morphisms between points $x\in M$ in the path 2-groupoid of the smooth
manifold $M$, are all trivial while the 2-holonomies over
two-chains $\Sigma\subset M$ with $\partial\Sigma\neq0$, regarded as 2-morphisms in the path
2-groupoid, give elements $\exp\big(\ii\int_\Sigma\, B\big)\in\uo$.

\subsubsection*{Currents}

In the non-vacuum theory, i.e. when there are currents
present, the description of the configuration space changes somewhat. Consider a differential cocycle
\beqa
\check j= (c,h,\omega) \ \in \ \check\scrH{}^p(M) \ ,
\eeqa
so that $c=c(\check j)$, $j:=\omega=F(\check j)$, and $h$ is the
``holonomy'' of $\check j$. A \emph{trivialization} of $\check j$ is a
\emph{differential cochain}
\beqa
\check A \ \in \ \check C{}^{p-1}(M)= C^{p-1}(M;\IZ)\times C^{p-2}(M;\IR)\times
\Omega^{p-1}(M) \qquad \mbox{with} \quad \dd_{\check\RH}\check A=\check j \ .
\eeqa
Associated to $\check A$ is a differential form $A\in\Omega^{p-1}(M)$
such that $\dd A=F(\check j)=j\in \Omega_{\rm cl}^p(M)$.

Starting from
a current $j$, we consider a \emph{refinement} of $j$ to a
differential cocycle $\check j$ such that $F(\check j)=j$, and
regard the configuration space of higher abelian gauge fields as
the set of trivializations of $[\check j]\in\check\RH{}^p(M)$. The
set of all such trivializations is a torsor for the group
$\check\RH{}^p(M)$, and the group of gauge transformations is
isomorphic to $\check\RH{}^{p-1}(M)$.

\subsection{Generalized abelian gauge theory\label{Gendiffcoh}}

A \emph{generalized differential cohomology theory} $\check\E{}^\bullet$ is a
geometric refinement of a generalized cohomology theory $\E^\bullet$
on the category of smooth manifolds. For any manifold $M$, it
completes the pullback square
\beq
\xymatrix{
\check\E{}^\bullet(M) \ \ar[r] \ar[d] & \ \Omega_{\rm
  cl}(M;E^\bullet)^\bullet \ar[d]^{[-]_{\rm dR}} \\
\E^\bullet(M) \ \ar[r]_{\!\!\!\!\!\!\varphi} & \ \RH(M;E^\bullet)^\bullet
}
\label{Epullback}\eeq
where $E^\bullet := \E^\bullet(\pt)\otimes_\IZ\IR$ is a graded real
vector space, the image of the morphism $\varphi$ is a full lattice
while its kernel is the torsion subgroup $\ker\varphi= \Tor\,
\E^\bullet(M)$ (in particular $\varphi\otimes\IR$ is a group isomorphism), and the grading on ordinary cohomology is given by the
total degree
\beqa
\RH(M;E^\bullet)^d := \bigoplus_{p+q=d}\, \RH^p(M;E^q) \ .
\eeqa

\begin{theorem}
Differential cohomology theories exist for any generalized cohomology theory.
\label{HStheorem}\end{theorem}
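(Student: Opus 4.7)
The plan is to give a Hopkins--Singer style construction, realizing $\check\E{}^\bullet(M)$ as the homotopy pullback displayed in (\ref{Epullback}). Represent the generalized cohomology theory $\E^\bullet$ by an $\Omega$-spectrum $\{E_n,\epsilon_n\}$ so that $\E^n(M)\cong[M,E_n]$. Use the natural transformation $\varphi$ in (\ref{Epullback}) to choose, for each $n$, a \emph{fundamental cocycle} $\iota_n\in Z^n(E_n;E^\bullet)$ representing the image $\varphi(u_n)$ of the tautological class $u_n\in\E^n(E_n)$ in ordinary cohomology with coefficients in the graded vector space $E^\bullet$. The structure maps $\epsilon_n$ allow the $\iota_n$ to be chosen compatibly, up to boundaries, with the suspension isomorphisms.

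The construction then mimics the Cheeger--Simons cocycle model of \S\ref{Confspace}: define the group of \emph{differential $n$-cocycles} on $M$ to be the set of triples
\[
\check Z{}^n(M)=\big\{(c,h,\omega) \ \big| \ c\in Z^n(M;\E^\bullet),\ h\in C^{n-1}(M;E^\bullet),\ \omega\in\Omega_{\rm cl}^n(M;E^\bullet)\big\}
\]
subject to $\delta h=\omega-c^*\iota_n$, where $c$ is modelled by a map $M\to E_n$ and $c^*\iota_n\in Z^n(M;E^\bullet)$ is its pullback. Declare $(c_0,h_0,\omega_0)\sim(c_1,h_1,\omega_1)$ when they extend to a differential cocycle on $[0,1]\times M$ interpolating between them, and set $\check\E{}^n(M):=\check Z{}^n(M)/\!\sim$. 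The differential $\dd_{\check\RH}$ of \S\ref{Confspace} is the prototype, and contravariance in $M$ is manifest from pullback of forms, cochains, and maps.

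From the construction one reads off the two surjections in (\ref{Epullback}): the field-strength map $(c,h,\omega)\mapsto\omega$ lands in closed $E^\bullet$-valued forms, while $(c,h,\omega)\mapsto[c]\in\E^n(M)$ provides the characteristic class. Their compatibility under $[-]_{\rm dR}$ and $\varphi$ is forced by the cocycle equation $\delta h=\omega-c^*\iota_n$, which is precisely the assertion that $[\omega]_{\rm dR}=\varphi([c])$ in $\RH^n(M;E^\bullet)$. This yields the pullback square (\ref{Epullback}); the kernel of the field-strength map is computed by varying $c$ within a homotopy class while keeping $\omega=0$, giving the flat subgroup $\E^{n-1}(M;\IR/\IZ)$, whereas the kernel of the characteristic class map is $\Omega^{n-1}(M;E^\bullet)/\Omega_{\E}^{n-1}(M;E^\bullet)$, so that the resulting groups fit into the generalized analogues of the exact sequences (\ref{CSexactseqs}).

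The principal obstacle is not in the cocycle calculus but in showing that the construction is well-defined and natural in a genuinely spectrum-level sense: one must choose the family $\{\iota_n\}$ cofinally compatible with the structure maps $\epsilon_n$ (so that different choices yield canonically isomorphic functors), and verify that the equivalence relation from thin homotopies gives an abelian group independent of auxiliary data. This is handled by a cellular/obstruction-theoretic argument using that $\varphi\otimes\IR$ is an isomorphism, which guarantees that any two admissible $\iota_n$ differ by a coboundary, hence induce naturally equivalent differential extensions. Once this naturality is established, the construction descends to all smooth manifolds and defines $\check\E{}^\bullet$ as a contravariant functor completing (\ref{Epullback}), which is the content of the theorem.
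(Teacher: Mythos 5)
Your proposal is correct and follows essentially the same route as the paper, which simply invokes the Hopkins--Singer ``differential function space'' construction: triples $(c,h,\omega)$ with $c$ a map to the classifying space, $h$ a cochain, and $\omega$ a closed form satisfying $\delta h=\omega-c^*\iota$ for a fundamental cocycle $\iota$, taken up to homotopy. Your additional remarks on compatible choices of the $\iota_n$ and independence of auxiliary data are exactly the technical points deferred to \cite{Hopkins:2002rd} and \cite{BunkeSchick2009}.
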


The proof of Thm.~\ref{HStheorem} can be found
in~\cite{Hopkins:2002rd} and it consists in replacing differential
cocycles by certain maps to the corresponding classifying spaces
$B\E$, the collection of which are called ``differential function
spaces'': One now considers triples $(c,h,\omega)$ where $c:M\to B\E$
is a smooth map,
$h\in C^\bullet(M)$ is a cochain on $M$, and $\omega\in\Omega_{\rm
  cl}^\bullet(M)$ is a closed differential form on $M$ such that $\delta h=\omega-c^*i$ for some cocycle $i\in
Z^\bullet(B\E)$, together with homotopy equivalence relations. Existence and
uniqueness of generalized differential cohomology theories is also
investigated by~\cite{BunkeSchick2007,BunkeSchick2009}. The groups
$\check\E{}^d(M)$ have analogous properties to those of the
Cheeger--Simons groups, including:
\begin{itemize}
\item[(i)] The connected components of $\check\E{}^d(M)$ are labelled by
  the charge group $\pi_0\check\E{}^d(M)=\E^d(M)$.
\item[(ii)] There is a field strength map $F:\check\E{}^d(M)\to
  \Omega_{\rm cl}(M;E^\bullet)^d$.
\item[(iii)] There is a torus $\torus_{\E}^d(M)=\E^{d-1}(M;\torus)$ of
  flat fields in $\check\E{}^d(M)$, where $\E$-cohomology with
  coefficients in $\IR/\IZ$ is constructed using stable homotopy
  theory in terms of a spectrum in such a way that it fits into natural long exact sequences
\beqa
\cdots \ \longrightarrow \ \E^{d-1}(M) \ \longrightarrow \ \E^{d-1}(M)\otimes\IR \ \longrightarrow \
\E^{d-1}(M;\torus) \ \longrightarrow \ \E^d(M) \ \longrightarrow \
\cdots \ .
\eeqa
See~\cite{BunkeSchick2009} and~\cite[\S2.3]{Bunke:2010mq} for
conditions under which there is an isomorphism $\torus_{\E}^d(M)\cong
\ker(F)$ of cohomology theories, and~\cite{Hopkins:2002rd} for an
explicit realization of the isomorphism.
\end{itemize}

Most of what we have said before concerning (higher) abelian gauge theories
have analogs in this more general setting. A \emph{generalized abelian
  gauge theory} is determined by a multiplicative
generalized cohomology theory $\E^\bullet$ with invertible closed
differential forms
\beq
\omega_M=\sqrt{F\big([\ort_{\E}(M)]\big)}
\label{squareroot}\eeq
depending
functorially and locally on $M$ that normalize the morphism $\varphi$
in (\ref{Epullback}) such that the intersection pairing
$\E^\bullet(M)\otimes\E^\bullet(M)\to \IZ$ becomes compatible with the integration of
curvatures. Here $[\ort_{\E}(M)]\in\check\E{}^\bullet(M)$ is a smooth
$\E$-orientation on $M$, and we use invertibility to regard cup products with the cohomology class of $F\big([\ort_{\E}(M)]\big)$ as an invertible linear operator on $\RH(M,E^\bullet)^\bullet$; if the cohomology is finitely-generated, then the square root (\ref{squareroot}) can be defined using the usual Jordan normal form. A gauge potential $\check A\in\check
C{}^d(M)$ is a non-flat trivialization of a current $\check
j\in\check Z{}^{d+1}(M)$ for some $d\in\IZ$. If $\check j=0$, then the potential has
a class $[\check
A]\in\check \E{}^d(M)$; we will mostly consider this source-free case
in this article, as the gauge
theory in this instance involves only free fields and so its
quantization can be carried out in a rigorous way, e.g. along the
lines discussed in~\cite{Freed:2006ya,Freed:2006yc}. The gauge field $F=F([\check A])$ is a differential form of total
degree $d$ on spacetime, i.e. an element
\beqa
F \ \in \ \Omega(M;E^\bullet)^d=\bigoplus_{k\in\IZ}\,
\Omega^k(M;E^{d-k}) \ .
\eeqa

In this paper we are primarily interested in two particular physical
applications of this general construction.
\begin{itemize}
\item When $\E^\bullet=\RH^\bullet$ is ordinary cohomology, this
  construction gives the differential cohomology $\check\RH{}^\bullet$ considered in
  this section. In this case $H^\bullet=\IR$ and $\varphi$ is the map $\RH^\bullet(M;\IZ)\hookrightarrow
  \RH^\bullet(M;\IR)$ induced by the inclusion $\IZ\hookrightarrow
  \IR$ of abelian groups with $F\big([\ort_{\RH}(M)]\big)=1$. This is the model that is appropriate to
  higher abelian gauge theory with extended $p-1$-brane charges,
  e.g.~the fluxes of supergravity.
\item When $\E^\bullet= {\rm R}^{-1}$ is the version of connective
  KO-theory defined by its truncated Postnikov spectrum of real vector
  spaces at degree four, which sits
inside an exact sequence
\beqa
0 \ \longrightarrow \ \RH^3(M;\IZ) \ \longrightarrow \ {\rm
R}^{-1}(M) \ \longrightarrow \ {\begin{matrix} \RH^0(M;\IZ_2)\\ \oplus
\\ \RH^1(M;\IZ_2) \end{matrix}} \ \longrightarrow \ 0 \ ,
\eeqa
the corresponding smooth refinement $\check{\rm R}{}^{-1}$ can be
modelled by a 2-groupoid
consisting of invertible elements of a symmetric monoidal 2-category
whose objects are $\IC$-algebras, morphisms are $\IZ_2$-graded algebra
bimodules, 2-morphisms are intertwiners between bimodules, and
monoidal structure provided by tensor product of
$\IC$-algebras. This model consistently reconciles the target space and
worldsheet field theories of the $B$-field; the three-form $H$-flux in Type~II
  superstring theory is
Dirac quantized by a ``differential $B$-field'' which is an object of
$\check{\rm R}{}^{-1}(M)$, whose charge group classifies the
twistings of the complex
K-theory of orbifolds.
 See~\cite{DFM,DFM2} for details.
\item When $\E^\bullet=\K^\bullet$ is complex K-theory, this construction
  gives the differential K-theory $\check\K{}^\bullet$ of vector bundles
  with connection on $M$. Here $K^\bullet=\IR(u)$ with $u^{-1}$ the
  Bott generator of degree $\deg(u^{-1})=-2$, and $\varphi$ is the
  modified Chern character from K-theory to real cohomology with
  normalising differential form
  $F\big([\ort_{\K}(M)]\big) = \widehat{A}(M)$ the Atiyah--Hirzebruch class
  of the tangent bundle of $M$. This is
  the model that is appropriate to Ramond--Ramond gauge theory with
  D-brane charges in Type~II superstring theory. In this case the $p$-forms
  $F$ of \S\ref{GAGT}
correspond to Ramond--Ramond fields and
the submanifolds $W_e$ to the worldvolumes of D-branes; now the charges $q_e$ are classes in the complex K-theory
group $\K^0(W_e)$ and their pushforwards under the embedding $W_e\hookrightarrow M$
gives the Ramond--Ramond charge $[j_e]$ of a D-brane
wrapping $W_e$. These considerations are the subject
of the next section.
\end{itemize}

\bigskip

\section{Ramond--Ramond fields and differential K-theory\label{RRfields}}

\subsection{D-branes and K-cycles\label{DBranes}}

We begin with a brief mathematical introduction to D-branes in Type~II
superstring theory;
see~\cite{Szabo:2008hx} for further details and references. A D-brane
is a suitable boundary condition for the Euler--Lagrange equations in
a two-dimensional superconformal field theory on an open oriented
Riemann surface $\Sigma$. It is realized as a submanifold $W\subset M$
of spacetime onto which ``open strings attach''. Topologically, open
strings are the relative maps
\beqa
(\Sigma,\partial\Sigma) \ \longrightarrow \ (M,W) \ .
\eeqa
Compatibility with superconformal invariance constrains the allowed
``worldvolumes'' $W$, e.g. in the absence of $H$-flux, 
Freed--Witten anomaly cancellation requires that $W$ be a spin$^c$
manifold (and hence K-oriented).

D-branes actually have more
structure: The worldvolume $W$ carries a complex
``Chan--Paton vector bundle'' $E$ with connection $\nabla$; here the
rank of $E$ is the ``number of D-branes wrapping'' $W\subset M$. Hence
the ``charge'' of a D-brane may be naturally considered as the complex
K-theory class $[E]\in\K^0(W)$.

However, in analogy with the charges in higher abelian gauge
theory, a homology theory is more natural in the description of
D-brane charges (see e.g.~\cite{Szabo:2002jv}); here we shall use the Baum--Douglas geometric
formulation of
K-homology~\cite{BaumDouglas}, which encodes important physical
aspects of D-branes in Type~I and
Type~II string theory~\cite{Reis:2005pp,Szabo:2008hx,Reis:2006th}. Most importantly, D-branes
are dynamical objects that ``interact with'' or ``couple to'' the
Ramond--Ramond fields of supergravity; this requires considerations
from differential K-theory which we explain later on. For the
moment, we briefly highlight a few of the salient features of this topological
description.

Let $M$ be a ten-dimensional spin manifold; a choice of spin structure
enables us to introduce fermions. Sometimes the manifold $M$ can also refer only to a
subspace of spacetime, for example when one considers string
compactifications; in that case $m=\dim(M)<10$. We suppose that there is no
background $H$-flux, which means that we can use untwisted cohomology
theories.

\begin{definition}
A \emph{D-brane} in $M$ is a Baum--Douglas K-cycle $(W,E,f)$, where
$f:W\hookrightarrow M$ is a closed spin$^c$ submanifold called the
\emph{worldvolume}, and $E\to W$ is a complex vector bundle with
connection called the \emph{Chan--Paton gauge bundle}.
\end{definition}

Note that the Chan--Paton bundle defines a \emph{stable} element of
$\K^0(W)$. The collection of D-branes described in this way forms an
additive category under disjoint union of K-cycles. The quotient by Baum--Douglas
``gauge equivalence'' is isomorphic to the analytic K-homology of
$M$; the isomorphism classes of K-cycles generate the geometric
K-homology $\K_\bullet(M)$. In this way D-branes naturally provide K-homology classes on $M$
which are dual to K-theory classes $f_![E]\in\K^d(M)$, where $f_!$ is
the Gysin map in K-theory and $d$ is the codimension of $W$ in
$M$. There is a natural $\IZ_2$-grading on K-cycles given by the
parity of the dimension of the worldvolume; the K-cycle $(W,E,f)$ is odd in Type~IIA
string theory and even in Type~IIB string theory.

Here we have used the usual physical intuition of a D-brane as an embedded
submanifold $S\subset M$ of spacetime. However, not all D-branes (regarded as consistent
boundary conditions in the underlying boundary conformal field theory)
admit such a geometric description. When they do we say that the
D-brane is ``representable''. The description of D-branes in terms of
Baum--Douglas K-cycles for K-homology naturally requires
non-representable branes~\cite{Reis:2005pp} with arbitrary maps
$f:W\to M$. A D-brane $(W,E,f)$ is said to \emph{wrap}
$S\subset M$ if $f(W)\subset S$, and to \emph{fill} $S$ if
$f(W)=S$. 

As a consequence of the gauge equivalence relation of vector bundle modification,
together with the K-theory Thom isomorphism,
the class of any D-brane on $W$ may be expressed in terms of virtual
K-cycles on $M$ through
\beqa
[M,\cals_E^+,\Id_M]-[M,\cals_E^-,\Id_M]= \pm\, [W,E,f] \ ,
\eeqa
where $\cals_E^\pm \to M$ are twisted spinor bundles. This is the
K-homology version of the \emph{Atiyah--Bott--Shapiro construction}; the
reduction of classes from the left-hand side to the right-hand side of this
equation represents the standard Sen--Witten construction of D-branes through \emph{tachyon condensation} on a brane-antibrane system
wrapping the whole spacetime $M$~\cite{Witten:1998cd,Olsen:1999xx}. More generally, for a given worldvolume $S$ the Sen--Witten
construction naturally assigns D-brane charges to classes of wrapped
branes in the K-homology $\K_\bullet(S)$~\cite{Reis:2005pp}.

The cohomological formula for the charge of a D-brane arises
physically through ``anomaly cancellation'' arguments. Mathematically,
it arises very naturally through the \emph{modified Chern character}
$\ch\mapsto \ch\smile \sqrt{\widehat{A}(M)}$ which, by the
Atiyah--Singer index theorem, is an isometric isomorphism between
K-theory and cohomology groups over $\IR$ with respect to their
natural bilinear pairings. On K-theory this pairing is given by the index
of the twisted Dirac operator, which coincides with the natural
intersection form on boundary states and computes the chiral fermion
anomaly on D-branes; on cohomology it is given by
evaluation on the fundamental class. Then the \emph{charge}
of a D-brane $(W,E,f)$ is given by
\beq
Q(W,E,f)=\ch\big(f_![E]\big) \smile \sqrt{\widehat{A}(M)} \ \in \
\RH^\bullet(M;\IR) \ .
\label{Dbranecharge}\eeq
This form of the charge vector is called the \emph{Minasian--Moore
  formula}~\cite{Minasian:1997mm}. It respects the Baum--Douglas gauge equivalence relations.

\subsection{Ramond--Ramond gauge theory\label{RRGT}}

Let us now look at D-branes and their charges as currents in a
suitable generalized abelian gauge theory. As before, let $M$ be a
ten-dimensional spin manifold. The generalized abelian gauge theory of Ramond--Ramond fields arises as a low-energy limit
of Type~II superstring theory from the quantum Hilbert space of states of
closed superstrings on the manifold $M$; we are interested in the
corresponding equations of motion in Type~II supergravity. We begin with a description
of topologically trivial Ramond--Ramond fields as elements of the
differential complex $\Omega^\bullet(M)$.

For this, let $K^\bullet$ be the $\IZ$-graded real
vector space of Laurent polynomials
$\IR(u)$, where $u^{-1}$ is the Bott generator of
$\K^{-2}(\pt)\cong\IZ[u^{-1}]$ with $\deg(u)=2$. Let
$\Omega(M;K^\bullet)^j$ denote the vector space of differential forms
on $M$ of \emph{total} degree $j$. In this article we are interested
mostly in the cases $j=0,-1$; an element $F\in\Omega(M;K^\bullet)^j$
then admits an expansion in even degree forms
\beq
F=\sum_{k=0}^5\, u^{-k}\otimes F_{2k} \qquad \mbox{for} \quad j=0
\label{IIARR}\eeq
and in odd degree forms
\beq
F=\sum_{k=1}^5\, u^{-k}\otimes F_{2k-1} \qquad \mbox{for} \quad j=-1 \
,
\label{IIBRR}\eeq
where $F_p\in\Omega^p(M)$. We call an element
$F\in\Omega(M;K^\bullet)^j$ the total \emph{Ramond--Ramond field
  strength}, where $j=0$ for the Type~IIA string theory while $j=-1$
for the Type~IIB string theory. In the IIA theory, the standard
supergravity Bianchi identity is
\beqa
\dd F=0 \ .
\eeqa

The space of forms $\Omega(M;K^\bullet)^j$ is a symplectic vector
space with symplectic form given by
\beqa
\omega_j=\Big[\, \frac12\, \int_M\, \delta F\wedge\Psi^{-1}_j(\delta F)\,
\Big]_{u^0} \ ,
\eeqa
where the operation $[-]_{u^0}:= u^{-{\rm deg}/2}\, (-)$ projects out the constant coefficient
in $\Omega^j(M)$
of a Laurent polynomial in $\IR(u)$, and
\beqa
\Psi^{-1}_j\, :\, \Omega(M;K^\bullet)^j \ \longrightarrow \
\Omega(M;K^\bullet)^{10-j}
\eeqa
is the map which essentially sends $F$ to its complex conjugate
$\overline{F}$, where $\overline{u}:=-u$; on decomposable elements
$F=u^k\otimes f$ it is given by
\beqa
\Psi^{-1}_j(u^k\otimes f):= (-1)^{j\, (j-1)/2}\, u^{5-j}\, (-u)^k
\otimes f \
.
\eeqa
In the IIA/IIB theories, one has the respective expansions
\beqa
\omega_0(F,G)=\int_M \ \sum_{k=0}^5\, (-1)^{k+1}\, F_{2k}\wedge
G_{10-2k}
\eeqa
and
\beqa
\omega_{-1}(F,G)=\int_M \ 
\sum_{k=1}^5\, (-1)^{k+1}\, F_{2k-1}\wedge G_{11-2k} \ .
\eeqa

Now we let $(M,g)$ be a ten-dimensional lorentzian spin manifold. Then
there is a metric of indefinite signature on the vector space
$\Omega(M;K^\bullet)^j$ given by
\beqa
g_j(F,G):=\Big[\, \int_M\, F\wedge \check\iota(G)\,
\Big]_{u^0} \ ,
\eeqa
where the map
\beqa
\check\iota \,:\, \Omega(M;K^\bullet)^j \ \longrightarrow \
\Omega\big(M \,;\,
(K^\bullet)^*\big)^{10-j}
\eeqa
is essentially the Hodge duality operator $\star$ associated to the lorentzian
metric $g$ on $M$, with the convention $\star:u\mapsto u^{-1}$; on decomposable elements
$F=u^k\otimes f$ it is given by
\beqa
\check\iota(u^k\otimes f)= u^{-k}\otimes \star f \ .
\eeqa
In the IIA/IIB theories, one has the respective expansions
\beqa
g_0(F,G)=\int_M \ \sum_{k=0}^5\, F_{2k}\wedge \star
G_{2k} \qquad \mbox{and} \qquad g_{-1}(F,G)=\int_M \ 
\sum_{k=1}^5\, F_{2k-1}\wedge\star G_{2k-1} \ .
\eeqa

The metric and symplectic form together define an involution
\beq
I_j\, :\, \Omega(M;K^\bullet)^j \ \longrightarrow \
\Omega(M;K^\bullet)^{j} \ , \qquad
I_j(F):=-\big((\Psi^{-1}_j)^{-1}\circ\check\iota\big)(F)
\label{Ijdef}\eeq
which has the property
\beqa
g_j(F,G)=\omega_j\big(I_j(F)\,,\, G\big) \ .
\eeqa
In the IIA/IIB theories, one has the respective expansions
\beq
I_0(F)= \sum_{k=0}^5\, (-1)^{k+1} \, u^{-k}\otimes \star F_{10-2k} \qquad
\mbox{and} \qquad I_{-1}(F)= \sum_{k=1}^5\, (-1)^k\, u^{-k}\otimes
\star F_{11-2k} \ .
\label{I0I-1}\eeq
This involution defines a $\IZ_2$-grading on the vector space of
differential forms through the decomposition into $\pm\,1$ eigenspaces
\beqa
\Omega(M;K^\bullet)^j= \Omega_+(M;K^\bullet)^j\oplus
\Omega_-(M;K^\bullet)^j 
\eeqa
with $I_j$ acting as multiplication by $\pm\,1$ on $\Omega_\pm(M;K^\bullet)^j$.

Forms $F^+\in \Omega_+(M;K^\bullet)^j$, i.e. forms $F^+$ which satisfy
\beq
I_j(F^+)=+\, F^+ \ ,
\label{IjF+}\eeq
are called \emph{self-dual forms}. An alternative
formulation of the self-duality equation (\ref{IjF+}) can be given using Clifford
algebras~\cite{Belov:2006xj}. On any lorentzian manifold $(M,g)$ of dimension $4k+2$,
$k\in\IN$, the associated volume form ${\rm vol}_g$ defines a Clifford algebra
involution $\Gamma:=\ttc({\rm vol}_g)$, where $\ttc(\omega)$ denotes
Clifford multiplication by the form $\omega$. We may then define the
involution $I_j$ on $\Omega(M;K^\bullet)^j$ by
\beqa
\ttc\big(I_j(F)\big):= \Gamma\, \ttc(F)\ .
\eeqa
This definition agrees with (\ref{I0I-1}) by using the property
\beqa
\Gamma\, \ttc(F_p) = (-1)^{p\,(p+1)/2}\, \ttc(\star F_p)
\eeqa
of Clifford multiplication for $F_p\in\Omega^p(M)$. Note that if we
work in euclidean signature instead, then $I_j$ would define a complex
structure which is compatible with the euclidean metric $g_j$ and the
symplectic form $\omega_j$; see~\cite{Belov:2006xj} for further details.

All the supergravity equations of motion for the Ramond--Ramond fields
in Type~II string theory can be rewritten using only the self-dual
form $F^+$. In particular, the equation
\beqa
\dd F^+=0
\eeqa
contains both the Bianchi identity and the Ramond--Ramond equation of
motion
\beqa
\dd F=0 \qquad \mbox{and} \qquad \dd\star F=0 \ .
\eeqa
These equations are identical to the equations of motion that we
encountered in \S\ref{Abgauge} for abelian gauge theory, like the
Maxwell theory. This suggests that the Ramond--Ramond gauge theory
should be modelled on some sort of generalized differential
cohomology theory; we shall elaborate on this model soon.

For the
Type~IIA theory, the self-duality equations are
\beqa
F_6=-\star F_4 \ , \qquad F_8=\star F_2 \qquad \mbox{and} \qquad
F_{10}=-\star F_0 \ ,
\eeqa
and hence from (\ref{IIARR}) the Type~IIA self-dual Ramond--Ramond
field has an expansion
\beqa
F^+=u^0\otimes F_1+u^{-1}\otimes F_2+u^{-2}\otimes F_4+u^{-3}\otimes\star
F_4-u^{-4}\otimes \star F_2+u^{-5}\otimes\star F_0
\eeqa
with the equations of motion
\beqa
\dd F_0=\dd F_2=\dd F_4=0=\dd\star F_4=\dd\star F_2 \ .
\eeqa
Similarly, from (\ref{IIBRR}) the Type~IIB self-dual Ramond--Ramond
field has an expansion
\beqa
F^+= u^{-1}\otimes F_1+u^{-2}\otimes F_3+u^{-3}\otimes
F_5^++u^{-4}\otimes \star F_3-u^{-5}\otimes\star F_1 \ ,
\eeqa
with $\star F_5^+=-F_5^+$ and the equations of motion
\beqa
\dd F_1=\dd F_3=0=\dd\star F_3=\dd\star F_1 \qquad \mbox{and} \qquad
\dd F_5^+=0 \ .
\eeqa

Note that these expressions are just particular parametrizations of the self-dual
field $F^+$: It can always be written as
\beqa
F^+=F+I_j(F) \qquad \mbox{with} \quad F\in \Omega(M;K^\bullet)^j \ ,
\eeqa
but the form $F$ is \emph{not} unique. The self-duality condition on the total
Ramond--Ramond field strength is a feature that we will have to deal
with carefully later on. Together with the constraint on the parity of
differential form degree, it is the requirement of the GSO projection
in the underlying closed superstring theory on the manifold $M$. Note also
that, generally, a necessary condition for the existence of solutions
to the self-duality equations on a lorentzian manifold $(M,g)$ is that $\dim(M)=4k+2$ with $k\in\IN$;
only under these conditions is the map (\ref{Ijdef}) an involution.

\subsubsection*{Ramond--Ramond currents}

At the level of topologically trivial fields, the Ramond--Ramond
fields are \emph{sourced} by D-branes on $M$, just like the
electromagnetic fields of Maxwell theory are \emph{sourced} by
electrically charged particles. ``Anomaly cancellation'' arguments in
the Ramond--Ramond abelian gauge theory requires that, in the presence
of a D-brane $(W,E,f)$, the Ramond--Ramond fields obey the equations
of motion
\beq
\dd F=0 \qquad \mbox{and} \qquad \dd\star F=j(W,E,f) \ ,
\label{RReoms}\eeq
where $j(W,E,f)$ is the ``Ramond--Ramond current'' whose cohomology
class is given by the D-brane charge vector (\ref{Dbranecharge}) and
as before we use a distributional representative of the Poincar\'e
dual class $\PD_M(f(W))$. Note the formal equivalence of (\ref{RReoms}) to the equations of
motion (\ref{GAGTeom}) of (higher) abelian gauge theories. In particular, the
Ramond--Ramond field is a trivialization of the Ramond--Ramond
current. However, these equations are incompatible with the
self-duality of $F$.

\subsection{Semi-classical quantization}

Since the Ramond--Ramond charges of D-branes are classified by the complex
K-theory $\K_c^{j}(M)$ of spacetime with compact support (with
$j=0/-1$ in Type~IIB/IIA theory), it is natural to expect that the
Ramond--Ramond fields are also classified in some way via
K-theory. This statement about the K-theory classification of
Ramond--Ramond fields on a locally compact spacetime $M$ follows from
the relation between D-brane charges and the group of Ramond--Ramond
fluxes ``measured at infinity''.

For this, let $M$ be a non-compact manifold, e.g. $M=\IR\times N$ with
$N$ non-compact. If the worldvolume $W$ is compact inside the K-cycles
$(W,E,f)$, then the current $j(W,E,f)$ is supported in the interior
$\mathring{M}\subset M$. Let $M_\infty$ be the ``boundary of $M$ at
infinity'', e.g. $(\IR^n)_\infty=S^{n-1}$. Since $j(W,E,f)$ is
trivialized by $F$ in $\mathring{M}$, the D-brane charge lives in the
kernel of the natural forgetful homomorphism
\beqa
\mathfrak{q}^\bullet \,:\, \K^\bullet_c(M) \ \longrightarrow \
\K^\bullet(M)
\eeqa
induced by the inclusions
\beqa
(M,\emptyset) \ \hookrightarrow \ (M,M_\infty) \qquad \mbox{and}
\qquad i\,:\, M_\infty \ \hookrightarrow \ M
\eeqa
of (relative pairs of) locally compact spaces; this morphism forgets about the compact support condition, with
$\K_c^\bullet(M)\cong \K^\bullet(M,M_\infty)$ given by relative K-theory. By Bott periodicity,
the long exact sequence for the pair $(M,M_\infty)$ in K-theory
truncates to the six-term exact sequence
\beqa
\xymatrix{\K^{-1}(M_\infty)~\ar[r]&~{\K^{0}(M,M_\infty)}~
\ar[r]^{ \ \ \mathfrak{q}^0}& ~\K^{0}(M)\ar[d]^{i^*}\\
\K^{-1}(M)~\ar[u]^{i^*}&\ar[l]^{\!\!\mathfrak{q}^{-1}}~\K^{-1}(M,M_\infty)~&
\ar[l]~\K^{0}(M_\infty) }
\eeqa
It follows that the D-brane charge groups are given by
\beqa
\ker(\mathfrak{q}^0)\cong \K^{-1}(M_\infty)\,\big/\,
i^*\big(\K^{-1}(M)\big) \qquad \mbox{and} \qquad
\ker(\mathfrak{q}^{-1})\cong \K^0(M_\infty)\,\big/\,
i^*\big(\K^0(M)\big) \ .
\eeqa
We interpret these formulas in the following way. D-brane charge,
regarded as the total charge of a Ramond--Ramond current
$j(W,E,f)$, can only be detected by classes of fields at infinity
$M_\infty$ which are \emph{not} the restrictions of fields defined on
$\mathring{M}$, i.e. which cannot be extended to all of spacetime
$M$. Moreover, the group $\K^j(M)$ for $j=0,-1$ classifies fields $F$
which do not contribute to the D-brane charge, i.e. $\K^j(M)$
topologically classifies gauge equivalence classes of Ramond--Ramond
fields in the absence of D-branes, where $j=0,-1$ for
Type~IIA/IIB string theory. In the following we will \emph{assume}
that this relation between Ramond--Ramond fields and K-theory holds
for arbitrary spacetime manifolds $M$.

Let us now make explicit the relation between the Ramond--Ramond
fields and cohomology, i.e. the de~Rham cohomology class
$[F(\xi)]_{\rm dR}$ associated to an element $\xi\in\K^j(M)$ that
determines (the class of) the Ramond--Ramond field
$F$. In~\cite{Moore2000} it is argued using the Ramond--Ramond
equation of motion from (\ref{RReoms}) that
\beqa
[F]_{\rm dR} \ \in \ \Lambda_{\K^j}\subset \RH(M;K^\bullet)^j \ ,
\eeqa
where the full lattice $\Lambda_{\K^j}$ is the image of the modified
Chern character map
\beqa
\ch\wedge\sqrt{\widehat{A}(M)} \,:\, \K^j(M) \ \longrightarrow \ 
\RH(M;K^\bullet)^j
\eeqa
which is a group isomorphism over $\IR$. This means that the cohomology class of
the Ramond--Ramond field $F$ associated to the K-theory class
$\xi\in\K^j(M)$ is
\beq
\big[F(\xi)\big]=\ch(\xi)\smile \sqrt{\widehat{A}(M)} \ .
\label{RRfieldKclass}\eeq
This formula is interpreted as the Dirac quantization condition for
Ramond--Ramond fields in Type~II string theory. It follows that the
Ramond--Ramond field is given by a representative for an element of
\emph{differential K-theory}, which we now proceed to describe in some
detail.

\subsection{Differential K-theory\label{DiffK}}

The set of gauge inequivalent Ramond--Ramond fields (or equivalently
gauge equivalence classes of Ramond--Ramond currents) lives inside an
infinite-dimensional abelian Lie group $\check\K{}^j(M)$, the
differential K-theory group of spacetime $M$; its connected components
are labelled by the complex K-theory $\K^j(M)$, the group of D-brane charges. The group $\check\K{}^j(M)$
extends the setwise fibre product
\beqa
\Omega_\IZ(M;K^\bullet)^j\times_{[-]} \K^j(M):=\Big\{(F,\xi)\ \Big| \ \ch(\xi)\wedge \sqrt{\widehat{A}(M)} =
[F]_{\rm dR} \Big\} 
\eeqa
by the torus of topologically trivial flat Ramond--Ramond fields,
i.e. there is an exact sequence
\beqa
0 \ \longrightarrow \ \K^{j-1}(M)\otimes\bbt \ \longrightarrow \
\check\K{}^j(M) \ \longrightarrow \ \Omega_\IZ(M;K^\bullet)^j\times_{[-]} \K^j(M) \ \longrightarrow \ 0 \ .
\eeqa
In general, a Ramond--Ramond \emph{potential} is a representative for
a class $[\check C]$ 
in the differential K-theory $\check\K{}^j(M)$. As before, this group
can be characterised by two short exact sequences which are summarised
in the diagram
\beq
\xymatrix{
0 \ar[dr] & & & & 0 \\
 & \K^{j-1}(M;\bbt) \ar[dr] & & \K^j(M) \ar[ur] & \\
 & & \check\K{}^j(M) \ar[ur]^c \ar[dr]^F & & \\
 & \Omega(M;K^\bullet)^{j-1} \,\big/ \, \Omega_\IZ(M;K^\bullet)^{j-1} \ar[ur] & &
 \Omega_\IZ(M;K^\bullet)^j \ar[dr] & \\
0 \ar[ur] & & & & 0
}
\label{diffKexactseqs}\eeq
where the kernel of the field strength map $F$ is the group of flat
Ramond--Ramond fields on $M$, while the kernel of the characteristic
class map $c$ is the torus of topologically trivial Ramond--Ramond
gauge potentials. Let us pause to briefly describe the structures of each of
these classes of fields, as they will play a prominent role in our
applications.

\subsubsection*{Flat Ramond--Ramond fields}

The group $\K^j(M;\bbt)$ of flat fields can be described as explained
in \S\ref{Gendiffcoh}, by using the
short exact sequence of abelian groups (\ref{ZRTseq}) to write the
corresponding long exact sequence in complex K-theory
\beq
\cdots \ \longrightarrow \ \K^j(M) \ \longrightarrow \ \K^j(M)\otimes\IR
\ \longrightarrow \ 
\K^j(M;\bbt) \ \xrightarrow{ \ \delta \ } \ \K^{j+1}(M) \ \longrightarrow \
\cdots \ .
\label{Ktheorylongexact}\eeq
Using the Chern character homomorphism, the flat Ramond--Ramond fields
are thus described by the short exact sequence
\beqa
0 \ \longrightarrow \ \RH^j(X;\IR)\,\big/ \, \Lambda_{\K^j} \
\longrightarrow \ \K^j(M;\bbt) \ \xrightarrow{ \ \beta \ } \
\Tor\, \K^{j+1}(M) \ \longrightarrow \ 0 \ ,
\eeqa
where the Bockstein homomorphism $\beta$ is induced from the
connecting homomorphism $\delta$ in the long exact sequence
(\ref{Ktheorylongexact}). If the K-theory group $\K^{j+1}(M)$ is pure
torsion, then the flat torsion Ramond--Ramond fields can be
represented in terms of virtual flat vector bundles over $M$ as
\beq
\K^j(M;\bbt)\cong \Tor\, \K^{j+1}(M) \ .
\label{KjMTTor}\eeq
By Pontrjagin duality of K-theory, we have
\beq
\K^j(M;\bbt)\cong\Hom_{\scrAb}\big(\K_j(M)\,,\, \bbt\big) \ ,
\label{PontrdualityKth}\eeq
where $\K_\bullet$ is geometric K-homology. The
isomorphism (\ref{PontrdualityKth}) follows by using the fact that
$\bbt=\IR/\IZ$ is a divisible group along with the universal coefficient
theorem for K-theory to find $$\Ext_{\scrAb}\big(\K_j(M)\,,\,\bbt \big)=0$$ for all $j\in\IZ$;
this implies that the contravariant functor $G\mapsto \Hom_{\scrAb}(G,\bbt)$
from the category of abelian groups into itself takes exact sequences
into exact sequences.

The Ramond--Ramond flux couplings implied by
(\ref{KjMTTor})--(\ref{PontrdualityKth}) can be made explicitly to
``background D-branes'', which are \emph{K-chains} $\big( \,
\widetilde{W}\,,\, \widetilde{E}\,,\, \widetilde{f}\ \big)$ whose boundary
\beq
\partial_{\K} \big( \,
\widetilde{W}\,,\, \widetilde{E}\,,\, \widetilde{f}\ \big):= \big( \,
\partial\widetilde{W}\,,\, \widetilde{E}\,\big|_{\partial{\widetilde{W}}}\,,\,
\widetilde{f}\, \big|_{\partial{\widetilde{W}}}\, \big) = (W,E,f)
\label{Kchainbdry}\eeq
is a Baum--Douglas K-cycle. The \emph{holonomy} over such a
D-brane background with flat Ramond--Ramond flux given by
\beqa
\xi=[E_0]-[E_1] \ \in \ \K^{-1}(M;\bbt)\cong\Hom_{\scrAb}\big(\K_{\rm odd}(M) \,,\,
\bbt \big) \ ,
\eeqa
where $E_0,E_1$ are complex vector bundles on $M$ of equal rank, is
then defined by the virtual K-chain
\beq
\big( \,
\widetilde{W}\,,\, \widetilde{f^*E_0}\,,\, \widetilde{f}\ \big) - \big( \,
\widetilde{W}\,,\, \widetilde{f^*E_1}\,,\, \widetilde{f}\ \big) \ .
\label{virtualKchain}\eeq
Unlike the couplings to D-branes, these couplings do not define spin$^c$
bordism invariants.

\subsubsection*{Topologically trivial Ramond--Ramond fields}

The torus of topologically trivial fields
$\Omega(M;K^\bullet)^{j}/\Omega_\IZ(M;K^\bullet)^{j}$ consists of globally defined
differential forms on $M$, i.e. $[\check C]=[C]$ with
$C\in\Omega(M;K^\bullet)^j$; these are the Ramond--Ramond potentials that are normally considered in
the physics literature. The field strength of such a potential is
\beqa
F\big( [C]\big)=\dd C
\eeqa
and the gauge invariance is
\beqa
C \ \longmapsto \ C+\dd\xi \qquad \mbox{with} \quad
\xi\in\Omega(M;K^\bullet)^{j-1}\, \big/\, \Omega_\IZ(M;K^\bullet)^{j-1} \ .
\eeqa

\subsubsection*{Properties}

\begin{enumerate}
\item The differential K-theory is 2-periodic:
\beqa
\check\K{}^{j+2}(M) \cong \check\K{}^j(M) \ .
\eeqa
\item There is a graded-commutative ring multiplication
\beqa
\smile \,:\, \check\K{}^j(M)\otimes \check\K{}^{j'}(M) \ \longrightarrow \
\check\K{}^{j+j'}(M)
\eeqa
which acts on topologically trivial fields $C$ as
\beqa
[C]\otimes [\check C{}'\,] \ \longmapsto \ \big[C\wedge F\big([\check
C{}'\,]\big) \big] \ .
\eeqa
\item A \emph{$\check\K$-orientation} or \emph{smooth K-orientation} on a manifold $M$ is a
  choice of spin$^c$ structure together with a riemannian structure
  and a compatible smooth connection on the spin$^c$ bundle. For
  $M$ $\check\K$-oriented there is an integration map
\beqa
\int^{\check\K}\!\!\!\!\!\int_M \,:\, \check\K{}^j(M) \ \longrightarrow \
\check\K{}^{j-m}(\pt)
\eeqa
where $m=\dim(M)=10$. For topologically trivial fields $C$ it is given
by
\beqa
\int^{\check\K}\!\!\!\!\!\int_M \, [C] =u^{\lfloor m/2\rfloor}\, \Big[\, \int_M\,
C\wedge\widehat{A}(M)\, \Big]_{u^0} \quad \mbox{mod} \ \IZ \ .
\eeqa
The integration map commutes with the field strength map, if the
integration of forms is defined appropriately using the orientation curvature, in the sense
that the curvature of $\int^{\check\K}\!\!\!\int_M \, [\check C]$ is $\int_M\,
\widehat{A}(M)\wedge F([\check C])$ by the Riemann--Roch theorem.
\end{enumerate}

\subsubsection*{Examples}

Using the exact sequences from (\ref{diffKexactseqs}) one can work out
differential K-theory groups explicitly in a number of basic examples.
\begin{itemize}
\item When $M$ is a point one has
\beqa
\check\K{}^0(\pt)=\IZ \qquad \mbox{and} \qquad \check\K{}^{-1}(\pt) =
\bbt \ .
\eeqa
\item Let $R$ be a linear vector space over $\IC$. Then $R$ is
contractible, so one has $\RH(R;K^\bullet)^{-1} = 0 = \K^{-1}(R)$ and
$\K^0(R)=\IZ$. Whence the group of Type~IIA Ramond--Ramond potentials on
$R$ is given by
\beqa
\check\K{}^0(R)=\Omega_\IZ(R;K^\bullet)^0 \ \longrightarrow \
\IZ \, \oplus \, \Omega(R;K^\bullet)^{-1} \, \big/\, \Omega_\IZ(R;K^\bullet)^{-1} \ .
\eeqa
It naturally contains those fields which trivialize the Ramond--Ramond
currents sourced by the stable D0-branes of the Type~IIA theory,
corresponding to characteristic classes $[c]\in\K^0(R)=\IZ$. Since $R$ is contractible, the potential is determined in positive
degree by a
globally-defined differential form $C$ of curvature $F([C])= \dd C$, with
the gauge invariance $C\mapsto C+\dd\xi$. The arrow is then the
natural map which associates to the field strength $F$ the
corresponding globally well-defined Ramond--Ramond potential $C$. The group of Type~IIB
Ramond--Ramond potentials on $R$ is on the other hand given by
\beqa
\check\K{}^{-1}(R)= \Omega(R;K^\bullet)^{0} \, \big/\, \Omega_\IZ(R; K^\bullet)^{0} \ .
\eeqa
In this case there is no extension as the Type~IIB theory has no
stable D0-branes, and hence the Ramond--Ramond fields are determined
entirely by the potentials $C$ which are globally defined differential
forms of even degree.
\end{itemize}

\subsection{Models for differential K-theory}

Let us now look at some explicit models for the differential K-theory
groups, stressing the virtues and drawbacks of each approach from a
physical perspective. For brevity we consider only the degree zero
groups $\check\K{}^0(M)$ pertinent to the Type~IIA string theory.

\subsubsection*{Differential function spaces}

The foundational construction of differential K-theory is found
in~\cite{Hopkins:2002rd}; this approach is based on classifying maps
for complex K-theory. Let $\Fred$ denote the algebra of Fredholm
operators on a separable Hilbert space. Taking the index bundle of a
homotopy class of maps $c:M\to\Fred$ determines an isomorphism
\beq
[M,\Fred] \ \xrightarrow{\Index} \ \K^0(M) \ .
\label{FredK0}\eeq
The cocycles for $\check\K{}^0(M)$ are triples $(c,h,\omega)$, where
$c:M\to\Fred$ represents a K-theory class via the isomorphism
(\ref{FredK0}), $h\in C(M;K^\bullet)^{-1}$ is a cochain on $M$, and
$\omega\in\Omega_{\rm cl}(M;K^\bullet)^0$ is a closed differential form on $M$ of even
degree such that
$c^*{\tt u}-\delta h=\omega$ where the cocycle ${\tt u}\in Z(\Fred;K^\bullet)^0$
represents the Chern character of the universal vector bundle. One then
defines an equivalence relation by declaring two cocycles to be
equivalent ``up to homotopy''. Although powerful because of its generality,
this approach is not very useful for constructing the extra geometrical
ingredients required in gauge
theory; a graded-commutative product structure in this model is
described in~\cite{Upmeier}.

\subsubsection*{Chan--Paton gauge fields}

A more geometric approach is given
in~\cite{Freed:2000ta}, which may be thought of as equipping the
Chan--Paton vector bundles of background D-branes that wrap the whole spacetime $M$ with
connections, and then constructing Ramond--Ramond potentials in an
analogous way to that described in \S\ref{DBranes} In this model one represents classes in
$\check\K{}^0(M)$ by pairs $(E,\nabla)$, where $E\to M$ is a hermitian
vector bundle and $\nabla$ is a unitary connection on $E$. The cocycles are
then triples $(f,\eta,\omega)$, where $f:M\to BU$ is a classifying
map for $E$, $\omega=\ch(\nabla)$ is the Chern--Weil representative of
the Chern characteristic class $\ch([E])$, and $\eta$ is a Chern--Simons form
with the property
\beqa
\dd\eta=f^*\omega_{BU}-\omega \ ,
\eeqa
with $\omega_{BU}=\ch(\nabla_{BU})$ the Chern characteristic class of
the universal bundle over $BU$ with universal connection
$\nabla_{BU}$. A refinement of the index theorem in this model can be
found in~\cite{FreedLott}.

\subsubsection*{Configuration space and gauge transformations of Ramond--Ramond fields}

As we have explained, locality of quantum field theory requires the
use of cocycles rather than isomorphism classes. We will regard the
Ramond--Ramond gauge fields as objects in a certain monoidal category
$\check\scrK_M$ whose morphisms are gauge transformations. A
construction of a category $\check\scrK_M$ of differential K-cocycles is sketched
in~\cite{Freed:2006yc} whose isomorphism classes are precisely the
differential K-theory classes in the group
$\check\K{}^0(M)$, and which is a \emph{groupoid}; below we elaborate on
this description. We identify gauge
fields up to isomorphism of cocycles. The monoidal structure on
$\check\scrK_M$ is generated by the sum of cocycles.

The objects of the
category $\check\scrK_M$ are triples $\check C= (E,\nabla,C)$, where
$E\to M$ is a $\IZ_2$-graded hermitian
vector bundle, $\nabla$ is a unitary connection on $E$, and
$C\in\Omega(M;K^\bullet)^{-1}$ is a differential form on $M$ of odd
degree. The field strength of such an object is given by~\cite{FreedLott}
\beqa
F(E,\nabla, C)=\ch(\nabla)+\dd  C \ ,
\eeqa
where
\beqa
\ch(\nabla)= \Tr \exp\big(-\mbox{$\frac1{2\pi\ii}$}\,
u^{-1}\otimes\nabla^2\, \big) \ \in \
\Omega_{\rm cl}(M;K^\bullet)^0
\eeqa
is the corresponding Chern character form which gives the curvature
contribution from background D-branes. In the topologically
trivial case we take $E=\emptyset$ to be the empty vector bundle and the differential form $C$ is
what is normally called the Ramond--Ramond field (or more precisely
potential) of Type~IIA string theory. The monoidal structure on objects is given by
\beqa
\hat C + \hat C{}'= (E,\nabla,C)+(E',\nabla',C'\,)= (E\oplus
E',\nabla\oplus \nabla',C+C'\,) \ ,
\eeqa
and a zero object $\hat C=0$ is represented by $(E\oplus E^{\rm
  op},\nabla\oplus\nabla,0)$ for a $\IZ_2$-graded vector bundle $E$,
where $E^{\rm op}$ denotes the bundle $E$ with the opposite grading.

A pre-morphism $g: \check C_0\longrightarrow\check C_1$ between two
objects $\check C_0=(E_0,\nabla_0, C_0)$ and 
$\check C_1=(E_1,\nabla_1, C_1)$ is a triple
$\big(\,\widetilde{G}\,,\,\widetilde{\nabla} \,,\, { \lambda}\ \big)$,
where $\big(\,\widetilde{G}\,,\,\widetilde{\nabla} \, \big)$ is a $\IZ_2$-graded
vector bundle with connection on
$M\times I$, $I=[0,1]$ such that
\beqa
\big(\,\widetilde{G}\,,\,\widetilde{\nabla} \,
\big)\Big|_{M\times 0}= (E_0,\nabla_0) \qquad \mbox{and} \qquad
\big(\,\widetilde{G}\,,\,\widetilde{\nabla} \,
\big)\Big|_{M\times 1}= (E_1,\nabla_1) \ ,
\eeqa
with $\widetilde{F}=F\big(\,\widetilde{G}\,,\,\widetilde{\nabla} \,,\,
{ \lambda}\, \big)$ constant along $t\in[0,1]$,
i.e. $i_{\partial/{\partial t}}\widetilde{F}=0$, and
$\lambda\in\Omega(M;K^\bullet)^{0}$ such that
\beqa
 C_1=  C_0+ \CS(\nabla_0,\nabla_1) +\dd\lambda
\eeqa
is given by a transgression form obtained from a canonically defined
Chern--Simons class~\cite{Lott}
\beqa
\CS(\nabla_0,\nabla_1)=\int_0^1\, \ch\big(\, \widetilde{\nabla}\, \big) \ \in \
\Omega(M;K^\bullet)^{-1}\,\big/\, {\rm im}(\dd)
\eeqa
with the property
\beqa
\dd\CS(\nabla_0,\nabla_1)=\ch(\nabla_0)-\ch(\nabla_1) \ .
\eeqa
The relations are $E_2=E_1+E_3$ whenever there is a short exact
sequence of vector bundles
\beqa
0 \ \longrightarrow \ E_1 \ \xrightarrow{ \ i \ } \ E_2 \ \xrightarrow{
  \ j \ } \ E_3 \ \longrightarrow \ 0 \ .
\eeqa
Choosing a splitting $s:E_3\to E_2$ gives an isomorphism $i\oplus
s:E_1\oplus E_3\to E_2$. If $\nabla^{E_i}$ is a connection on $E_i\to
M$, then the corresponding relative Chern--Simons form
\beqa
\CS\big(\nabla^{E_1}\,,\,\nabla^{E_2}
\,,\,\nabla^{E_3}\big):=\CS\big((i\oplus s)^*\nabla^{E_2}\,,\,
\nabla^{E_1}\oplus\nabla^{E_3}\big) \ \in \
\Omega(M;K^\bullet)^{-1}\,\big/\, {\rm im}(\dd)
\eeqa
is independent of the choice of splitting morphism $s$ and satisfies
\beqa
\dd\CS\big(\nabla^{E_1}\,,\,\nabla^{E_2}
\,,\,\nabla^{E_3}\big) = \ch\big(\nabla^{E_2}\big) -
\ch\big(\nabla^{E_1}\big)- \ch\big(\nabla^{E_3}\big) \ .
\eeqa
We then set $C_2=C_1+C_3-\CS\big(\nabla^{E_1}\,,\,\nabla^{E_2}
\,,\,\nabla^{E_3}\big)$.

Two pre-morphisms
\beqa
(E_0,\nabla_0, C_0)\xrightarrow{(\,\widetilde{G}_0,\widetilde{\nabla}_0
  ,{ \lambda}_0 \, )}(E_1,\nabla_1, C_1) \qquad \mbox{and} \qquad
(E_0,\nabla_0, C_0)\xrightarrow{(\,\widetilde{G}_1 ,\widetilde{\nabla}_1
  , { \lambda}_1 \, )}(E_1,\nabla_1, C_1)
\eeqa
are said to
be equivalent if there exists a triple $\big(\mbf{G},
\mbf{\nabla} , \mbf{ \lambda})$, where $\big(\mbf{G},
\mbf{\nabla} )$ is a $\IZ_2$-graded vector bundle with connection on $M\times
I\times I$
such that
\beqa
(\mbf{G},
\mbf{\nabla}) \big|_{M\times
  0\times I}= \big(\,\widetilde{G}_0\,,\,\widetilde{\nabla}_0 \big) \qquad \mbox{and} \qquad (\mbf{G},
\mbf{\nabla}) \big|_{M\times
  I\times 0} = (E_0\times I,\nabla_0\times 1)
\eeqa
while
\beqa
(\mbf{G},
\mbf{\nabla})\big|_{M\times 1\times I} = 
\big(\,\widetilde{G}_1\,,\,\widetilde{\nabla}_1 \big) \qquad \mbox{and} \qquad (\mbf{G},
\mbf{\nabla}) \big|_{M\times
  I\times 1} = (E_1\times I,\nabla_1\times 1) \ ,
\eeqa
with $\mbf\lambda\in\Omega(M;K^\bullet)^{-1}/\,{\rm im}(d)$ such that
\beqa
\lambda_1= \lambda_0+\int_0^1\, \ch(\mbf\nabla)
+\dd\mbf\lambda \ .
\eeqa
This generates an equivalence relation. Note that because of the
geometric product structure of $(\mbf G,\mbf\nabla)$, one has $\dd
\CS\big(\, \widetilde\nabla_0\,,\,\widetilde\nabla_1
\, \big) =
\ch\big(\, \widetilde\nabla_0\, \big)-\ch\big(\,
\widetilde\nabla_1\, \big)$ and hence
\beqa
\int_0^1\, \ch\big(\, \widetilde{\nabla}_0\, \big) +\dd\lambda_0 = \int_0^1\,
\ch\big(\, \widetilde{\nabla}_1\, \big) +\dd\lambda_1 \ .
\eeqa
A \emph{gauge transformation} from $\check C_0$ to $\check C_1$ is an
equivalence class of pre-morphisms
$\big(\,\widetilde{G}\,,\,\widetilde{\nabla} \,,\, {
  \lambda}\, \big)$. We refer to the
shift of $C_0$ by $\dd\lambda$ as a \emph{small} gauge transformation,
while the shift by the transgression form is called a \emph{large} gauge
transformation. Gauge transformations leave the corresponding
Ramond--Ramond field strengths invariant, $F([{\check C_0}]) =F([{\check
  C_1}] )$.

Next we construct the composition of gauge transformations. The
composition of pre-morphisms 
\beqa
\check C_0=(E_0,\nabla_0, C_0) \ \xrightarrow{(\,\widetilde{G}_1,\widetilde{\nabla}_1
  , { \lambda}_1\, )}  \ \check C_1= (E_1,\nabla_1, C_1) 
\eeqa
and
\beqa
\check C_1= (E_1,\nabla_1, C_1) \ \xrightarrow{(\,\widetilde{G}_2 ,\widetilde{\nabla}_2
  , { \lambda}_2\, )} \ \check C_2= (E_2,\nabla_2, C_2)
\eeqa 
is the pre-morphism $\check
C_0\xrightarrow{(\,\widetilde{G},\widetilde{\nabla}
  , { \lambda} \, )}\check C_2$ defined as follows. There is a graded vector
bundle with connection on $M\times I$ defined by
\beqa
\big(\,\widetilde{G}\,'\,,\,\widetilde{\nabla}' \,
\big) := \big(\,\widetilde{G}_1 \oplus \widetilde{G}_2
\,,\,\widetilde{\nabla}_1\oplus \widetilde{\nabla}_2 \,
\big)
\eeqa
with
\beqa
C_2=C_0+\int_0^1\,\ch\big(\,\widetilde\nabla'\, \big)
+\dd(\lambda_1+\lambda_2) \ .
\eeqa
Via concatenation of
paths, we now set $\big(\,\widetilde{G}\,,\,\widetilde{\nabla} \, \big)\big|_{M\times t}$ equal to
$\big(\,\widetilde{G}_1\,,\,\widetilde{\nabla}_1 \, \big)\big|_{M\times (2t)}$ for  $0\leq t\leq\frac12$ and
to $\big(\,\widetilde{G}_2\,,\,\widetilde{\nabla}_2 \, \big)\big|_{M\times(2t-1)} $ for $\frac12\leq t\leq1$.
This is compatible with the required identity
\beqa
\CS(\nabla_0,\nabla_2)=
\int_0^1\, \ch\big(\,\widetilde\nabla\,
\big)=\CS(\nabla_0,\nabla_1)+\CS(\nabla_1,\nabla_2) \ .
\eeqa
Likewise, we use concatenation of paths to define the graded vector bundle with connection $(\mbf E,\nablatr)$ on
$M\times I\times I$ having the three boundary faces
\beqa
\big((E_0,\nabla_0)\,,\,(E_1,\nabla_1) \big) \ , \qquad 
\big((E_1,\nabla_1)\,,\,( E_2,\nabla_2) \big) \qquad  \mbox{and} \qquad
\big((E_2,\nabla_2)\,,\, (E_0,\nabla_0) \big) \ .
\eeqa
One shows that
$\dd\int_0^1\,
\ch(\nablatr)=\int_0^1\,\big(\ch(\,\widetilde\nabla_1\,)+\ch(\,\widetilde\nabla_2\,)
-\ch(\,\widetilde\nabla'\,) \big)$.
We set
\beqa
\lambda=\lambda_1+\lambda_2+\int_0^1\,
\ch(\nablatr) +\dd\kappa
\eeqa
for some $\kappa\in\Omega(M;K^\bullet)^{-1}/\,{\rm im}(\dd)$.

\begin{lemma}
The composition $\big(\,\widetilde{G}_1\,,\,\widetilde{\nabla}_1 \,,\,
{ \lambda}_1\, \big)\circ
\big(\,\widetilde{G}_2\,,\,\widetilde{\nabla}_2 \,,\, {
  \lambda}_2 \, \big):= \big(\,\widetilde{G}\,,\,\widetilde{\nabla} \,,\, { \lambda}\, \big)$
in
$\check\scrK_M$ is well-defined and associative.
\end{lemma}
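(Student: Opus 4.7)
The plan is to establish two points: (i) that the triple $(\widetilde G,\widetilde\nabla,\lambda)$ constructed in the excerpt really is a pre-morphism $\check C_0\to\check C_2$, and (ii) that its class modulo the equivalence relation does not depend on the representatives chosen for $[\widetilde G_1,\widetilde\nabla_1,\lambda_1]$ and $[\widetilde G_2,\widetilde\nabla_2,\lambda_2]$, nor on the auxiliary filler $(\mbf E,\nablatr)$ and primitive $\kappa$. Associativity will follow by exhibiting a canonical equivalence between the two ways of bracketing $g_1\circ g_2\circ g_3$.

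For (i), the constancy condition $i_{\partial/\partial t}\widetilde F=0$ is immediate since each piece of the concatenation has constant field strength in the $t$-direction. The cocycle identity for $\lambda$ is obtained by adding the inputs $C_1=C_0+\CS(\nabla_0,\nabla_1)+\dd\lambda_1$ and $C_2=C_1+\CS(\nabla_1,\nabla_2)+\dd\lambda_2$, rewriting the sum $\CS(\nabla_0,\nabla_1)+\CS(\nabla_1,\nabla_2)$ as $\int_0^1\ch(\widetilde\nabla')$ by additivity of $\ch$ under the direct sum $\widetilde\nabla'=\widetilde\nabla_1\oplus\widetilde\nabla_2$, and then invoking the triangle transgression identity stated immediately before the lemma, which exchanges $\int_0^1\ch(\widetilde\nabla')$ for $\CS(\nabla_0,\nabla_2)+\dd\int_0^1\ch(\nablatr)$ modulo $\mathrm{im}(\dd)$. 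Substitution yields exactly the prescribed $\lambda=\lambda_1+\lambda_2+\int_0^1\ch(\nablatr)+\dd\kappa$.

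For (ii), independence from $\kappa$ is automatic since $\lambda$ is defined modulo exact forms. Two triangle fillers $(\mbf E,\nablatr)$ and $(\mbf E^{\prime},\nablatr^{\prime})$ sharing the same three edges can be glued along their common boundary to produce a closed two-parameter family whose $\int\ch$ is exact by Chern--Weil, so they yield equivalent pre-morphisms. Independence from the choice of representatives of $g_1$ and $g_2$ is handled by concatenating the corresponding equivalence bundles $(\mbf G_i,\mbf\nabla_i,\mbf\lambda_i)$ in the $t$-direction and verifying the parallel transgression identity for the resulting bundle on $M\times I\times I$. For associativity, the underlying bundles of $(g_1\circ g_2)\circ g_3$ and $g_1\circ(g_2\circ g_3)$ on $M\times I$ differ only by a reparameterisation of the concatenation breakpoints ($\tfrac12,\tfrac34$ versus $\tfrac14,\tfrac12$); linear interpolation of these reparameterisations produces a bundle on $M\times I\times I$ that serves as the required equivalence, and the accompanying transgression two-form will be absorbed into $\kappa$.

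The main obstacle will be bookkeeping the transgression forms in the associativity step: the two bracketings involve different triangle fillers $(\mbf E,\nablatr)$, and one must show that their $\int_0^1\ch(\nablatr)$ terms differ by an exact form. The cleanest execution is to construct a tetrahedral filling on $M\times\Delta^3$ whose four triangular boundary faces realise the four pertinent triangles, and to apply Stokes' theorem to $\ch$ on $\Delta^3$; this reduces the desired identity to a purely geometric coherence statement for $\int_{\Delta^3}\ch(\nablatr)$. Conceptually this is the standard coherence datum that promotes $\check\scrK_M$ to a symmetric monoidal $2$-category, and while no ingredient beyond repeated use of Stokes' theorem and additivity of $\ch$ under direct sums is required, careful attention to orientation conventions is the principal source of potential error.
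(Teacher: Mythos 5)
Your proposal is correct and follows essentially the same strategy as the paper: well-definedness is reduced to the transgression identity $\lambda_1'=\lambda_1+\int_0^1\,\ch(\mbf\nabla)+\dd\mbf\lambda$ relating equivalent representatives, absorbed into the triangle filler on $M\times I\times I$ and the ambiguity $\dd\kappa$. You in fact go further than the paper's own proof, which only checks independence of the representative of one factor ``for example'' and explicitly leaves associativity to the reader, whereas your reparametrisation-plus-tetrahedral-filling argument supplies the omitted coherence step.
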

\begin{proof}
We show that the composition is well-defined. Suppose, for
example, that the gauge transformation $g:\check C_0\longrightarrow\check C_1$ is
represented by equivalent pre-morphisms $\big(\,\widetilde{G}_1\,,\,\widetilde{\nabla}_1 \,,\,
{ \lambda}_1\, \big) $ and
$\big(\,\widetilde{G}^{\,\prime}_1\,,\,\widetilde{\nabla}^{\, \prime}_1 \,,\,
{ \lambda}^{\prime}_1\, \big) $, implemented by a triple
$(\mbf{G},
\mbf{\nabla} , \mbf{ \lambda }) $. Set $
\big(\,\widetilde{G}^{\,\prime}\,,\,\widetilde{\nabla}^{\,\prime}
\, \big) =\big(\,\widetilde{G}_1^{\, \prime}\,,\,\widetilde{\nabla}^{\, \prime}_1 \, \big)\circ
\big(\,\widetilde{G}_2\,,\,\widetilde{\nabla}_2 \, \big)$ and
$\lambda'=\lambda_1'+\lambda_2+\int_0^1\,
\ch(\,\nablatr'\,)+\dd(\kappa- \mbf\lambda)$, where the vector
bundle
with connection $(\mbf E',\nablatr'\,)$ on $M\times I\times I$ is
obtained from $(\mbf E,\nablatr)$ by replacing $\widetilde G_1$ with
$\widetilde G_1^{\,\prime}$. Then the condition
$\lambda_1^{\prime}= \lambda_1+\int_0^1\,\ch(\mbf\nabla)
+\dd\mbf\lambda$ immediately implies that $\big(\,\widetilde{G}\,,\,\widetilde{\nabla} \,,\, { \lambda}\, \big)$ and $
\big(\,\widetilde{G}^{\,\prime}\,,\,\widetilde{\nabla}^{\,\prime}
\,,\, { \lambda}^{\prime}\, \big)$ determine equivalent gauge transformations. We
leave the check of associativity to the interested reader.
\end{proof}

The composition law above defines a group structure on the set of all
gauge transformations on $\check\scrK_M$. The identity morphism is given by
$\unit_{(E,\nabla,C)}:(E,\nabla,C)\xrightarrow{(E\times I,\nabla\times1,0)}
(E,\nabla,C)$, with $\CS(\nabla,\nabla)=\int_0^1\, \ch(\nabla\times1)=0$. The inverse of a morphism
$(E,\nabla,C)\xrightarrow{(\,\widetilde{G},\widetilde\nabla,\lambda
  \ )}(E',\nabla',C'\,)$ is given by
$(E',\nabla',C'\,)\xrightarrow{(\,\widetilde{G}{}^{\,{\rm op}},\widetilde\nabla,-\lambda
  \, )} (E,\nabla,C)$. To see
this, we must show that the composition
$\big(\,\widetilde{G}\,'\,,\,\widetilde\nabla\,' \,,\,\lambda'
  \, \big)$ of these two morphisms is
equivalent to the identity $\unit_{(E,\nabla,C)}$. Consider the morphism
$(E',\nabla',C'\,)\xrightarrow{(\,\widetilde{G},\widetilde\nabla,\lambda
  \, )^{-1}} (E,\nabla,C)$ defined via path inversion
\beqa
\big(\,\widetilde{G}\,,\,\widetilde\nabla\,,\,\widetilde\lambda
  \ \big)^{-1}\Big|_{M\times t}:= \big(\,\widetilde{G}\,,\,\widetilde\nabla\,,\,\widetilde\lambda
  \ \big)\Big|_{M\times(1-t)}
\eeqa
for all $t\in I$. Note that $\big(\,\widetilde{G}\,'\,,\,\widetilde\nabla\,' \,,\,\widetilde\lambda\,'
  \, \big)\big|_{M\times t}$ is equal to  $ \big(\,\widetilde{G}\,,\,\widetilde\nabla\,,\,\widetilde\lambda
  \ \big)\big|_{M\times(2t)}$ for $0\leq t\leq\frac12$ and to $\big(\,\widetilde{G}\,,\,\widetilde\nabla\,,\,\widetilde\lambda
  \ \big)\big|_{M\times2(1-t)}$ for $\frac12\leq t\leq1$, so that
\beqa
\int_0^1\, \ch\big(\, \widetilde\nabla\,'\, \big) = 0 \ .
\eeqa
Furthermore, in this case we have
\beqa
\dd\int_0^1\, \ch(\nablatr) = \int_0^1\,\big( \ch(\, \widetilde
\nabla\, )-\ch(\, \widetilde\nabla\, )-\ch(\nabla\times 1)\big) = 0
\eeqa
and 
\beqa
\lambda-\lambda+\int_0^1\, \ch(\nablatr) = \int_0^1\, \ch(\nablatr) \ .
\eeqa
This shows that every morphism is invertible. 

This construction defines the \emph{category $\check\scrK_M$ of
  Ramond--Ramond fields} on the manifold $M$, which is a strictly
symmetric monoidal category; it defines an action groupoid and has a
physical interpretation in boundary string field theory, generalized
to incorporate superconnections~\cite{Freed:2006yc}. The objects are the gauge
fields $\check C=(E,\nabla,C)$, while the morphisms are the gauge
transformations $\big(\,\widetilde
G\,,\,\widetilde\nabla\,,\,\lambda\, \big)$ together with the group law
described above. The set of gauge equivalence classes is the group of
isomorphism classes which coincides with the differential K-theory
$\check\K{}^0(M)$. Moreover, we see that gauge transformations
generate the differential K-theory group $\check\K{}^1(M)$; this
follows from the description of $\check\K{}^1(M)$ given
in~\cite{Freed:2006yc} by integrating classes in
$\check\K{}^0(M\times S^1)$, which are trivial at each point of $S^1$,
over $S^1$.

Every object of the cocycle category of fields $\check\scrK_M$ is invertible: The inverse of
the gauge field $\check C=(E,\nabla,C)$ is $\check
C^{-1}=(E^{\rm op},\nabla,-C)$. Thus $\check\scrK_M$ is a Picard category. The zero
object (of the monoidal structure) has non-trivial
automorphisms; from the above construction it follows that the
corresponding automorphism group coincides with the group of flat
fields
\beqa
\Aut_{\check\scrK_M}(0)\cong\K^0(M;\torus) \ .
\eeqa

The categorical structure of the gauge theory configuration space may
be iterated to define higher categories $\check\scrK_M^k$ in a similar
manner. Let us describe explicitly the next member $\check\scrK_M^2$ in the
multi-categorical hierarchy. In the same way that we went from the
definition of the differential K-theory to the category
$\check\scrK_M=\check\scrK_M^1$, we now replace the equivalence relation on
(1-)morphisms by 2-morphisms. Then we lose the strict notion of
composition of 1-morphisms, as explained in~\S\ref{Confspace} Since every gauge transformation has an
inverse, we can consider the composition as a certain subset
$\Comp^1\subset \Hom_{\check\scrK^1_M}(\check C_0,\check C_1)\times
\Hom_{\check\scrK^1_M}(\check C_1,\check C_2)\times \Hom_{\check\scrK_M^1}(\check
C_2,\check C_0)$ such that $(g_1,g_2,g_3)\in \Comp^1$ if and only if $g_1\circ
g_2\circ g_3=\unit_{\check C_0}$. On the other hand, it is no longer true in general that $g_1$ and $g_2$ determine $g_3$ uniquely for the subset
$(g_1,g_2,g_3)\in \Comp^2\subset \Hom_{\check\scrK^2_M}(\check C_0,\check C_1)\times
\Hom_{\check\scrK^2_M}(\check C_1,\check C_2)\times \Hom_{\check\scrK_M^2}(\check
C_2,\check C_0)$.

We now explain some details of the construction. The objects of the
category $\check\scrK_M^2$ are the same as those for $\check\scrK_M^1$. Every
object $\check C$ has a canonical inverse $\check C^{-1}$. As in
$\check\scrK_M^1$, a morphism $\check C_0\longmapsto \check C_1$ is the same thing
as a morphism $\check C_0^{-1}+\check C_1\longmapsto 0$. Therefore, we need
only describe morphisms to the zero object. A morphism
$g:(E,\nabla,C)\longmapsto0$ is given by a triple $\big(\,\widetilde
G\,,\,\widetilde\nabla\,,\,\lambda \, \big)$, with the pair $\big(\,\widetilde
G\,,\,\widetilde\nabla \, \big)$ on $M\times I$ such that
$\big(\,\widetilde
G\,,\,\widetilde\nabla \,
\big)\big|_{M\times0}=(E,\nabla)$ and $\big(\,\widetilde
G\,,\,\widetilde\nabla \, \big)\big|_{M\times1}= 0$,
while $-C+\dd\lambda=\int_0^1\, \ch(\, \widetilde\nabla\, )$. Let
$g':(E,\nabla,C)\xrightarrow{(\,\widetilde
  G\,',\widetilde\nabla\,',\lambda'\, )} 0$ be another
morphism. A 2-morphism $\alpha:g\Longrightarrow g'$ is given by a triple
$(\mbf G,\mbf\nabla,\mbf\lambda)$, where $(\mbf G,\mbf\nabla)$ is
graded vector bundle with connection on $M\times I\times I$ such that
$(\mbf G,\mbf\nabla)\big|_{M\times 0\times I}= \big(\,\widetilde
G\,,\,\widetilde\nabla\, \big)$,  $(\mbf
G,\mbf\nabla,\mbf\lambda)\big|_{M\times I\times0} = (E\times
I,\nabla\times1)$ and $(\mbf
G,\mbf\nabla)\big|_{M\times 1\times I}= \big(\,\widetilde
G\,'\,,\,\widetilde\nabla\,' \, \big)$,  $(\mbf
G,\mbf\nabla)\big|_{M\times I\times1} = 0$, and $\mbf\lambda\in
\Omega(M;K^\bullet)^{-1}/\, {\rm im}(d)$ such that
$\lambda'= \lambda+\int_0^1\,\ch(\mbf\nabla)+\dd\mbf\lambda$. 

We declare a pair of 2-morphisms $\alpha_0,\alpha_1:g\Longrightarrow g'$ to be
equivalent if there is a triple $\big(\,\widetilde{\mbf
  G}\,,\,\widetilde{\mbf\nabla}\,,\, \widetilde{\mbf\lambda}\,\big)$,
where $\big(\,\widetilde{\mbf
  G}\,,\,\widetilde{\mbf\nabla}\, \big)$ is a $\IZ_2$-graded vector
bundle with connection
on $M\times I\times I\times I$ having the boundary values
\beqa
\big(\,\widetilde{\mbf G} \,,\,
\widetilde{\mbf\nabla} \, \big) \big|_{M\times
  0\times I\times I}= (\mbf{G}_0 ,\mbf{\nabla}_0) \quad &
\mbox{and} &
\quad \big(\,\widetilde{\mbf G}\,,\,
\widetilde{\mbf\nabla} \, \big) \big|_{M\times
  1\times I\times I}= (\mbf{G}_1 ,\mbf{\nabla}_1) \ , \nonumber\\[4pt]
\big(\,\widetilde{\mbf G}\,,\,
\widetilde{\mbf\nabla} \, \big) \big|_{M\times
  I\times 0\times I}= \big(\,
\widetilde{G}_0\times I \,,\,\widetilde{\nabla}_0\times1\, \big) \quad
& \mbox{and} &
\quad \big(\,\widetilde{\mbf{G}}\,,\,
\widetilde{\mbf\nabla} \, \big) \big|_{M\times
  I\times 1\times I}= \big(\, \widetilde{G}_1\times
I\,,\,\widetilde{\nabla}_1\times 1\, \big)  \ , \nonumber\\[4pt]
\big(\,\widetilde{\mbf G}\,,\,
\widetilde{\mbf\nabla} \, \big) \big|_{M\times
  I\times I\times 0}= (E\times I\times I,{\nabla}\times1\times1) \quad
& \mbox{and} &
\qquad \big(\,\widetilde{\mbf G}\,,\,
\widetilde{\mbf\nabla} \, \big) \big|_{M\times
  I\times I\times 1}= 0 \ ,
\eeqa
and $\widetilde{\mbf\lambda}\in\Omega(M;K^\bullet)^{0}/\, {\rm im}(d)$ such that
$\mbf\lambda_1-\mbf\lambda_0-\int_0^1\,
\ch(\,\widetilde{\mbf\nabla}\,) =\dd\widetilde{\mbf\lambda}$.

Composition of 2-morphisms is now defined in the same manner as
composition for 1-morphisms in the preceeding level of the
hierarchy. The set $\Comp^2\subset \Hom_{\check\scrK^2_M}(\check C_0+\check
C_1^{-1},0)\times \Hom_{\check\scrK^2_M}(\check C_1+\check C_2^{-1},0)\times
\Hom_{\check\scrK_M^2}(\check C_2+\check C_0^{-1},0)$ now consists of all
triples $(g_1,g_2,g_3)$ such that there exists a 2-morphism $g_1\circ g_2\circ
g_3\Longrightarrow \Symm$, where $\Symm$ represents the symmetric 2-morphisms. One
can show that all 2-morphisms are isomorphisms. The set of
isomorphism classes in $\check\scrK_M^2$ coincides with the set of morphisms
in $\check\scrK_M^1$. The composition of 2-morphisms is associative. The
composition of 1-morphisms is associative up to 2-morphisms. One
can also replace cylinders in the construction of $\check\scrK^k_M$ by
general bordisms. 

\subsubsection*{Geometric cocycles}

The index theory model of~\cite{BunkeSchick2007} readily
permits all constructions required in gauge theory, at the price of
introducing a very large configuration space, containing broad
classes of fields, some of which have no
interpretation in terms of D-branes wrapping cycles. Let $\pi:E\to M$
be a proper submersion with closed fibres and even-dimensional vertical bundle
$T^v\pi:=\ker(\dd\pi)$. Choose a
fibrewise riemannian metric on $T^v\pi$, and a complement $T^h\pi\subset TE$ which
defines a horizontal distribution. We pick an orientation on $T^v\pi$, and also a family of Dirac bundles
over $E$, i.e. a $\IZ_2$-graded hermitian vector bundle with
connection $(V,\nabla)$ on $E$ and Clifford multiplication
${\tt c} : T^v\pi\otimes V\to V$. In~\cite{Bunke2009,BunkeSchick2007} this collection of
data was subsumed into the notion of a geometric family $\bun$. A cocycle for a differential K-theory class in $\check\K{}^0(M)$ is a pair
$(\bun,\xi)$, where $\bun$ is a geometric family and $\xi \in
\Omega(M;K^\bullet)^{-1} /\, {\rm im}(\dd)$ is a class of differential
forms on $M$ of odd degree; the equivalence relations
on cocycles can be found in~\cite{BunkeSchick2007}. These classes are
extensions of those defined above in terms of vector bundles with
connection, to which they reduce when $\pi=\Id_M$ has fibre consisting
of just a point; within this model lie the well-developed and very
powerful techniques of local index theory, whose properties can be
used as a ``black box'' to efficiently carry out all constructions. 
The index of the family of Dirac operators $\Dirac(\bun)$ on a geometric
family $\bun$ over $M$ can be naturally considered as
an element of K-theory $\Index(\bun)\in \K^0(M)$; it defines the
characteristic class map $c:\check\K{}^0(M)\to \K^0(M)$ as
$$c\big([\bun,\xi] \big):=\Index(\bun) \ . $$
For a geometric family $\bun$, the local index form $\Omega(\bun)\in\Omega(M;K^\bullet)^0$~\cite{Bunke2009,BunkeSchick2007} is
the adiabatic limit of local traces of the heat kernel of a Bismut superconnection on the associated
Hilbert bundle $H(\bun)\to M$ with fibres $H_x:={\rm
  L}^2(E_x;V|_{E_x})$ for $x\in M$; it provides a canonical and explicit de~Rham
representative for the Chern character of the index of $\bun$ through
the index theorem for families~\cite{BGV} which reads $\ch\big(\Index(\bun)\big) = \big[\Omega(\bun)\big]$. 
The field strength map $F:\check\K{}^0(M)\to \Omega_\IZ(M;K^\bullet)^0$ is then given by 
$$F\big( [\bun,\xi] \big) :=\Omega(\bun)-\dd\xi \ . $$
A refinement of the Chern character homomorphism between differential
K-theory and differential cohomology in this model can be found
in~\cite{BunkeSchick2007}.

We will now explain how topologically non-trivial quantized Ramond--Ramond fields
naturally fit into this framework as cocycles in the absence of
D-brane sources, i.e. as induced solely by the closed string
background $(M,g)$. For this, we will give a physical interpretation of this description in
terms of brane-antibrane annihilation in boundary string field
theory by presenting a special class of cocycles which
exhibit the salient features of the construction of the
Ramond--Ramond field associated to a K-theory class in (\ref{RRfieldKclass}). In this setting a
{Ramond--Ramond field} on the manifold $M$ is taken to be a
pair $\check{C}=(\bun,-C)$, where $\bun$ is a
geometric family over $M$ and $C\in\Omega(M;K^\bullet)^{-1}$ (not taken modulo
${\rm im}(\dd)$). In the topologically trivial case one sets $\bun=\emptyset$
and the differential form $C$ is what is usually called the
Ramond--Ramond field of the Type~IIA theory.

Consider first the
Ramond--Ramond field $\check C=(\Vcal,0)$ for the geometric family
$\Vcal$ with underlying fibre bundle $\pi=\Id_M:M\to M$
having zero-dimensional vertical bundle. The $\IZ_2$-graded
bundle $V=V^+\oplus V^-\to M$ represents $\Index(\Vcal)=[V]:=[V^+]-[V^-]\in
\K^0(M)$. The geometric K-homology class
$$
[M,V^+,\Id_M]-[M,V^-,\Id_M]
$$
represents a brane-antibrane pair filling $M$ with Chan--Paton
bundles $(V^+,V^-)$~\cite{Reis:2005pp}. The connection $\nabla$ and
hermitian structure of the family of Dirac bundles $\Vcal$ is the extra dynamical
information on the D-branes encoded in the boundary string field
theory, which naturally defines an element of differential
K-theory. Every class in $\K^0(M)$ (and hence every D-brane on $M$)
can be realized via this construction as the index of a geometric
family~\cite{BunkeSchick2007}. Using the explicit expression
for the local index form $\Omega(\Vcal)$ in this
case~\cite{Bunke2009}, the corresponding Ramond--Ramond field strength is given by
\beqa
F(V) = \ch(V)\wedge\sqrt{\widehat{A}(M)} \ .
\eeqa
This construction thus reproduces the Moore--Witten
derivation~\cite{Moore2000} for the Ramond--Ramond field strength
(\ref{RRfieldKclass}) associated to the K-theory element
$[V]\in\K^0(M)$ classifying the background brane-antibrane system
wrapping $M$.

As in the case of geometric K-homology~\cite{Reis:2005pp}, the
inclusion of more general families $\check C=(\bun,-C)$ extends this
description to include non-representable brane-antibrane pairs filling
$M$ which are represented by the K-homology classes
\beqa
\big[E\,,\,\Index(\bun)\,,\,\pi\big] \ ,
\eeqa
where $\pi:E\to M$ is the underlying fibre bundle of $\bun$ and
$\Index(\bun)\in\K^0(M)$. This modifies the associated Ramond--Ramond
field strength to
\beqa
F\big(\Index(\bun)\big)=\Big(\, \int_{E/M}\, \ch(W)\wedge
\widehat{A}(T^v\pi) +\dd C \, \Big)\wedge \sqrt{\widehat{A}(M)} \ ,
\eeqa
where we have decomposed the family of Dirac bundles into the spinor
bundle of $T^v\pi$ as $V=\cals(T^v\pi)\otimes W$
for a twisting bundle $W\to E$ with metric and compatible connection.
Here the fibrewise integral is the curvature generated by the
(non-representable) background D-branes
given by the local index form
$\Omega(\bun)\in\Omega(M;K^\bullet)^0$, which depends only on the
geometric family $\bun$, while $\dd C$ is the contribution of a
topologically trivial Ramond--Ramond field, which as such is not
sourced by any D-branes.

\subsubsection*{Holonomy on D-branes}

In analogy to the model of differential cohomology provided by
Cheeger--Simons differential characters, which are $\uo$-valued
homomorphisms on the group of smooth cycles in $M$, one can define the
differential K-theory $\check\K{}^\bullet(M)$ as a group of
$\uo$-valued homomorphisms on the set of Baum--Douglas K-cycles for
geometric K-homology; these maps are called differential characters
for K-theory
in~\cite{Benamour} and are interpreted as holonomies on D-branes
in~\cite{Reis:2006th}. They are characterized by their restrictions to
boundaries of K-chains (\ref{Kchainbdry}), which are given by pairing
a certain differential form $\omega|_{\widetilde{W}}$ with the
index density $\ch\big(\,\widetilde{E}\,\big)\wedge\widehat{A}\big(\, \widetilde{W}\,\big)$.

As an explicit example, define the reduced eta-invariant
of a K-chain with boundary (\ref{Kchainbdry}) by
\beqa
\Xi\big(\,\widetilde{W}\,,\,\widetilde{E}\,,\,\widetilde{f}\
\big)=\mbox{$\frac12$} \,\Big(\dim\big({\mathcal{H}}^W_E \big)+
\eta\big(\Dirac_E^W\big) \Big) \ \in \ \IR/\IZ \ ,
\eeqa
where $\hil_E^W$ is the space of harmonic $E$-valued spinors on $W$,
and $\eta\big(\Dirac_E^W\big)$ is the spectral asymmetry of the
$E$-twisted Dirac operator $\Dirac_E^W$ on $W$ which is the meromorphic
continuation at $s=0$ of the absolutely convergent series
\beqa
\eta\big(s,\Dirac_E^M\big)=
\sum_{\lambda\in{\rm spec}^0(\Dirac_E^M)-0} \ 
\frac\lambda{|\lambda|^{s+1}}
\eeqa
for $s\in\IC$ with $\Re(s)\gg0$, with the sum taken over the spectrum of the
closure of $ \Dirac_E^M$ which is the bounded Fredholm operator
$\Dirac_E^M\, \big(1+(\Dirac_E^M)^2\big)^{-1/2}$. The map $\Xi$
respects disjoint union, direct sum and Baum--Douglas vector bundle
modification of K-chains, but \emph{not} spin$^c$ bordism~\cite{Benamour}. Then
the \emph{holonomy} of the flat D-brane background defined by
(\ref{virtualKchain}) is given by
\beqa
\Omega\big(\,\widetilde{W}\,,\,\widetilde{\xi}\,,\,
\widetilde{f}\,\big)=\exp\Big(2\pi\ii\big(\Xi(\,\widetilde{W},\widetilde{f^*E_0},
\widetilde{f}\,)-\Xi(\,\widetilde{W},\widetilde{f^*E_1},
\widetilde{f}\,)\big) \Big) \ \in \ \uo \ .
\eeqa

\subsection{Self-dual field theories\label{Selfduality}}

We will now formulate the self-duality property of Ramond--Ramond fields
more precisely. We do this first in the more general setting of
\S\ref{Gendiffcoh}

\begin{definition}
A generalized cohomology theory $\E^\bullet$ is \emph{Pontrjagin
  self-dual} if there exists a ``shift'' $s\in\IZ$ such that
\beqa
\E^\bullet(M)\cong \Hom_{\IZ}\big(\E_{\bullet+s}(M)\,,\,\IZ\big)
\eeqa
for all spaces $M$.
\label{Pontrjagindef}\end{definition}

In Def.~\ref{Pontrjagindef}, $\E_\bullet$ is the homology theory obtained from
$\E^\bullet$ through its \emph{spectrum} $\{\cale_k\}$ such that
$\E^k(M)$ is the set of homotopy classes of maps $M\to \cale_k$; there
is a homotopy equivalence between $\cale_k$ and the based loop space
$\Omega\cale_{k+1}$. The $\E$-homology is given by the directed limit
\beqa
\E_k(M)=\lim_{\stackrel{\scriptstyle \longrightarrow}{r}}\,
\pi_{k+r}\big(M_+\wedge\cale_k \big) \ ,
\eeqa
where $M_+=M\sqcup m_0$ is the one-point compactification of $M$ by
a fixed base point $m_0\in M$, and $X\wedge Y= X\times Y\big/(X\times
y_0\sqcup x_0\times Y)$ is the smash product of locally compact
spaces. Def.~\ref{Pontrjagindef} is equivalent to the statement that
for each $k\in\IZ$ the natural pairing of real vector spaces
\beq
\check\iota\,:\, E^{k-s}\otimes E^{-k} \ \longrightarrow \ \IR
\label{iotaEd}\eeq
is non-degenerate, where $E^k:=\E^k(\pt)\otimes_\IZ \IR$.

\begin{definition}
A \emph{self-dual} generalized abelian gauge theory on a compact
$n$-dimensional smoothly $\E$-oriented manifold $N$ consists of a
Pontrjagin self-dual multiplicative cohomology theory $\E^\bullet$
with shift $s\in\IZ$, and its associated configuration space of gauge fields
$\check\E{}^d(N)$ for some $d\in\IZ$, together with a natural isomorphism
\beq
\Psi^{-1}\,:\, \E^d(N) \ \xrightarrow{ \ \approx \ } \ \E^{n-s+1-d}(N) \ .
\label{phiEd}\eeq
\end{definition}

Let $N$ be an $\E$-oriented riemannian manifold of dimension $n$, and
$M=\IR\times N$ the associated lorentzian spacetime of signature
$(1,n)$. The \emph{self-duality equations} for the gauge field
$F\in\Omega(M;E^\bullet)^d$ read
\beq
\dd F=0 \qquad \mbox{and} \qquad \Psi^{-1}(F)=\check\iota(\star F) \ \in \
\Omega(M;E^\bullet)^{n-s+1-d} \ ,
\label{selfdualityeqsEd}\eeq
where the first equation is the Bianchi identity while the second
equation is the self-duality condition. Here
\beqa
\Psi^{-1}\,:\, \Omega^k(M;E^{d-k}) \ \longrightarrow \
\Omega^k(M;E^{n-s+1-d-k})
\eeqa
is induced by the isomorphism (\ref{phiEd}), the map
\beqa
\star\,:\, \Omega^{n+1-k}(M;E^{d+k-n-1}) \ \longrightarrow \
  \Omega^k\big(M\,; \, (E^{d+k-n-1})^*\big)
\eeqa
is the lorentzian Hodge duality operator, and
\beqa
\check\iota\,:\, \Omega^k\big(M\,;\, (E^{d+k-n-1})^*\big) \
\longrightarrow \ \Omega^k(M;E^{n-s+1-d-k})
\eeqa
is induced by the pairing (\ref{iotaEd}). The equations
(\ref{selfdualityeqsEd}) define a first order linear hyperbolic
differential equation, so a solution is determined at any fixed time
$t\in\IR$; whence the space of solutions is isomorphic to the real
vector space $\Omega_{\rm cl}(N;E^\bullet)^d$. 

Note that the \emph{classical flux} $[F]_{\rm dR}$
defines a map $\Omega_{\rm cl}(N;E^\bullet)^d\to
\RH(N;E^\bullet)^d$. In the semi-classical
theory with Dirac charge quantization, the gauge field is a geometric
representative of a class in $\check\E{}^d(M)$; the space of classical
solutions on $M$ is then the differential cohomology group
$\check\E{}^d(N)$.

Incorporating sources into a self-dual gauge theory further requires
an isomorphism between electric and magnetic currents $j_e$ and $j_m$, as well as a
quadratic refinement of the bilinear pairing $\check\E{}^d(N)\otimes
\check\E{}^{n-s+1-d}(N)\to \torus$ between the corresponding
differential cocycles $\check j_e$ and $\check j_m$.

\subsubsection*{Type~II Ramond--Ramond fields}

In our main application, we take $n=9$ and $\E^\bullet=\K^\bullet$ to
be complex K-theory, so that $s=0$. A Ramond--Ramond field on a
compact riemannian spin manifold $N$ has a gauge equivalence in the
differential K-theory $\check\K{}^j(N)$, where $j=0$ for Type~IIA
string theory and $j=-1$ for Type~IIB. The K-theory of a point is
given by the Laurent polynomial ring $\K^\bullet(\pt)\cong \IZ(u)$, where $u$ has degree two and
the dual involution maps $u\mapsto u^*:=u^{-1}$. The automorphism
$\Psi^{-1}$ is the Adams operation on K-theory which acts as
complex conjugation, with $u\mapsto -u$. The lift of the Adams operation
$\check\Psi{}^{-1}$ to differential K-theory is then given
by
\beq
\check\Psi{}^{-1}\big([\check C]\big) = u^\ell \ \overline{[\check C]}
\ ,
\label{Adamsop}\eeq
where $\ell=5$ for Type~IIA and $\ell=6$ for Type~IIB, and if $[\check
C]\in\check\K{}^j(N)$ is represented by a complex vector bundle $E\to
N$ with connection $\nabla$, then the class $\overline{[\check C]}$ is
represented by the complex conjugate bundle $\overline{E}\to N$ with
conjugate connection $\overline{\nabla}$; a model independent
construction of all Adams operations $\check\Psi{}^k$, $k\in\IZ$, is
given in~\cite{Bunke2009b}. The self-duality equations
for the field strengths $F\in\Omega\big(\IR\times N,K^\bullet \big)^j$ are
then as described in~\S\ref{RRGT}

\bigskip

\section{Quantization of generalized abelian gauge
  fields\label{HamQuant}}

\subsection{Quantum actions and partition functions\label{Pathquant}}

We begin with some general remarks about the approach to the
quantization of abelian gauge theories that
we shall pursue. Recall that a generalized abelian gauge field on a
manifold $M$ is an
object of a suitable groupoid $\check\scrE{}^d(M)$ whose isomorphism
class sits in the generalized differential cohomology group
$\check\E{}^d(M)$; this semi-classical quantization of the gauge
theory leads to integrality of coupling constants and secondary
invariants, and also to Dirac charge quantization. Once we have identified the configuration space $\check\scrE{}^d(M)$
of a generalized abelian gauge theory, in the functional integral
approach to quantization we ``integrate'' over the isomorphism classes
$\check\E{}^d(M)$ using a suitable translation invariant measure; such
a Haar-like measure exists at least formally for gaussian fields and is induced
by the riemannian metric on $M$. Let us briefly explain the meaning of such an
integration, illustrated through several explicit examples.

To set up the path integral of the gauge theory, we regard the set of local
fields $\scrF$ as a certain covariant functor from the (opposite) category of smooth
manifolds with suitable morphisms to the category of sets. Locality of the fields is
implemented by the requirement that the functor $\scrF$ satisfies a
Mayer--Vietoris sheaf property, i.e. there is a pullback square
\beqa
\xymatrix{
\scrF(U\cup V) \ \ar[d] \ar[r] & \ \scrF(V) \ar[d] \\
\scrF(U) \ \ar[r] & \ \scrF(U\cap V)
}
\eeqa
for any pair of open charts $U,V$. In most of our examples we take
$\scrF=\Omega^q$, with $q=0$ corresponding to scalar fields, $q=1$ to
gauge fields, and so on. In our gauge theory examples we can
generalize this to require that the fields be valued in the suitable
configuration space, which requires
replacing the category of sets such that one considers sheaves of
groupoids, higher groupoids,
or even $\infty$-groupoids; such is the case for e.g. double covers
whose target is the category of simplicial sets.

Let $\Bord_m$ be the bordism category of smooth $m$-manifolds; an
object of $\Bord_m$ is a closed $m-1$-manfiold $N$, while a morphism from
$N_0$ to $N_1$ is an $m$-manifold $M$ with boundary $\partial M=N_0
\sqcup N_1$ and composition defined by gluing. Given a collection of fields
$\scrF$, the bordism category $\Bord_m(\scrF)$ enriched by $\scrF$ has
the same objects, but its morphisms are extended to pairs $(M,\Phi)$ where
$\Phi\in\scrF(M)$. Let $\Vect_\IC$ be the category of complex vector
spaces with linear transformations. The partition function of our gauge theory is a monoidal
functor
\beq
\scrZ_\scrF \,:\, \Bord_m(\scrF) \ \longrightarrow \ \Vect_\IC \ ,
\label{BordQFT}\eeq
which sends disjoint unions to tensor products. Semi-classical
quantization corresponds to restricting $\scrZ_\scrF$ so that it takes
values in
invertible objects of $\Vect_\IC$; if $M$ is closed then
$\scrZ_\scrF(M,\Phi)\in\IC^\times$ and we write $\scrZ_\scrF(M,\Phi)=:\exp\ii
S_M[\Phi]$. See~\cite{FHLT} for further details and constructions.

Let us look at a simple example of a higher abelian gauge theory to
demonstrate the need for using cocycles as objects in a suitable
category in order to formulate the path integral of the quantum theory. In
the setting of \S\ref{GAGT}, the field content of a generic higher abelian
gauge theory is $\Phi=(g,\check j_e,\check j_m,F)$ where $g$ is the
metric of $M$, the differential cocycles $\check j_e$ and $\check j_m$
are smooth refinements of electric and magnetic current forms
$j_e=\dd\star F\in
\Omega^{n-p+2}(M)$ and $j_m=\dd F\in\Omega^{p+1}(M)$, and
$F\in\Omega^p(M)$. We take $p=1$, and set $j_e=\sum_{x\in W_e} \, q_e(x)\, \PD_M(x)$ where $q_e(x)\in\IR$ are electric
charges inserted at a collection of points $x\in W_e\subset M$. If $j_m=0$, then
$\dd F=0$ and $F$ can be refined to a differential cohomology class
$\lambda\in\check\RH{}^1(M)= \Omega^0(M;\uo)$, i.e. a smooth map
$\lambda:M\to S^1$. Then $F=\dd\log\lambda$. In the quantum gauge
theory, exponentiation of the action functional (\ref{GAGTaction}) gives
\beq
\exp \ii S_M[\lambda] = \exp\Big(-\frac\ii2 \, \int_M\,
\frac{\dd\lambda\wedge\star\dd\lambda}{\lambda^2} \, \Big) \ \prod_{x\in
  W_e}\, \lambda(x)^{q_e(x)} \ ,
\label{GAGTquaction}\eeq
which is well-defined and $\IC$-valued provided that electric charge is quantized,
$q_e(x)\in\IZ$. Suppose now that $j_m\neq0$. Since $j_m=\dd F$ is
trivialised, we can refine it to a class in the differential
cohomology $\check\RH{}^2(M)$, represented by a hermitian line bundle with
connection $(L,\nabla)$ on $M$. We now take $\lambda\in \Omega^0(M;L)$
and set $F=\dd_\nabla\log\lambda$ so that $j_m=\nabla^2$ is the
curvature of $\nabla$. Then the product in (\ref{GAGTquaction}) lies in the fibres $\bigotimes_{x\in
  W_e}\, (L_x)^{\otimes q_e(x)}$, and the quantum action
(\ref{GAGTquaction}) takes values in a line bundle (rather than
in~$\IC$). The line bundle is an obstruction to defining the path
integral and it represents an \emph{anomaly}. ``Anomaly cancellation'' corresponds to a trivialization of
this line bundle; see~\cite{Freed:2000ta} for further details,
and~\cite{Freed:2000ta,Freed:2000tt} for an extension to
Green--Schwarz anomaly cancellation in Type~I superstring theory.

More generally, the product in (\ref{GAGTquaction}) is replaced with
\beqa
\chi(W) = \exp\Big(\ii \int_W\, q_e\, A\big|_W\Big)
\eeqa
for a $p$-brane $W$. Locality requires that
$\chi\in\Hom_{\scrAb}(Z_{p-1}(M),\uo)$. The equations of motion imply that
\beqa
\chi(W'\,)= \chi(W) \, \exp\Big(\ii\int_B\, q_e\, F\Big)
\eeqa
if $B$ is a bordism between $W$ and $W'$. This means that $\chi$ is a
Cheeger--Simons differential character (Def.~\ref{CheegerSimonsdef}), and $q_e\,
F\in{\rm im}(\delta_1)$ where $\delta_1:\check\RH{}^p(M)\to
\Omega^p_\IZ(M)$ is the field strength map $\delta_1(\chi)=F_\chi$. When $p=0$ and the field strength $F$ is produced by a
magnetic brane as above, we immediately arrive at Dirac quantization
of charge $q_e\, q_m\in2\pi\, \IZ$, as argued from a different
perspective in \S\ref{Maxwellsemi} More generally, the charges live in
the lattice obtained from the image of integral cohomology in
(\ref{classchargegp}). In dimensions $m=n+1=4s+3$ with $p=2s+2$, one can
also add a Chern--Simons term
\beqa
\exp\big(\ii \langle\check A,\check A\rangle \big) \ .
\eeqa
For $s=0$ this term is well-defined by picking a spin structure on the
three-manifold $M$.

One can
also have charges in images of generalised cohomology theories. For example, in
Type~II superstring theory the D-brane $(W,E,f)$ carries a vector bundle
with connection $(E,\nabla)$. If $C$ is a globally-defined
Ramond--Ramond gauge potential, then the product in
(\ref{GAGTquaction}) is replaced with
\beqa
\exp\Big(\ii\int_W\, Q(W,E,f)\wedge C\big|_W\Big)
\eeqa
where $Q(W,E,f)$ is the charge vector (\ref{Dbranecharge}).

Let us finally consider an example from M-theory. The field content
$\Phi=(\sigma,g,A,\psi,C)$ of supergravity on an 11-dimensional spin
manifold $M$ consists of a spin structure $\sigma$ on $M$, a riemannian
metric $g\in\Omega^0(M;T^*M\otimes T^*M)$, a connection one-form $A$ on a
principal bundle over $M$, a twisted spinor field $\psi\in\Omega^0(M;T^*M\otimes \cals_M)$
where $\cals_M$ is the twisted spin bundle of $M$, and an abelian gauge
potential $C\in\Omega^3(M)$ with field strength $G=\dd C\in
\Omega^4(M)$. The relevant terms in the quantum supergravity action
are
\beq
\exp\Big(\ii\int_M\, G\wedge\star\, G+ 2\ii \int_M\, \big(C\wedge
G\wedge G-C\wedge I_8(g)\big) \Big)
\label{11daction}\eeq
where $I_8(g)=\frac1{48}\, \big(4p_2(M)-p_1(M)^{\wedge 3}\big)$ and
$p_k(M)\in \RH^{4k}(M;\IZ)$ are the Pontrjagin classes of the tangent
bundle of $M$. For topologically non-trivial fields, we refine the
three-form $C$ to a class $\check C\in\check\RH{}^4(M)$. Then the
topological terms in (\ref{11daction}) refine to $\exp\big(\frac\ii6\,
\int^{\check\RH{}}\!\!\!\int_M \, \check C\smile\check C\smile\check
C\big)$; making this term well-defined requires a cubic refinement of
the trilinear form $\check\RH{}^4(M)\times \check\RH{}^4(M)\times
\check\RH{}^4(M)\to \torus$ defined by it. Since $K(\IZ,4)=B\E_8$ the
charge $c\in\RH^4(M;\IZ)$ is an isomorphism class of a principal
$\E_8$-bundle over $M$~\cite{Diaconescu:2000wy} (up to approximation
on the skeleton of $M$). The groupoid of fields $\check\scrH^4(M)$
consists of cocycles $\check C=(P,\nabla, C)\in\check Z{}^4(M)$, where
$P\to M$ is an $\E_8$-bundle with connection $\nabla$ and
$ C\in\Omega^3(M)$; gauge transformations connect cocycles
$(P,\nabla, C)$ and $(P',\nabla', C'\,)$ with
$C'-C =\CS(\nabla,\nabla'\,)+F_\chi$ for some $\chi\in\check\RH{}^3(M)$. This implies that $G= F(\check
C)= \Tr\big(
F_\nabla\wedge F_\nabla \big) -\frac12\,\Tr\big(R(g)\wedge R(g)\big)
+\dd C$ is gauge-invariant, where
$F_\nabla=\nabla^2$ is the curvature of the connection $\nabla$ and
$R(g)$ is the curvature two-form of the metric $g$ with
$\Tr\big(R(g)\wedge R(g)\big)=\frac14\, p_1(M)$;
see~\cite{Witten:1996md,Diaconescu:2003bm,Freed:2004yc} for further
details.

The functor (\ref{BordQFT}) with values in an invertible quantum field
theory can be formally gotten by performing the functional integral
over the configuration groupoid $\check\scrH{}^p(M)$, with
$\pi_0\check\scrH{}^p(M)= \check\RH{}^p(M)$, of free higher
abelian gauge theory, which is studied in~\cite{Kelnhofer:2007jf} using techniques of covariant
quantization on compact closed manifolds: For $\Phi=(g,\check A)$ with
$g$ a riemannian metric on $M$ and $[\check A]\in\check\RH{}^p(M)$,
by~\cite[Thm.~4.5]{Kelnhofer:2007jf} the partition function is
rigorously defined by the formula
\beq
\scrZ_p(M) = \prod_{j=0}^{p-1}\, \bigg(\,
\frac{\det'\big(\dd^\dag\,\dd\big|_{\Omega^j(M)\cap{\rm
      im}(\dd^\dag)}\big)}{{\rm
    vol}\big({\tt harm}^j(M)\,\big/\,{\tt harm}_\IZ^j(M)\big)^2}\,
\bigg)^{\frac12\,(-1)^{p-j}} \ \Theta_p(M) \ \big|\Tor\,
\RH^p(M;\IZ)\big|
\label{partfnrig}\eeq
where 
$\Theta_p(M)=\sum_{f\in{\tt harm}^p_\IZ(M)}\,
\exp\big(-\frac12\,\int_M\, f\wedge\star f\big)$ is a Riemann
theta-function on the lattice of harmonic $p$-forms $f$ on $M$,
i.e. $\dd f=0=\dd\star f$, with
integer periods; it can be interpreted as a section of a line bundle
over $\check\RH{}^p(M)$, of the type that arises in Chern--Simons theory. The product in (\ref{partfnrig}) arises from a formal
gaussian integral over oscillator modes $F_0+\dd a$, with
$a\in\Omega^{p-1}(M)$ modulo the small gauge invariances
$a\mapsto a+\dd\varepsilon$, $\varepsilon\mapsto \varepsilon+\dd
\eta$, and so on; it can be interpreted as a sort of analytic
torsion~\cite{Belov:2006jd}, i.e. as a Quillen norm of a section of
some determinant
line bundle over the space of metrics on $M$. In~\cite{Kelnhofer:2007jf} it is shown that the
partition function (\ref{partfnrig}) exhibits an electric-magnetic
duality relation wherein $\scrZ_p(M)$ and $\scrZ_{n+1-p}(M)$ are
proportional to each other, which follows from Hodge theory and the
Poisson resummation formula. A functional integral
approach to the quantization of self-dual higher abelian gauge fields
is similarly described in~\cite{Belov:2006jd} using higher-dimensional
Chern--Simons theories, and further elucidated
in~\cite{Monnier:2010ww}; an analogous path integral quantization of Ramond--Ramond
gauge theory is carried out in~\cite{Belov:2006xj}.

\subsection{Hamiltonian quantization\label{Quant}}

Another approach to constructing the functor (\ref{BordQFT}) into an
invertible quantum field theory is to categorify the partition
function to the Hilbert space of the quantum field theory; as the
partition function is generally valued in a line bundle, the Hilbert
space is thus valued in a gerbe. In the remainder of this article we
will explain how to construct this Hilbert space; we set $\check
j_e=0=\check j_m$ for the rest of the section.

Consider again the spacetime manifold $M=\IR\times N$ where $N$
is a compact $\E$-oriented $n$-dimensional riemannian manifold. As we demonstrate below, the configuration space of a
free generalized abelian gauge theory on $M$ is the generalized
differential cohomology group $\check\E{}^d(N)$. Heuristically, the general principles of hamiltonian
quantization suggest that the
\emph{Hilbert space} of the quantum field theory on which the fields
act as operators is the space $\hil={\rm L}^2\big(\check\E{}^d(N)\big)$ of
square integrable functions on the manifold $\check\E{}^d(N)$ with
respect to a suitable measure. The problem, however, is that the
differential cohomology $\check\E{}^d(N)$ is an infinite-dimensional
vector space, so it is tricky to define measures on it. Instead, we
will approach the problem of quantization from a group theory
perspective, and appeal to the representation theory of the Heisenberg
group. This will identify the quantum Hilbert space as a
representation of a certain Heisenberg extension of $\check\E{}^d(N)$~\cite{Freed:2006ya,Freed:2006yc}.

Recall the classical definition of a Heisenberg group $\Heis(V,\omega)$
associated to a symplectic vector space $(V,\omega)$: It is a central extension of the translation group $V$ by the
circle group $\uo=\{z\in\IC \ | \ |z|=1\}$. Topologically
$\Heis(V,\omega)\cong V\times\uo$ with a twisted multiplication
\beqa
(v_1,z_1)\cdot (v_2,z_2) = \big(v_1+v_2\,,\,\e^{\pi \ii
  \omega(v_1,v_2)}\, z_1\, z_2 \big) \ .
\eeqa

The idea behind the relevance of the Heisenberg group in quantization is
as follows. Let $\cG$ be a topological abelian Lie group with Haar measure,
and $\widehat{\cG}$ the Pontrjagin dual group of characters $\chi:\cG\to
\uo$. The groups $\cG$ and $\widehat{\cG}$ both act on the Hilbert space
$\hil:= {\rm L}^2(\cG)$, respectively as the translation or ``momentum'' operators
\beqa
(T_h\psi)(g)=\psi(g+h)
\eeqa
and as the multiplication or ``position'' operators
\beqa
(M_\chi\psi)(g)=\chi(g)\, \psi(g)
\eeqa
for $\psi\in\hil$, $h,g\in \cG$, and $\chi\in\widehat{\cG}$. The Hilbert
space $\hil$ is not a representation of $\widetilde{\cG}=\cG\times
\widehat{\cG}$, since
\beq
T_h\circ M_\chi=\chi(h)\, M_\chi\circ T_h \ .
\label{ThMchi}\eeq
But the commutation relations (\ref{ThMchi}) can be thought of as
originating from a suitable cocycle, and $\hil$ \emph{is} a representation of the Heisenberg group
$\Heis(\, \widetilde{\cG}\, )$ associated to $\widetilde{\cG}$, which is a certain central extension of
$\widetilde{\cG}$ by $\uo$; specifically, $\hil$ is the unique
Stone--von~Neumann representation of $\Heis(\, \widetilde{\cG}\,)$. We
will now proceed to define these concepts precisely.

\subsection{Heisenberg groups and their representations\label{Heisenberg}}

We will begin by collecting some general results concerning
central extensions of Lie groups and their representations,
following~\cite{Freed:2006ya}.

\begin{definition}
Let $\mathcal{G}$ be an abelian Lie group.
A \emph{generalized Heisenberg group} is a Lie
group ${\tt Heis}(\mathcal{G})$ which sits inside the exact sequence
\begin{displaymath}
1~\longrightarrow~\uo ~\longrightarrow~
{\tt Heis}(\mathcal{G})~\longrightarrow~\mathcal{G}~
\longrightarrow~{0} \ ,
\end{displaymath} 
with the circle group $\uo$ contained in the center ${{Z}_{{\tt
 Heis}(\mathcal{G})}}$. We will further require that the group
manifold of ${\tt Heis}(\mathcal{G})$ is a smooth, locally trivial
circle bundle over $\mathcal{G}$. This is guaranteed by assuming the
group $\mathcal{G}$ fits inside the exact sequence~\cite{Freed:2006ya}
\begin{displaymath}
\xymatrix{1 \ \ar[r]& \ \pi_{1}(\cG) \ \ar[r]& \ \mathfrak{g} \
  \ar[r]^{\exp}& \ \cG \ \ar[r]& \ \pi_{0}(\cG) \ \ar[r]& \ 0}
\end{displaymath}
where $\mathfrak{g}$ is the Lie algebra of $\cG$, and $\exp$ is the
exponential map. A generalized Heisenberg group is
said to be \emph{maximally noncommutative} if ${Z}_{{\tt
  Heis}(\mathcal{G})}=\uo$. A maximally noncommutative
generalized Heisenberg group is simply called a Heisenberg group.
\end{definition}

Any smooth map $c:\mathcal{G}\times\mathcal{G}\to\uo$ satisfying
the cocycle condition 
\begin{displaymath}
c(g_{1},g_{2})\,c(g_{1}+g_{2},g_{3})=c(g_{1},g_{2}+g_{3})\,c(g_{2},g_{3})
\end{displaymath}
for all ${g_{1},g_{2},g_{3}\in\mathcal{G}}$ defines a generalized
Heisenberg group denoted ${\tt Heis}(\mathcal{G};c)$. As a manifold
${\tt Heis}(\mathcal{G};c)$ is topologically the product
$\mathcal{G}\times\uo$, while the multiplication is defined by
\begin{equation}
(g_{1},z_{1})\cdot(g_{2},z_{2}):=
\big(g_{1}+g_{2}\,,\,c(g_{1},g_{2})\,z_{1}\,z_{2}\big)
\label{Heismult}\end{equation}
for all ${g_{1},g_{2}\in\mathcal{G}}$ and
$z_{1},z_{2}\in\uo$. From the group cocycle $c$ we construct the
\emph{commutator map}
\beq
s\,:\,\mathcal{G}\times\mathcal{G}~\longrightarrow~\uo
\label{commmap}\eeq
defined by
\begin{equation}\label{commcocycle}
s(g_{1},g_{2}):=\frac{c(g_{1},g_{2})}{c(g_{2},g_{1})}
\end{equation}
for all ${g_{1},g_{2}\in\mathcal{G}}$. The commutator map $s$ enjoys
the following properties:
\begin{enumerate}
\item From the definition (\ref{Heismult}) of the group operation in
  $\Heis(\cG;c)$ and (\ref{commcocycle}) it follows that the group commutator is given by
\beq
\big[(g_1,z_1) \,,\, (g_2,z_2)\big]= \big(0\,,\, s(g_1,g_2)\big) \ .
\label{Heisgroupcomm}\eeq
\item $s$ is alternating:
\beqa
s(g,g)=1 \ .
\eeqa
\item $s$ is bimultiplicative:
\beqa
s(g_1+g_2,h)=s(g_1,h)\, s(g_2,h) \qquad \mbox{and} \qquad
s(g,h_1+h_2)= s(g,h_1)\, s(g,h_2) \ .
\eeqa
\item $s$ is skew-symmetric:
\beqa
s(g,h) = s(h,g)^{-1} \ .
\eeqa
\end{enumerate}

Given a smooth map $f:\mathcal{G}\to\uo$, consider the cocycle
$\tilde{c}$ for the group $\mathcal{G}$ defined by
\begin{displaymath}
\tilde{c}(g_{1},g_{2}):=\dfrac{f(g_{1}\,g_{2})}{f(g_{1})\,f(g_{2})}\,
c(g_{1},g_{2}) \ .
\end{displaymath}
We say that $\tilde{c}$ and $c$ differ by a
\emph{coboundary}. The map
\begin{displaymath}
(g,z)~\longmapsto~\big(g\,,\,f(g)\,z\big)
\end{displaymath}
induces an isomorphism
\begin{displaymath}
{\tt Heis}(\mathcal{G};c)~\xrightarrow{ \ \approx \ }~
{\tt Heis}(\mathcal{G};\tilde{c}\,) \ .
\end{displaymath}
It follows easily that
\begin{displaymath}
\tilde{s}(g_{1},g_{2})=s(g_{1},g_{2})
\end{displaymath}
for all ${g_{1},g_{2}}\in\mathcal{G}$. A complete characterization of
generalized Heisenberg groups is given in~\cite{Freed:2006ya}.

\begin{proposition}\label{class} Given an abelian Lie group
  $\mathcal{G}$, any generalized Heisenberg group ${\tt
    Heis}(\mathcal{G})$ is of the form ${\tt Heis}(\mathcal{G};c)$ for
  some cocycle $c$, and ${\tt Heis}(\mathcal{G})$ is uniquely
  determined up to isomorphism by its commutator map
  (\ref{commmap}). Conversely, every alternating and bimultiplicative
  map $s:\mathcal{G}\times\mathcal{G}\to\uo$ uniquely
  determines a Heisenberg group ${\tt Heis}(\mathcal{G})$ up to
  isomorphism.
\end{proposition}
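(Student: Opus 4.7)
The plan is to treat the two directions of the proposition separately, the key structural inputs being the assumption that $\Heis(\cG)\to \cG$ is a smooth locally trivial principal $\uo$-bundle and the 2-divisibility of $\uo$ (equivalently, that one can divide by $2$ in the Lie algebra $\mathfrak{g}$ of $\cG$). In the forward direction, I would first choose a smooth section $\sigma:\cG\to \Heis(\cG)$ of the projection — globally if the bundle is trivial and otherwise locally, patching via the cocycle machinery described in the paragraph preceding the proposition. Given such a $\sigma$, define $c:\cG\times\cG\to \uo$ by $\sigma(g_1)\,\sigma(g_2)=c(g_1,g_2)\,\sigma(g_1+g_2)$, where $\uo\hookrightarrow \Heis(\cG)$ is the central embedding. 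Associativity of multiplication in $\Heis(\cG)$ is exactly the statement that $c$ is a $\uo$-valued 2-cocycle on $\cG$, and the map $(g,z)\mapsto z\cdot \sigma(g)$ identifies $\Heis(\cG;c)$ with $\Heis(\cG)$. A different choice of section $\sigma'=f\cdot \sigma$ with $f:\cG\to \uo$ replaces $c$ by a cohomologous cocycle $\tilde c$ as described in the preceding paragraph, so the isomorphism class depends only on the cohomology class. A direct computation from (\ref{Heismult}) and the centrality of $\uo$ shows that the group commutator $[\sigma(g_1),\sigma(g_2)]$ equals $(0,s(g_1,g_2))$ with $s$ given by (\ref{commcocycle}), which is manifestly invariant under the replacement $c\mapsto \tilde c$.

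The uniqueness-up-to-isomorphism statement given $s$ is the heart of the matter. Suppose $c_1,c_2$ are cocycles with the same commutator map. Then $b:=c_1/c_2$ is a \emph{symmetric} bimultiplicative 2-cocycle. Using the 2-divisibility of $\uo$, set $f(g):=b(g,g)^{1/2}$; a short computation using bimultiplicativity and symmetry of $b$ gives $f(g_1+g_2)/\bigl(f(g_1)\,f(g_2)\bigr)=b(g_1,g_2)$, so $b$ is a coboundary, and the map $(g,z)\mapsto (g,f(g)\,z)$ is the required Lie group isomorphism $\Heis(\cG;c_1)\xrightarrow{\approx}\Heis(\cG;c_2)$. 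For the converse, given an alternating, bimultiplicative $s$, define
\begin{equation*}
c(g_1,g_2):=s(g_1,g_2)^{1/2} \ ,
\end{equation*}
where the square root is taken by first lifting to $\mathfrak{g}\times\mathfrak{g}$ via the exponential sequence postulated in the hypothesis, dividing by $2$ there (where this is unambiguous), and then exponentiating back to $\uo$. Bimultiplicativity of $s$ then yields the cocycle identity for $c$ by inspection: both sides of $c(g_1,g_2)\,c(g_1+g_2,g_3)=c(g_1,g_2+g_3)\,c(g_2,g_3)$ reduce to $s(g_1,g_2)^{1/2}\,s(g_1,g_3)^{1/2}\,s(g_2,g_3)^{1/2}$. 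Finally, the commutator cocycle of $c$ is $s(g_1,g_2)^{1/2}/s(g_2,g_1)^{1/2}=s(g_1,g_2)^{1/2}\cdot s(g_1,g_2)^{1/2}=s(g_1,g_2)$, where skew-symmetry of $s$ was used. Combined with the uniqueness argument, this establishes the bijection between isomorphism classes of Heisenberg groups and alternating bimultiplicative maps $\cG\times\cG\to \uo$.

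The main obstacle in the argument is the global smoothness and single-valuedness of the square root $s^{1/2}$ when $\cG$ has torsion or disconnected components: on the identity component the lift via $\mathfrak{g}$ is canonical up to an overall constant, but extending across $\pi_0(\cG)\times \pi_0(\cG)$ requires a coherent choice of representatives. This is precisely where the postulated exponential sequence $1\to\pi_1(\cG)\to\mathfrak{g}\to \cG\to \pi_0(\cG)\to 0$ is essential; it reduces the problem to finding a section of $s^{1/2}$ on $\pi_0(\cG)\times \pi_0(\cG)$, whose failure would be detected by a class in $\RH^2\bigl(\pi_0(\cG);\bbZ/2\bigr)$. In the applications of interest, where $\cG=\check\E{}^d(N)$ with the torus/vector-space/discrete decomposition described in Property~6 of \S\ref{GAGT}, this obstruction vanishes and the construction is unambiguous, yielding the maximally noncommutative Heisenberg extension whose irreducible representation will serve as the quantum Hilbert space.
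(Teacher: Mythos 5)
The paper offers no proof of Prop.~\ref{class} --- it is quoted from Freed--Moore--Segal \cite{Freed:2006ya} --- so your argument must stand on its own, and it has two genuine gaps, both at exactly the points you flag as delicate. In the uniqueness step you assert that if $c_1$ and $c_2$ have the same commutator map then $b:=c_1/c_2$ is a \emph{symmetric bimultiplicative} cocycle. Symmetry is immediate, but bimultiplicativity is not: a cocycle obtained from a section of the extension is in general not bimultiplicative, and the ratio of two such is merely a symmetric $2$-cocycle, i.e.\ the class of an \emph{abelian} extension of $\cG$ by $\uo$. Your square-root trick $f(g)=b(g,g)^{1/2}$ consumes bimultiplicativity essentially (to expand $b(g_1+g_2,g_1+g_2)$ into four factors), so it does not apply. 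What is actually needed is the statement that every abelian Lie group extension of $\cG$ by $\uo$ splits; this is where the hypotheses on $\cG$ are really used --- divisibility (hence injectivity) of $\uo$ as an abstract abelian group disposes of the $\pi_0(\cG)$ contribution, and the postulated exponential sequence splits the extension over the identity component. Your proof never touches either ingredient.

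In the converse direction the formula $c:=s^{1/2}$ is not the right construction and fails precisely where the proposition has content. On the torsion part of $\cG$ --- e.g.\ $\cG=\IZ_k\times\IZ_k$ with $k$ even and $s$ the standard pairing, which is the lens-space example appearing later in the paper --- the values of any bimultiplicative $c$ must be $k$-th roots of unity, whereas a square root of $s(e_1,e_2)=\e^{2\pi\ii/k}$ is a primitive $2k$-th root of unity; so no bimultiplicative square root exists, and the lift through $\mathfrak{g}$ that you invoke is unavailable because elements outside the image of $\exp$ cannot be reached. Even formally, your verification of the cocycle identity and of $c/c^{T}=s$ requires $s^{1/2}$ to be bimultiplicative and to commute with inversion, which a pointwise choice of square root only guarantees up to sign. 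The correct move is to produce \emph{some} bimultiplicative $c$ with $c(g,h)/c(h,g)=s(g,h)$ --- a ``polarized half'' of $s$, such as $c((a,\chi),(a',\chi'))=\chi(a'\,)^{-1}$ on $\cG=A\times\widehat{A}$, or an upper-triangular part relative to a splitting of $\cG$ into vector-space, torus and discrete factors --- rather than a square root of $s$ itself. Deferring the obstruction to ``the applications of interest'' leaves the proposition, which is asserted for every $\cG$ satisfying the stated exact sequence, unproven.
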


Denote by $\mathscr{E}_{\mathcal{G}}$ the category of central
extensions of $\mathcal{G}$ by the circle group $\uo$ with the usual
morphisms. Let $\mathscr{C}_{\mathcal{G}}$ be the category whose
objects are bimultiplicative maps $\psi:\cG\times \cG\to \uo$
(and hence automatically satisfy the cocycle condition), and whose morphisms
are \emph{quadratic} maps $f:\cG\to\uo$, sending $\psi$ to the
map $\psi\, \psi_{f}$ where
\begin{displaymath}
\psi_{f}(g_{1},g_{2})=\frac{f(g_{1}\, g_{2})}{f(g_{1})\, f(g_{2})}
\ .
\end{displaymath}
The bimultiplicativity of the coboundary $\psi_{f}$ is precisely what is meant by $f$
being quadratic. Then there is a functor
\begin{displaymath}
\mathscr{C}_{\mathcal{G}}~\longrightarrow~\mathscr{E}_{\mathcal{G}}
\end{displaymath}
which assigns a central extension to a cocycle. By
Prop.~\ref{class}, this defines a
natural equivalence of categories.

Suppose that we weaken the definition of the commutator map $s$ so that it is skew-symmetric but not alternating. This implies 
\begin{displaymath}
s(g,g)^{2}=1
\end{displaymath}
for all $g\in\mathcal{G}$, and so the group $\mathcal{G}$ acquires a
natural $\IZ_2$-grading given by the homomorphism $g\mapsto
s(g,g)\in\IZ_2$. A \emph{graded (generalized) Heisenberg
  group} is a central extension ${\tt Heis}(\mathcal{G})$ of
$\mathcal{G}$ by $\uo$ which is at the same time a
$\IZ_2$-graded group; the maps in the central extension are
$\IZ_2$-graded homomorphisms, with $\uo$ regarded as trivially graded. A graded central extension naturally
determines a commutator map which is skew-symmetric but not
alternating~\cite{Freed:2006ya}. Note that to every such skew-symmetric bimultiplicative map $s$ one can assign an alternating bimultiplicative map $\widetilde s$ by defining 
\beq
\widetilde{s}(g_{1},g_{2}):=s(g_1,g_2) \ \exp\big(-\pi \ii
\varepsilon(g_{1})\, \varepsilon(g_{2}) \big)
\label{stildeeps}\eeq
where $\varepsilon(g)$ is defined (modulo 2) through $s(g,g)=\exp\big(
\pi
  \ii \varepsilon(g) \big)$.

\begin{proposition}\label{class2} Every graded central extension of an
  abelian Lie group $\mathcal{G}$ by $\uo$ is determined
  uniquely up to isomorphism by its graded commutator map
  (\ref{commmap}). Conversely, every skew-symmetric and bimultiplicative map
  $s:\mathcal{G}\times\mathcal{G}\to\uo$ uniquely determines
  such a graded central extension up to isomorphism.
\end{proposition}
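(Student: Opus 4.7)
The plan is to reduce the graded statement to the ungraded alternating case established in Proposition~\ref{class} via the correspondence $s\leftrightarrow\widetilde{s}$ of~(\ref{stildeeps}).

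For the existence direction, starting from a skew-symmetric bimultiplicative map $s:\cG\times\cG\to\uo$, I would first extract a homomorphism $\varepsilon:\cG\to\IZ_2$ from the relation $s(g,g)=\exp(\pi\ii\,\varepsilon(g))$. Skew-symmetry forces $s(g,g)^2=1$, so $\varepsilon(g)\pmod{2}$ is well-defined, and bimultiplicativity together with skew-symmetry yields
\beqa
s(g+h,g+h)=s(g,g)\,s(g,h)\,s(h,g)\,s(h,h)=s(g,g)\,s(h,h) \ ,
\eeqa
which shows that $\varepsilon$ is a group homomorphism. I would then verify that the map $\widetilde{s}$ defined by~(\ref{stildeeps}) is alternating, using $\varepsilon(g)^2\equiv\varepsilon(g)\pmod{2}$, and bimultiplicative, using the bimultiplicativity of both $s$ and $\varepsilon$.

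The key observation driving the proof is that the graded commutator $s$ of a graded central extension differs from the ordinary group commutator of the underlying ungraded extension by the Koszul sign $\exp(\pi\ii\,\varepsilon(g_1)\,\varepsilon(g_2))$; equivalently, the ordinary group commutator is precisely $\widetilde{s}$. Granting this identification, Proposition~\ref{class} applied to the alternating bimultiplicative map $\widetilde{s}$ yields an ungraded Heisenberg group $\Heis(\cG;\widetilde{s}\,)$, unique up to isomorphism. Equipping this group with the $\IZ_2$-grading induced by $\varepsilon$ (with $\uo$ trivially graded) produces a graded central extension whose graded commutator is $s$ by construction, proving existence.

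For uniqueness, suppose two graded central extensions $\Heis_1(\cG)$ and $\Heis_2(\cG)$ share the same graded commutator $s$. The same $\varepsilon$ is extracted from $s(g,g)$ in each case and must coincide with the grading homomorphism, so the underlying $\IZ_2$-gradings agree. The associated ordinary group commutators are then both equal to $\widetilde{s}$, and Proposition~\ref{class} supplies an isomorphism of the underlying ungraded central extensions which automatically respects the gradings, since these are pulled back from the common $\varepsilon$ on $\cG$. The main obstacle I anticipate is pinning down the precise sign conventions so that the Koszul-sign identification between graded and ordinary commutators is unambiguous; once this is settled, the entire proposition reduces cleanly to Proposition~\ref{class}.
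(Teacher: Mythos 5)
Your proposal is correct, and it follows exactly the route the paper sets up: the paper itself gives no proof of Prop.~\ref{class2} (it defers to Freed--Moore--Segal), but the correspondence $s\leftrightarrow(\widetilde{s},\varepsilon)$ of (\ref{stildeeps}) is introduced there precisely so that the graded classification reduces to Prop.~\ref{class} in the way you describe. The only point you ``grant'' --- that the graded commutator of a graded central extension is the ordinary group commutator times the Koszul sign $\exp(\pi\ii\,\varepsilon(g_1)\,\varepsilon(g_2))$, so that $\widetilde{s}$ is the ordinary commutator --- is really the definition of the graded commutator map in this context (consistent with $s(g,g)=\exp(\pi\ii\,\varepsilon(g))$, since the ordinary commutator is always alternating), so there is no genuine gap.
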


\subsubsection*{Generalized Stone--von~Neumann theorem}

One of the main results in the theory of generalized Heisenberg groups
is the fact that the irreducible unitary representations are uniquely
determined. This is essentially an extension of the Stone--von~Neumann
theorem. Consider the group 
\begin{equation}\label{center}
Z_{\mathcal{G};c}:=\big\{g\in\mathcal{G}\:\big| \:s(g,h)=1\quad
\forall h\in\mathcal{G}\big\} \ .
\end{equation}
The center $Z_{{\tt Heis}(\mathcal{G};c)}$ of ${\tt
  Heis}(\mathcal{G};c)$ sits in the exact sequence
\begin{displaymath}
1~\longrightarrow~\uo~\xrightarrow{ \ i \ }~
Z_{{\tt Heis}(\mathcal{G};c)}~\longrightarrow~{Z_{\mathcal{G};c}}~
\longrightarrow~0
\end{displaymath}
where $i$ is the inclusion. In any irreducible unitary representation $\rho$ of ${\tt
  Heis}(\mathcal{G};c)$, by Schur's lemma the center acts by scalar
multiplication as elements of the circle group $\uo$. Since the representations
we are considering satisfy $(\rho\circ i)(z)=z\, \Id$ for all $z\in\uo$, it follows that this sequence splits non-canonically via a homomorphism $\chi:Z_{{\tt
    Heis}(\mathcal{G};c)}\to\uo$.

\begin{proposition}\label{representation}
Any irreducible unitary representation of a 
generalized Heisenberg group of finite
dimension ${\tt Heis}(\mathcal{G};c)$ for which
$\uo \subseteq Z_{{\tt Heis}(\mathcal{G};c)}$ acts by the identity
character is uniquely
determined up to isomorphism by a splitting homomorphism
$\chi:Z_{{\tt Heis}(\mathcal{G};c)}\to\uo$. Conversely, any
such homomorphism $\chi$ gives rise to such an irreducible unitary representation of $\Heis(\cG;c)$.
\end{proposition}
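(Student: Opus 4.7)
The plan is to reduce to the classical Stone--von~Neumann theorem by constructing the desired irreducible representation explicitly as an induced representation from a Lagrangian subgroup, and to deduce uniqueness from a Schur/Mackey argument.

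First I would verify the forward direction (representation $\mapsto$ splitting $\chi$) as a direct application of Schur's lemma: since $Z_{\Heis(\cG;c)}$ is abelian, it acts in any irreducible $\rho$ by scalars, defining a character $\chi$ on the center, and the hypothesis $\rho|_{\uo}=\Id$ says precisely that $\chi$ is a splitting of the central extension $1\to\uo\to Z_{\Heis(\cG;c)}\to Z_{\cG;c}\to 0$.

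For the converse, my plan is to construct $\rho$ from $\chi$ by induction. After quotienting $\Heis(\cG;c)$ by $\ker(\chi)$---which must act trivially in any representation with the prescribed central character---one reduces to the case $Z_{\cG;c}=\{0\}$, so the commutator map (\ref{commcocycle}) becomes non-degenerate on $\cG$. I would then choose a Lagrangian (maximal isotropic) subgroup $L\subset\cG$ for $s$; its preimage $\widetilde L\subset\Heis(\cG;c)$ is abelian by (\ref{Heisgroupcomm}). Extending the identity character on $\uo$ to a character $\widetilde\chi$ of $\widetilde L$ (possible since $\uo$ is divisible and $\widetilde L$ abelian), set
\[
\rho_\chi \;:=\; \mathrm{Ind}_{\widetilde L}^{\Heis(\cG;c)}\,\widetilde\chi \ .
\]
Non-degeneracy of $s$ identifies $\Heis(\cG;c)/\widetilde L$ with the Pontrjagin dual of $\widetilde L/\uo$ acting on its characters, whence Mackey's irreducibility criterion applies and $\rho_\chi$ is an irreducible unitary representation with central character $\chi$.

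Uniqueness follows by restricting any irreducible $\rho$ with central character $\chi$ to $\widetilde L$: by spectral theory the restriction decomposes into characters extending $\chi$, and a $\widetilde\chi$-eigenvector in $\rho$ generates a subrepresentation isomorphic to $\rho_\chi$ via Frobenius reciprocity; irreducibility forces this subrepresentation to be all of $\rho$. The main obstacle, and where the real work lies, is the harmonic-analytic step of showing that an irreducible $\rho$ must actually contain the $\widetilde\chi$-isotypic component and that this component is one-dimensional, so that the intertwiner with $\rho_\chi$ is defined up to a scalar and is unitary by Schur's lemma. In the finite-dimensional Lie group setting this reduces to a Pontrjagin duality calculation on the locally compact abelian quotient $\widetilde L/\uo\cong L$; in the infinite-dimensional generalizations required in \S\ref{Quant} one must in addition ensure that a suitable polarization $L$ exists and that the induced representation is well-defined and remains irreducible, which is exactly the role played by the ``choice of polarization'' in the Hamiltonian quantization to come.
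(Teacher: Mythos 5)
Your forward direction coincides with the only argument the paper actually supplies: the paragraph preceding Prop.~\ref{representation} derives the character on $Z_{{\tt Heis}(\mathcal{G};c)}$ from Schur's lemma and observes that the normalization $(\rho\circ i)(z)=z\,\Id$ makes it a splitting of the central extension. For the converse and for uniqueness the paper gives no proof at all, deferring entirely to \cite{Freed:2006ya}; your induced-representation construction is therefore a genuinely independent completion along the standard Mackey-theoretic route. The skeleton is sound: $\ker(\chi)$ is central, hence normal, and the quotient is a Heisenberg extension of $\cG/Z_{\cG;c}$ with non-degenerate commutator map; $\widetilde L$ is normal because it contains the commutator subgroup $\uo$; and the stabilizer computation $\widetilde\chi^g=\widetilde\chi\iff \bar g\in L^{\perp}=L$ does yield irreducibility of $\mathrm{Ind}_{\widetilde L}^{\Heis(\cG;c)}\widetilde\chi$ by Mackey's criterion.

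Two steps need more than you give them. First, the existence of a closed Lagrangian subgroup $L$ with $L^{\perp}=L$ and $\cG/L\cong\widehat{L}$ is asserted, not proved; for a general finite-dimensional abelian Lie group (a mixture of vector, torus, lattice and finite factors) this requires a structure-theoretic decomposition of $(\cG,s)$ and is precisely the nontrivial input on which the rest of the construction leans. Second, and more seriously, the uniqueness argument as written --- ``a $\widetilde\chi$-eigenvector in $\rho$ generates a subrepresentation isomorphic to $\rho_\chi$'' --- fails whenever $\widetilde L/\uo$ is non-compact (e.g.\ when $L$ is a vector group): the restriction $\rho|_{\widetilde L}$ is then a direct integral of characters and generically contains \emph{no} eigenvectors, so the $\widetilde\chi$-isotypic component you invoke is the zero subspace. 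This is exactly the classical difficulty in the Stone--von~Neumann theorem; the standard repairs are either Mackey's imprimitivity theorem applied to the spectral measure of $\rho|_{\widetilde L}$ on the dual of $\widetilde L$, or von~Neumann's device of averaging $\rho$ against a suitable function to produce a rank-one projection in the twisted convolution algebra. You flag the difficulty but describe its resolution as a ``Pontrjagin duality calculation,'' which understates what is required; without one of these devices the uniqueness half of the proposition is not established.
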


\begin{corollary}
If the commutator map $s$
is \emph{non-degenerate}, i.e. the group (\ref{center}) is the trivial
group $Z_{\mathcal{G};c}=0$, then up to isomorphism there is a unique irreducible unitary
representation of ${\tt Heis}(\mathcal{G};c)$ for which the center $Z_{\Heis(\cG;c)}=\{(0,z) \ |
\ z\in\uo\}$
acts by scalar multiplication.
\end{corollary}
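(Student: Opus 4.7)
The corollary is essentially an immediate consequence of Prop.~\ref{representation}, so the plan is simply to unwind what non-degeneracy of $s$ tells us about the center of the Heisenberg group and then invoke the preceding proposition.

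First I would analyse the center $Z_{\Heis(\mathcal{G};c)}$ using the exact sequence
\[
1 \longrightarrow \uo \xrightarrow{ \ i \ } Z_{\Heis(\mathcal{G};c)} \longrightarrow Z_{\mathcal{G};c} \longrightarrow 0
\]
displayed just before Prop.~\ref{representation}. Non-degeneracy of the commutator map $s$ is, by definition (\ref{center}), the statement $Z_{\mathcal{G};c}=0$. Substituting this into the exact sequence collapses it to $1 \to \uo \to Z_{\Heis(\mathcal{G};c)} \to 0$, so the inclusion $i$ is an isomorphism and
\[
Z_{\Heis(\mathcal{G};c)} \;=\; \{(0,z)\mid z\in\uo\} \;\cong\; \uo \ .
\]
This is consistent with the description given in the statement of the corollary.

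Next I would apply Prop.~\ref{representation} to the class of irreducible unitary representations $\rho$ for which $\uo\subseteq Z_{\Heis(\mathcal{G};c)}$ acts by the identity character; by hypothesis any such irreducible representation is classified, up to isomorphism, by a splitting homomorphism $\chi:Z_{\Heis(\mathcal{G};c)}\to\uo$ of the exact sequence above. But under the identification $Z_{\Heis(\mathcal{G};c)} = \uo$, the condition $\chi\circ i = \mathrm{id}_{\uo}$ forces $\chi = \mathrm{id}_{\uo}$. Hence the classifying datum is unique, and so the corresponding irreducible unitary representation is unique up to isomorphism.

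There is no serious obstacle here; the only delicate point is the compatibility of conventions, namely that the phrase ``the center acts by scalar multiplication'' in the corollary is equivalent, given irreducibility and Schur's lemma, to the hypothesis of Prop.~\ref{representation} that $\uo\subseteq Z_{\Heis(\mathcal{G};c)}$ acts by the identity character. Once that identification is made, the proof reduces to observing that non-degeneracy of $s$ eliminates the only remaining classifying datum for the representation.
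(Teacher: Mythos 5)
Your argument is correct and is precisely the intended deduction: the paper states this as an immediate corollary of Prop.~\ref{representation}, and your observation that $Z_{\mathcal{G};c}=0$ collapses the center of $\Heis(\mathcal{G};c)$ to $\uo$, leaving the identity as the only splitting homomorphism $\chi$, is exactly the implicit reasoning. The only caveat is cosmetic: ``acts by scalar multiplication'' here should be read as the tautological character $(0,z)\mapsto z\,\Id$ (as fixed in the discussion preceding Prop.~\ref{representation}), which you correctly identify.
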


\subsubsection*{Examples}

\begin{itemize}
\item Let us return to the example $\widetilde{\cG}=\cG\times \widehat{\cG}$
  and $\hil= {\rm L}^2(\cG)$ from \S\ref{Quant} The Heisenberg group
  extending $\widetilde{\cG}$ has a representation
\beqa
\Heis(\widetilde{\cG}\,) \ \longrightarrow \ {\rm GL}(\hil) \ , \qquad
\big((g,\chi) \,,\, z) \ \longmapsto \ z\, T_g\circ M_\chi \ .
\eeqa
The cocycle in this case is given by
\beqa
c\big((g_1,\chi_1)\,,\, (g_2,\chi_2)\big) = \frac{1}{\chi_1(g_2)} \ ,
\eeqa
and its antisymmetrization gives the commutator map
\beqa
s\big((g_1,\chi_1)\,,\, (g_2,\chi_2)\big) =
\frac{\chi_2(g_1)}{\chi_1(g_2)} \ .
\eeqa
\item Let $\cG=\IR$, which we regard as parametrizing ``coordinate''
  operators $\e^{\ii p\, \hat x}$. The Pontrjagin dual
  $\widehat{\cG}=\widehat{\IR}\cong \IR$ may then be regarded as parametrizing ``momentum'' operators $\e^{\ii x\, \hat p}$, so that
  $\widetilde{\cG}=\IR\times\IR$ is ``phase space''.  More precisely, any character on
  $\cG$ in this case is of the form $\chi(p)=\e^{\ii x\, p}$ for some
  $x\in\IR$. Then the commutator map $s$ gives the canonical
  symplectic pairing on phase space, and the uniqueness result of
  Prop.~\ref{representation} is the usual Stone--von~Neumann theorem of
  quantum mechanics expressing uniqueness of the irreducible
  Schr\"odinger representation of the Heisenberg commutation relations.
\end{itemize}

\subsubsection*{Polarization}

Prop.~\ref{representation} can be generalized to
infinite-dimensional abelian groups which are
\emph{polarized}~\cite{Freed:2006ya}. 

\begin{definition}
A \emph{polarization} of an abelian Lie group $\mathcal{G}$ is an action of the real line $\mathbb{R}$ on the Lie 
algebra $\mathfrak{g}$ of $\mathcal{G}$ via operators
$\{u_{t}\}_{t\in\R}$ which preserve the Lie bracket and decompose
the complexification $\mathfrak{g}_{\mathbb{C}}$ into a countable
direct sum of finite-dimensional subspaces $\mathfrak{g}_{\lambda}$,
$\lambda\in\mathbb{R}$, such that $u_{t}$ for each $t$ acts on
$\mathfrak{g}_{\lambda}$ as multiplication by
$\e^{\ii\lambda\,{t}}$. If $\cG$ is a polarized group, then a
unitary representation of the Heisenberg group ${\tt
  Heis}(\mathcal{G})$ on a Hilbert space $\mathcal{H}$ is said to be
of \emph{positive energy} if there is a unitary action of the real
line on $\mathcal{H}$ by operators $U_t=\exp(\ii t\, H)$,
$t\in\R$, which intertwine with the action of ${\tt
  Heis}(\mathcal{G})$ on $\mathcal{H}$ such that the operator $H$
has discrete non-negative spectrum.
\end{definition}

In physical applications, the
one-parameter family $\{u_{t}\}$ is typically given by hamiltonian flow on phase
space while $\{{U}_{t}\}$ determines the time evolution of the
associated quantum theory, hence the positive energy condition.

\begin{proposition}\label{representationinf}
For a polarized generalized Heisenberg group ${\tt
  Heis}(\mathcal{G};c)$, any irreducible unitary representation of
positive energy for which
$\uo \subseteq Z_{{\tt Heis}(\mathcal{G};c)}$ acts by the identity
character is uniquely determined up to isomorphism by a
splitting homomorphism $\chi:Z_{{\tt
    Heis}(\mathcal{G};c)}\to\uo$. Conversely, any such
homomorphism $\chi$ gives rise to such an irreducible unitary representation of
$\Heis(\cG;c)$ of positive energy.
\end{proposition}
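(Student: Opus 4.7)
The plan is to reduce to the finite-dimensional statement of Prop.~\ref{representation} by exploiting the polarization to produce a filtration of $\cG$ by finite-dimensional abelian subgroups whose associated Heisenberg extensions approximate $\Heis(\cG;c)$. First I would use the decomposition $\mathfrak{g}_\C=\bigoplus_{\lambda}\mathfrak{g}_\lambda$ to define, for each $N\in\IN$, the finite-dimensional subspace $\mathfrak{g}^{(N)}:=\bigoplus_{|\lambda|\le N}\mathfrak{g}_\lambda$ and its image $\cG^{(N)}\subset\cG$ under the exponential map. Since each $\mathfrak{g}_\lambda$ is finite-dimensional and $u_t$ preserves the Lie bracket, these are subgroups on which the commutator map $s$ restricts to a skew-symmetric, bimultiplicative pairing, so by Prop.~\ref{class} they inherit Heisenberg extensions $\Heis(\cG^{(N)};c)$ filtering $\Heis(\cG;c)$.

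For the existence part, I would build the representation as a Fock-type module adapted to the polarization. Split $\mathfrak{g}_\C$ into the "creation" and "annihilation" pieces $\mathfrak{g}^+:=\bigoplus_{\lambda>0}\mathfrak{g}_\lambda$ and $\mathfrak{g}^-:=\bigoplus_{\lambda<0}\mathfrak{g}_\lambda$, together with the zero-energy piece $\mathfrak{g}_0$ that contains the Lie algebra of $Z_{\cG;c}$. Given the splitting $\chi:Z_{\Heis(\cG;c)}\to\uo$, declare a vacuum vector $\Omega$ on which the center acts by $\chi$ and on which all one-parameter subgroups generated by $\mathfrak{g}^-$ act trivially, and define $\hil$ as the Hilbert-space completion of the symmetric algebra on $\mathfrak{g}^+$ applied to $\Omega$, with inner product determined by the pairing $s$ between $\mathfrak{g}^+$ and $\mathfrak{g}^-$. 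The action of $H=\ii\,\frac{\dd}{\dd t}|_{t=0}\,U_t$ has spectrum equal to the sums of positive $\lambda$'s appearing in each symmetric tensor, which is discrete and non-negative as required.

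For uniqueness, given any irreducible positive-energy representation $\rho$ with central character $\chi$, the discrete non-negative spectrum of $H$ yields a lowest-eigenvalue subspace $\hil_0\subset\hil$, which is finite-dimensional by the grading. Positive energy forces $\hil_0$ to be annihilated by $\mathfrak{g}^-$, so $\hil_0$ is a representation of the quotient of $\mathfrak{g}_0$ by $\mathfrak{g}^-$-contributions on which the induced commutator map is non-degenerate modulo $Z_{\cG;c}$. Restricting $\rho$ to each finite-dimensional $\Heis(\cG^{(N)};c)$ and applying Prop.~\ref{representation} with the induced central character shows that $\hil_0$ is one-dimensional and that the whole representation is cyclically generated from $\hil_0$ by the creation operators, in a manner compatible at each level $N$ with the Fock construction. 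Taking $N\to\infty$ produces the desired intertwiner with the representation built from $\chi$.

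The main obstacle is the analytic control of the infinite-dimensional limit: one must verify that the algebraic Fock module built from $(\mathfrak{g}^+,\mathfrak{g}^-,\chi)$ completes to a genuine unitary representation of the full topological group $\Heis(\cG;c)$ and not merely of the algebraic direct limit $\bigcup_N\Heis(\cG^{(N)};c)$, and that irreducibility survives the completion. This is where the finite-dimensionality of each $\mathfrak{g}_\lambda$ is essential, since it guarantees that the relevant commutators are of trace class and that the heat operator $\e^{-\beta H}$ is trace class for all $\beta>0$, so that positive energy provides the compactness needed to pass between the finite-dimensional slices and the full group. Once these analytic ingredients are in place, the proof reduces to Prop.~\ref{representation} applied slice-by-slice and glued via the polarization.
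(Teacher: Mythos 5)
The paper itself does not prove this proposition: it is quoted from~\cite{Freed:2006ya} (see Appendix~A there), so your attempt can only be measured against the argument in that reference. Your overall strategy --- split $\mathfrak{g}_\C$ into positive-, negative- and zero-frequency pieces, build a Fock module on the positive-frequency part, and pin down the vacuum sector by the finite-dimensional Prop.~\ref{representation} --- is indeed the right general shape and is essentially what Freed--Moore--Segal do. However, there are two genuine gaps.

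First, your filtration $\cG^{(N)}=\exp\big(\bigoplus_{|\lambda|\le N}\mathfrak{g}_\lambda\big)$ only ever sees the identity component of $\cG$. In every application in this paper the component group $\pi_0(\cG)$ is nontrivial and essential (it is $\RH^p(N;\IZ)\times\RH^{n-p+1}(N;\IZ)$ for higher abelian gauge theory, $\K^j(N)$ for Ramond--Ramond fields), and the noncommutativity of flux sectors lives precisely in the pairing between $\pi_0(\cG)$ and the torus of flat fields. A representation of $\Heis(\cG;c)$ is not determined by its restriction to $\bigcup_N\Heis(\cG^{(N)};c)$, so ``gluing finite-dimensional slices'' cannot by itself prove either existence or uniqueness; the discrete part must be handled separately (in~\cite{Freed:2006ya} the group is non-canonically split as finite $\times$ torus $\times$ lattice $\times$ vector space and the representation is assembled as a tensor product over these factors). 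Second, your claim that the lowest $H$-eigenspace $\hil_0$ is finite-dimensional ``by the grading'' and then one-dimensional by Prop.~\ref{representation} is false in general. The vacuum sector is the unique irreducible representation of the \emph{zero-mode} Heisenberg subgroup (the centralizer of the $\IR$-action, which contains the flat fields, the Wilson-line tori and the charge lattice) with central character $\chi$, and this is typically infinite-dimensional --- e.g.\ for Maxwell theory on a lens space it already has dimension $k>1$ from the torsion pairing. The correct structure is $\hil\cong\hil_0\otimes{\rm Fock}(\mathfrak{g}^+)$ with $\hil_0$ that zero-mode representation, not a single vacuum vector; declaring one vector $\Omega$ on which only the center acts does not define an action of the full zero-mode group. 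Two smaller points: unitarity of the Fock module requires positivity of $-\ii\,\omega(\bar v,v)$ on $\mathfrak{g}_\lambda$ for $\lambda>0$, which you use implicitly but do not establish; and the trace-class assertions about $\e^{-\beta H}$ are neither needed for the theorem nor consequences of the definition of polarization as stated.
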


If ${\tt Heis}(\mathcal{G})$ is a $\IZ_2$-graded
generalized Heisenberg group, then the quantum Hilbert space
$\mathcal{H}$ automatically acquires a $\IZ_2$-grading as
well.

\subsection{Quantization of free generalized abelian gauge theory\label{Quantgenab}}

We will now explain the setting for the quantization of
generalized abelian gauge theories following~\cite{Freed:2006ya,Freed:2006yc}. We
consider hamiltonian quantization of an abelian gauge theory with
semi-classical configuration space $\check\E{}^d(M)$ given by a smooth refinement $\check\E{}^\bullet$ of a generalized cohomology theory $\E^\bullet$ on the spacetime $M=\IR\times N$ with metric of indefinite signature, where $N$ is a compact oriented riemannian manifold. For $[\check{A}]\in\check\E{}^d(M)$, the classical equations of motion for the gauge theory are
\begin{equation}\label{maxwell}
\dd F([\check{A}])=0 \qquad \mbox{and} \qquad
\dd\star F([\check{A}])=0
\end{equation}
where $F:\check\E{}^d(M)\to \Omega_\IZ(M;E^\bullet)^d$ is the field strength map (curvature), with $E^\bullet=\E^\bullet(\pt)\otimes_\IZ\IR$, and $\star$ denotes the Hodge duality operator on $M$. Of course, the first equation is automatically satisfied. We are interested in the \emph{space of classical solutions}, i.e. the subspace $\mathcal{M}\subset{\check\E{}^d(M)}$ of gauge fields $[\check A]$ which solve the equations (\ref{maxwell}).

For this, we decompose $F([\check{A}])$ on $M$ as
\begin{equation}\label{decomposition}
F([\check{A}])=B(t)-E(t)\wedge{\dd t}
\end{equation}
where $t$ is the time coordinate on $\IR$, and $B(t)$ and $E(t)$ for
each $t\in\IR$ are $d$- and $d-1$-forms on $N$, respectively. (The
notation stems from the fact that when $\E^\bullet=\RH^\bullet$ is
ordinary cohomology and $d=2$, i.e. in Maxwell theory, the forms $B(t)$ and $E(t)$ are the
magnetic and electric fields, respectively.) Then (\ref{maxwell}) can be rewritten as
\begin{equation}\label{hamiltonian}
\dfrac{\partial}{\partial{t}}B=-\tilde \dd E \qquad \mbox{and} \qquad
\dfrac{\partial}{\partial{t}}\, \tilde\star E=\tilde \dd\, \tilde\star B \ ,
\end{equation}
where $\tilde\star$ denotes the Hodge duality operator and $\tilde \dd$
the exterior derivative on $N$. The Cauchy data for these
first order linear elliptic differential equations are the values of
$B(t)$ and $E(t)$ at a given initial time $t=t_0$; the corresponding
solutions uniquely determine $F([\check{A}])$ on $M$ through~(\ref{decomposition}).

In particular, we can identify the solution space $\mathcal{M}$ with the
tangent bundle $T\check\E{}^d(N)$ in the following way. First, notice
that $T\check\E{}^d(N)$ can be trivialized as $\check\E{}^d(N)\times
\Omega(N;E^\bullet)^{d-1} /\, {\rm im}(\dd)$. Consider the map 
\begin{displaymath}
i_{t_0}\,:\, N \ \longrightarrow \ \IR\times{N} \ , \qquad i_{t_0}(x)=(t_0,x) \ .
\end{displaymath}
It induces a map from $\mathcal{M}$ to $T\check\E{}^d(N)$ by assigning
to $[\check{A}]$ the pair $(i^{*}_{t_0}[\check{A}],E(t_0))$, where
$E(t)$ is determined by the decomposition (\ref{decomposition}). The
inverse map is obtained by assigning to the pair $([\check{B}],E)$ the
unique element $[\check{A}]\in\check\E{}^{d}(M)$ such that
$i_{t_0}^{*}[\check{A}]=[\check{B}]$ and
$F([\check{A}])=B(t)-E(t)\wedge \dd t$, where $B(t)$ and $E(t)$ are
obtained from (\ref{hamiltonian}) with initial conditions given by
$B(t_0)=B$ and $\tilde\star E(t_0)=\tilde\star E$. The uniqueness of
$[\check{A}]$ is assured by the fact that the map $i_{t_0}$ induces
the isomorphism $\E^{d-1}(M;\bbt)\cong \E^{d-1}(N;\bbt)$ of cohomology
classes (flat fields) in the kernel of the field strength
transformation.

With the identification of the space of classical solutions as
$\mathcal{M}=T\check\E{}^d(N)$, the standard hamiltonian quantization scheme
suggests that the quantum Hilbert space of the generalized abelian
gauge theory is given heuristically by the ``space of ${\rm L}^2$-functions
on $\cG=\check\E{}^d(N)$''. A more precise definition is given
in~\cite{Freed:2006ya,Freed:2006yc}, where it is proposed that the
quantum (projective) Hilbert space $\hil$ of a generalized abelian gauge theory with
configuration space a group $\cG$ is an
irreducible representation of the generalized
Heisenberg group
\beqa
\Heis\big(\cG\times\widehat{\cG}\, \big) \ ,
\eeqa
where $\widehat{\cG}=\Hom_{\scrAb}(\cG,\uo)$ is the group of
characters of $\cG$ in the category $\scrAb$ of abelian groups. The case of self-dual gauge theories is somewhat simpler, as then the phase
space $\mathcal{G}\times\widehat{\mathcal{G}}$ degenerates to the
configuration space; in this case, due to the self-duality equations (\ref{selfdualityeqsEd}),
the space of classical solutions on $M$ may be identified with the
diagonal subgroup $\cG\cong\widehat{\cG}$ and the quantization is
carried out using the Heisenberg group ${\tt
  Heis}(\mathcal{G})$ itself. This technique can be applied to any abelian group
$\mathcal{G}$ based on a smooth refinement of a Pontrjagin self-dual
generalized cohomology theory $\E^\bullet$~\cite[App.~B]{Freed:2006ya}; one quantizes the Poisson
manifold $\cG$ in this case.

Isomorphism classes of Heisenberg group extensions are determined by
maps $s:\widetilde{\cG}\times\widetilde{\cG}\to \uo$ which are
skew-symmetric, alternating and bimultiplicative, where
$\widetilde{\cG}=\cG\times\widehat{\cG}$. For any generalized differential cohomology theory one can define a
pairing
\beq
\check\E{}^d(N)\otimes \check\E{}^{n-s+1-d}(N)
\ \longrightarrow \ \bbt \ , \qquad \big( [\check A]\,,\, [\check
A'\,] \big) \ \longmapsto \ \int^{\check\E}\!\!\!\!\!\int_N \, [\check A]\smile [\check
A'\,] \ .
\label{genEpairing}\eeq
The perfectness of the pairing (\ref{genEpairing}) is a feature of any
generalized differential cohomology theory $\check\E{}^\bullet$, defined as explained
in \S\ref{Gendiffcoh}, for which $\E^\bullet$ is a Pontrjagin
self-dual generalized cohomology theory, defined as in
\S\ref{Selfduality} The proof makes use of the fact that for such
theories there is a perfect pairing
\beqa
\E^d(N)\otimes \E^{n-s-d}(N;\bbt) \ \longrightarrow \ \bbt \ ,
\eeqa
and that the $\IR/\IZ$ cohomology $\E^\bullet(N;\bbt)$ appears as the
kernel of the field strength map; see~\cite[App.~B]{Freed:2006ya} for
details. The commutator map $s$ may then be constructed by exponentiating the
pairing~(\ref{genEpairing}).

The equations (\ref{maxwell}) can be obtained as the variational
equations for the action functional
\begin{equation}\label{lag}
{S}([\check{A}]):=-\frac{1}{2}\, \int_{M}\, F([\check{A}])\wedge\star
F([\check{A}]) \ ,
\end{equation}
and the hamiltonian derived from (\ref{lag}) is given by
\begin{equation}\label{ham}
{H}(t):=\frac{1}{2}\, \int_{N}\,
\big(B\wedge\tilde\star B+E\wedge\tilde\star E \big) \ .
\end{equation}
The hamiltonian (\ref{ham}) is a non-negative function defined on the cotangent bundle $T^{*}\cG$. At the identity,
$T_0^*\cG$ is the dual $\mathfrak{g}^*$ of the Lie algebra
$\mathfrak{g}$ of $\cG$. The canonical hamiltonian flow $\IR\to T^*\cG$
yields a family of maps $u_t:\mathfrak{g}^*\to \mathfrak{g}^*$ which
is an action of $\IR$ on $\mathfrak{g}^*$. If $\widehat\cG$ is the
Pontrjagin dual of $\cG$ obtained through a non-degenerate pairing
$\cG\otimes\cG \to\bbt$, then we obtain a family of operators acting
on the Lie algebra $\mathfrak{g}\oplus \mathfrak{g}^*$ of
$\cG\times\widehat\cG$ which satisfies all the properties of a
polarization.
Quite generally, the choice of polarization also appears in K\"ahler
quantization, wherein the K\"ahler form is given by the differential
of the antisymmetric pairing; in this instance though one should
clarify the origin of a suitable pre-quantum line bundle on the
configuration groupoid. 

In the self-dual case, this polarization does \emph{not}
induce a polarization on the diagonal subgroup. See~\cite{Freed:2006yc} for
a way to relate the self-dual gauge theory to a non-self-dual gauge
theory in dimensions $\dim(M)=4k+2$, $k\in\IN$; see also~\cite{Belov:2006jd,Belov:2006xj,Monnier:2010ww} where the
complete pre-quantization data is specified. The problems with
formulating self-dual higher abelian gauge theories which are both
local and covariant go back to e.g.~\cite{Henneaux:1988gg}, see also~\cite{Dunne:1989hv}.

In the following we will explicitly work out the cases where
$\E^\bullet=\RH^\bullet$ is ordinary cohomology and where
$\E^\bullet=\K^\bullet$ is complex K-theory.

\subsection{Quantization of higher abelian gauge theory\label{QuantHAGT}}

We will first apply this formalism to the Cheeger--Simons
groups. Let us begin by giving the heuristic argument using canonical
quantization of free fields. Let
$(\check A,\Pi)$ denote local coordinates on the phase space
$T^*\check\RH{}^p(N)= \check\RH{}^p(N)\times \Omega_{\rm cl}^{n-p+1}(N)$ where
$\Pi=\star F\big|_N$ with $[\star F]_{\rm dR}\in\RH^{n-p+1}(N;\IR)$; we think of the ``conjugate momentum'' $\Pi$ as
the functional derivative operator
$-\ii \hbar\, \frac\delta{\delta\check A}$ which generates translations
on the configuration space $\check\RH{}^p(N)$.  The
differential cohomology $\check\RH{}^p(N)$ is an abelian Lie group
with a translation-invariant measure induced formally on $T\check\RH{}^p(N)$ by
the riemannian metric on $N$. By quantizing $F,\Pi$
to operators $\hat F,\hat\Pi$ acting on the Hilbert space $\hil:={\rm
  L}^2\big(\check\RH{}^p(N)\big)$ we get the Heisenberg commutation relations
\beqa
\Big[\,\int_N\, \omega_1\wedge \hat F\,,\, \int_N\, \omega_2\wedge
\hat\Pi\, \Big] = \Big(\ii\hbar \int_N\, \omega_1\wedge\dd\omega_2
\Big) \ \Id_\hil
\eeqa
for any pair of differential forms $\omega_1,\omega_2$. The right-hand
side of these relations is just the pairing on globally defined forms
in differential cohomology.

Now let us make this argument more precise. Using Pontrjagin duality
\beqa
\check\RH{}^p(N)\times
\check\RH{}^{n-p+1}(N) \ \longrightarrow \ \torus
\eeqa
for $N$ compact and oriented, we set
\beq
\widetilde{\cG} = \check\RH{}^p(N)\times \widehat{\check\RH{}^p(N)}
\cong \check\RH{}^p(N)\times \check\RH{}^{n-p+1}(N)
\label{AGTphasespace}\eeq
which is the phase space of an abelian gauge theory from \S\ref{GAGT}
As the space of classical solutions on $M=\IR\times N$ in this
case is the tangent bundle on the Cheeger--Simons group $
\check\RH{}^p(N)$, the Hilbert space is heuristically $\hil={\rm
  L}^2\big(\check\RH{}^p(N)\big)$. Since in this case the abelian
group $\cG=\check\RH{}^p(N)$ is infinite-dimensional, the quantization
must be specified by a polarization. In the
hamiltonian formalism, a natural polarization is given by the
energy operator $H=\ii\frac\partial{\partial t}$; then the
complexifcation of the space of classical solutions $\mathcal{M}$ is a sum of
subspaces of
positive and negative energy solutions. In this case the quantum
(projective) Hilbert space $\hil$ is the unique irreducible
representation of the associated Heisenberg group which is compatible
with the polarization. If $N$ is non-compact, then this discussion
needs to be modified using some (conjectural) analog of ${\rm
  L^2}$-cohomology for the Cheeger--Simons groups.

In the present case the commutator map $s:\widetilde{\cG}\times\widetilde{\cG}\to \uo$ can be constructed using the
Pontrjagin--Poincar\'e duality property of ordinary differential
cohomology to define the pairing
\beqa
\langle-,-\rangle \,:\, \check\RH{}^p(N)\otimes \check\RH{}^{n-p+1}(N)
\ \longrightarrow \ \bbt=\IR/\IZ \ , \qquad \big\langle [\check A]\,,\, [\check
A'\,] \big\rangle=\int^{\check\RH}\!\!\!\!\!\int_N \, [\check A]\smile [\check
A'\,] \ .
\eeqa
This pairing is \emph{perfect}, i.e. it induces an isomorphism
\beqa
\check\RH{}^p(N)\cong\Hom_{\scrAb}\big(\check\RH{}^{n-p+1}(N)\,,\,\bbt\big)
\ ,
\eeqa
so that every homomorphism
$\check\RH{}^p(N) \to\bbt$ is given by pairing with an element of
$\check\RH{}^{n-p+1}(N)$. Let us sketch how to
understand this isomorphism. Using the universal coefficient theorem
and Poincar\'e duality, one shows that the pairing
\beqa
\RH^p(N;\bbt)\otimes \RH^{n-p}(N) \ \longrightarrow \ \bbt \ , \qquad
(\alpha,\alpha'\,) \ \longmapsto \ \big\langle \alpha\smile\alpha'
\,,\, [N]\big\rangle
\eeqa
is perfect, where $[N]$ denotes the fundamental class of the manifold
$N$. Moreover, one can define a pairing
\beqa
\Omega_\IZ^p(N)\otimes\Omega^{n-p}(N)\,\big/\, \Omega^{n-p}_\IZ(N) \
\longrightarrow \ \bbt \ , \qquad \big(F\,,\,[A]\big) \ \longmapsto \
\int_N\, F\wedge A \quad {\rm mod} \ \IZ
\eeqa
which is well-defined and perfect. Then by (\ref{CSexactseqs}) there is a commutative diagram
\beqa
\xymatrix{
0 \ \ar[r] & \ \RH^{p-1}(N;\bbt) \ \ar[r]\ar[d]^\approx & \ \check\RH{}^p(N) \
\ar[r]^F \ar[d] & \ \Omega_\IZ^p(N) \ \ar[r]\ar[d]^\approx & \ 0 \\
0 \ \ar[r] & \ \widehat{\RH^{n-p+1}(N;\bbt)} \ \ar[r] & \ \widehat{\check\RH{}^{n-p+1}(N)} \
\ar[r] & \ \widehat{\Omega^{n-p}(N)\,\big/\, \Omega_\IZ^{n-p}(N)} \ \ar[r] & \ 0 
}
\eeqa
where $\widehat{G}:=\Hom_{\scrAb}(G,\bbt)$ and the vertical maps are
the morphisms induced by the above pairings. It then follows that the
middle arrow is an isomorphism as well.

By using this pairing we may define a non-degenerate commutator map
$s:\widetilde{\cG}\times \widetilde{\cG}\to \uo$ by
\beqa
s\big(([\check A_1], [\check
A_1'\,]) \,,\,([\check A_2], [\check
A_2'\,]) \big) = \exp\Big( 2\pi\ii \big(\langle [\check A_2], [\check
A_1'\,]\rangle - \langle [\check A_1], [\check
A_2'\,]\rangle \big) \Big)
\eeqa
which defines a central extension of the group (\ref{AGTphasespace}). In this case, the map
$\varepsilon$ from (\ref{stildeeps}) is given in terms of Wu classes which are
polynomials in the Stiefel--Whitney classes of the tangent bundle of
$N$~\cite{Freed:2006yc}. After choosing a polarization, there thus
exists a unique (up to isomorphism) irreducible Stone--von~Neumann representation of
the Heisenberg group $\Heis(\widetilde{\cG}\,)$ on which the central
elements $(0,z)$ are realised as multiplication by $z\in\uo$; we
identify this representation with the quantum Hilbert space $\calH= {\rm
  L}^2\big(\check\RH{}^p(N)\big)$ of the higher abelian gauge theory.

\subsection{Quantization of Ramond--Ramond gauge theory\label{QuantRR}}

We will now consider the hamiltonian quantization of Ramond--Ramond
gauge fields in the absence of D-brane sources, i.e. as induced solely
by the closed string background.
In the free Ramond--Ramond gauge theory on $M=\IR\times N$ we have to contend with
self-duality. In this case we set
\beqa
\widetilde{\cG}=\cG=\check\K{}^j(N) \ ,
\eeqa
and the commutator map $s$ is obtained by restriction from
$\check\K{}^j(N)\times \check\K{}^j(N)$ to its ``diagonal'' subgroup,
in a sense that we now explain.

By composing the cup product with the integration map on differential
K-theory, we define an intersection form 
\beqa
(-,-) \,:\, \check\K{}^j(N)\otimes \check\K{}^j(N) \ \xrightarrow{ \
  \smile \ } \ \check\K{}^0(N) \ \xrightarrow{ \ \int^{\check\K}\!\!\!\int_N \ }
\ \check\K{}^{-1}(\pt)\cong \bbt \ ,
\eeqa
where we have used Bott periodicity and the assumption that $n=\dim(N)$ is odd.
Explicitly, for Ramond--Ramond potentials $[\check C]$ and $[\check
C'\,]$ in complementary degrees, i.e. $\deg[\check C]+\deg[\check
C'\,] =n+1$, one has
\beq
\big( [\check C]\,,\, [\check C'\,] \big):= \int^{\check\K}\!\!\!\!\!\int_N \,
[\check C]\smile [\check C'\, ] \ \in \ \bbt \ .
\label{intformdiffK}\eeq
By the general properties of Pontrjagin self-dual generalized
cohomology theories, this pairing is \emph{perfect}, i.e. it induces
an isomorphism
\beqa
\check\K{}^j(N)\cong \Hom_{\scrAb}\big(\check\K{}^j(N)\,,\, \bbt\big)
\ .
\eeqa
But it is not necessarily \emph{antisymmetric}, due to
graded-commutativity of the cup product, e.g. in even degree $j=0$
this pairing is symmetric. To this end we use the Adams operation
(\ref{Adamsop}) (lifting the complex conjugation map $\Psi^{-1}$ of
\S\ref{Selfduality}) to define a new pairing by
\beq
\big\langle [\check C]\,,\, [\check C'\,] \big\rangle:= \Big[\,
\int^{\check\K}\!\!\!\!\!\int_N \,
[\check C]\smile \check\Psi^{-1}[\check C'\, ] \, \Big]_{u^0} \ \in \
\bbt \ ,
\label{diffKpairing}\eeq
where again we regard differential forms on $N$ as elements of the
graded vector space $\Omega(N;K^\bullet)^\bullet$ with $K^\bullet=\IR(u)$.

\begin{theorem}
The
pairing $\langle-,-\rangle:\check\K{}^j(N)\otimes \check\K{}^j(N) \to
\bbt$ is non-degenerate, and it is antisymmetric in dimensions $n\equiv1 \ {\rm mod} \ 4$.
\label{pairingthm}\end{theorem}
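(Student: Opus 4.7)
The plan is to split the statement into its two parts and reduce each to a property of the untwisted intersection form (\ref{intformdiffK}) together with the ring-theoretic behaviour of the Adams operation (\ref{Adamsop}).

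For non-degeneracy, I would first invoke the fact that (\ref{intformdiffK}) is a perfect pairing, which is the manifestation of Pontrjagin--Poincar\'e self-duality for the smooth refinement $\check\K{}^\bullet$; this is guaranteed by the general result for Pontrjagin self-dual generalized cohomology theories recalled at the end of \S\ref{Selfduality}, applied to $\K^\bullet$, and combined with the two short exact sequences of (\ref{diffKexactseqs}) which characterise $\check\K{}^j(N)$. The Adams operation $\check\Psi{}^{-1}$ is an involutive ring automorphism of $\check\K{}^j(N)$, refining the complex conjugation on vector bundles. Composing the perfect pairing $(-,-)$ with $\check\Psi{}^{-1}$ in its second slot---with the $[-]_{u^0}$ projection absorbing the degree shift introduced by the factor $u^\ell$ in (\ref{Adamsop})---then produces again a perfect pairing, and hence $\langle-,-\rangle$ is non-degenerate.

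For antisymmetry when $n\equiv 1\ {\rm mod}\ 4$, I would apply the modified Chern character to reduce the calculation modulo torsion to an explicit pairing of differential forms on $N$. For topologically trivial fields one expects a representation
\beqa
\big\langle [\check C]\,,\,[\check C'\,]\big\rangle \ \equiv \ \Big[\,\int_N\, F\big([\check C]\big)\wedge\overline{F\big([\check C'\,]\big)}\wedge\widehat{A}(N)\,\Big]_{u^0} \qquad {\rm mod} \ \IZ \ .
\eeqa
Swapping the two arguments produces two signs that must be carefully combined. Graded-commutativity of the wedge product of a $p$-form with a $q$-form gives $(-1)^{pq}$, and since only bidegrees with $p+q=n$ survive the integration, the parity of $pq$ is controlled by $n\,\mathrm{mod}\,4$. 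Independently, since complex conjugation acts on the Bott generator as $\overline u=-u$, reorganising $\overline{F([\check C'\,])}\wedge F([\check C])$ as $F([\check C])\wedge\overline{F([\check C'\,])}$ and projecting to the $u^0$ coefficient produces further powers of $(-1)$ in each Laurent bidegree. A direct parity analysis should then show that these two contributions conspire to a global factor of $-1$ precisely when $n\equiv 1\ {\rm mod}\ 4$. Extension from topologically trivial fields to all of $\check\K{}^j(N)$ is then routine, using the characteristic class and field strength sequences of (\ref{diffKexactseqs}) together with the ring structure and the fact that $\check\Psi{}^{-1}$ respects both.

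The hard part will be the sign bookkeeping, where three gradings are in play simultaneously: the $\IZ/2$-grading of complex K-theory, the ordinary form degree on the $n$-dimensional slice $N$, and the powers of the Bott generator $u$ produced both by $u^\ell$ in (\ref{Adamsop}) and by complex conjugation $\overline u=-u$. The ring-homomorphism property of $\check\Psi{}^{-1}$ will be used twice---once to recast $\langle x,y\rangle$ in the form $(x,\check\Psi{}^{-1}(y))$ so that Pontrjagin--Poincar\'e duality for (\ref{intformdiffK}) applies, and once when interchanging the two arguments via the identity $\check\Psi{}^{-1}(x\smile y)=\check\Psi{}^{-1}(x)\smile\check\Psi{}^{-1}(y)$ while tracking the accumulated signs.
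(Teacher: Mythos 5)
Your non-degeneracy argument is exactly the paper's: $u^{-\ell}\,\check\Psi{}^{-1}$ is an involution, hence an isomorphism, so composing the perfect pairing $(-,-)$ with it in one slot preserves perfectness. That half is fine.

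The antisymmetry half has a genuine gap, and it is located precisely in your displayed formula. You propose to compute the pairing on topologically trivial fields as $\big[\int_N F([\check C])\wedge\overline{F([\check C'\,])}\wedge\widehat{A}(N)\big]_{u^0}$, i.e.\ as an integral of two \emph{curvatures}. This expression vanishes identically: for $[\check C]=[C]$ topologically trivial one has $F([C])=\dd C$, so the integrand is exact on the closed manifold $N$ and Stokes kills it; independently, the total form degrees do not match, since for either $j=0$ or $j=-1$ the product of the two field strengths has even form degree in every Laurent bidegree and cannot integrate to anything nonzero over the odd-dimensional $N$. The underlying conceptual point is that $\langle-,-\rangle$ is a $\bbt$-valued \emph{secondary} pairing and cannot be recovered from curvatures alone; the correct expression, which the paper computes, is $\big[\int_N C\wedge\dd\,\Psi^{-1}(C'\,)\big]_{u^0}\bmod\IZ$, with the \emph{potential} in the first slot and the curvature in the second (this is the property $[C]\smile[\check C{}'\,]=[C\wedge F([\check C{}'\,])]$ of products with topologically trivial classes, fed into the integration map). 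It is exactly the asymmetry of this expression -- one potential, one curvature -- that makes the integration by parts, and hence the sign $(-1)^{k+1}$ for $n=2k+1$, the actual content of the computation.

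Your closing claim that the extension from topologically trivial fields to all of $\check\K{}^j(N)$ is ``routine'' also understates the difficulty. Antisymmetry of a bilinear pairing on a subgroup says nothing about its behaviour on the full group, and the classes not captured by globally defined forms -- in particular the flat fields with torsion characteristic class, whose pairing is a linking-type form on $\Tor\,\K^{j+1}(N)$ -- are where the subtlety lives. Your suggested route via $\check\Psi{}^{-1}(x\smile y)=\check\Psi{}^{-1}(x)\smile\check\Psi{}^{-1}(y)$ is plausible in outline, but it requires knowing how the integration map $\int^{\check\K}\!\!\!\int_N$ interacts with $\check\Psi{}^{-1}$ (Adams operations do not commute with Gysin maps in general), and that is the missing lemma. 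The paper does not claim this step is routine either: it proves the sign only for topologically trivial Type~IIA fields as an illustration and defers the general case to the orthogonal differential K-theory argument of Freed--Moore--Segal, which is where the real work is done.
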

\begin{proof}
The operator $u^{-\ell}\,\check\Psi^{-1}$ is an involution, and hence
an isomorphism, and so since $(-,-)$ is non-degenerate, it follows
that the pairing $\langle-,-\rangle$ is also non-degenerate.
For illustration, we will prove antisymmetry for topologically trivial Ramond--Ramond
fields in Type~IIA string theory, i.e. potentials $[\check
C]=[C]\in\check\K{}^0(N)$ which can be represented
by odd degree differential forms $C\in\Omega(N;K^\bullet )^{-1}$
(modulo exact forms). Setting $n=2k+1$, the pairing is then given
modulo $\IZ$ by
\beqa
\big\langle [C]\,,\, [C'\,] \big\rangle &=& \Big[\, \int_N \,
C\wedge \dd \Psi^{-1}(C'\,) \, \Big]_{u^0} \quad {\rm mod} \ \IZ
\\[4pt] &=& \Big[\,
\int_N\,\Big(\,\sum_{j=0}^k \, u^{-j-1}\otimes C_{2j+1}\,\Big)\wedge\Big(\,
u^{k+1}\,\sum_{l=0}^k\,
(-1)^{l}\,u^{-l-1}\otimes \dd C'_{2l+1}\,\Big)\,\Big]_{u^0} \nonumber\\[4pt]
&=&(-1)^{k+1}\,\sum_{j=0}^k\,(-1)^j \ \int_N \,C_{2j+1}\wedge
 \dd C'_{2(k-j)-1} \nonumber\\[4pt] &=&(-1)^{k+1}\,
\sum_{j=0}^k\,(-1)^j \ \int_N \,C'_{2(k-j)-1} \wedge
 \dd C_{2j+1} \nonumber\\[4pt]
&=& (-1)^{k+1}\, \Big[\, \int_N \,
C' \wedge \dd \Psi^{-1}(C) \, \Big]_{u^0} \ = \ (-1)^{k+1}\,
\big\langle [C'\,]\,,\, [C] \big\rangle \ ,
\eeqa
where $C_p,C_p' \in\Omega^p(N)$; we have used the fact that $N$ has no
boundary and that the de~Rham differential is a skew-derivation on
forms. It follows that the pairing $\langle-,-\rangle$ is
antisymmetric only when $k$ is an even integer, i.e. when
$n= \dim(N)=1+4r$ with $r\in\mathbb{N}$. In the general case, the
proof given in~\cite{Freed:2006ya} involves the definition of an
orthogonal version of differential K-theory, and its value on a point.
\end{proof}

Thm.~\ref{pairingthm} applies in particular to the case relevant
to the physical Type~II superstring theory, where $N$ is a
nine-dimensional spin manifold. Using it we can now define
the Type~IIA/IIB commutator map
\beqa
s\,:\, \check\K{}^j(N)\times \check\K{}^j(N) \ \longrightarrow \ \uo \ ,
\qquad s\big( [\check C]\,,\, [\check C'\,] \big) = \exp\Big(2\pi\ii
\big\langle [\check C]\,,\, [\check C'\,] \big\rangle \Big)
\eeqa
in even/odd degree. It is bilinear, skew-symmetric, and
non-degenerate, but it is \emph{not} necessarily alternating. One has
$s(g,g)^2=1$ for all $g\in\cG$, so that $s(g,g)=\exp(\pi\ii
\varepsilon(g))$, where $\varepsilon:\cG\to\IZ_2$ is a group homomorphism
which defines a \emph{$\IZ_2$-grading}. An alternating commutator map
$\widetilde{s}$ is then defined by
\beq
\widetilde{s}\big( [\check C]\,,\, [\check C'\,] \big) =
\exp\Big(2\pi\ii \big(
\big\langle [\check C]\,,\, [\check C'\,]\big\rangle
-\mbox{$\frac12$}\, \varepsilon[\check C]\, \varepsilon[\check C'\,] \big)
\Big) \ .
\label{tildecommmap}\eeq
This definition does not depend on the chosen lifts. The degree
\beqa
\varepsilon[\check C] = 2\, \big\langle [\check C]\,,\, [\check C] \big\rangle =
2\, \Big[\, \int^{\check\K}\!\!\!\!\!\int_N \,
[\check C]\smile \check\Psi^{-1}[\check C] \, \Big]_{u^0} \ \in \ \IZ_2
\eeqa
depends only on the
characteristic class of $[\check C]$,
i.e. $\varepsilon\in\Hom_{\scrAb}(\K^j(N),\IZ_2)$; for $j=0$ it can be identified with the mod~2 index of the Dirac operator on
$N$ coupled to the virtual bundle $\xi\otimes\overline{\xi}$, where
$\xi=c([\check C]) $.

Using Prop.~\ref{class2} we may now define the Heisenberg group
extension $\Heis\big(\check\K{}^j(N)\big)$ of the
differential K-theory group $\cG=\check\K{}^j(N)$ associated to the
commutator map $s$, which is unique up to (non-canonical)
isomorphism. It is also
$\IZ_2$-graded with degree map
$\varepsilon:\Heis\big(\check\K{}^j(N)\big) \to \IZ_2$, with the maps
in the central extension $\IZ_2$-graded and the $\uo$ subgroup of even
degree. The quantum Hilbert space $\hil$ is also $\IZ_2$-graded, and the unique $\IZ_2$-graded irreducible
representation of $\Heis\big(\check\K{}^j(N)\big)$ (after a choice of
polarization) is compatible with the $\IZ_2$-grading on $\End_\IC(\hil)$.

\begin{definition}
The quantum Hilbert space of the Ramond--Ramond gauge theory
$\hil_{\rm RR}$ is the unique irreducible $\IZ_2$-graded unitary
representation of the Heisenberg group
$\Heis\big(\check\K{}^j(N)\big)$ of positive energy defined by the commutator map
(\ref{tildecommmap}) which is compatible with the polarization
discussed in \S\ref{Quantgenab}, with the property that the central subgroup $\uo$ acts by
scalar multiplication, where $j=0/-1$ for the Type~IIA/IIB string theory respectively.
\label{hilRRdef}\end{definition}

The $\IZ_2$-grading implies, in particular, that the quantum Hilbert
space $\hil_{\rm RR}$ contains both bosonic and fermionic states. Let
\beqa
\cO\,: \, \Heis\big(\check\K{}^j(N)\big) \ \longrightarrow \ 
\End_\IC(\hil_{\rm RR})
\eeqa
denote the irreducible representation of Def.~\ref{hilRRdef}. Given classes
$[\check C], [\check C'\, ]\in\check\K{}^j(N)$, let
$\check\calc=([\check C],z)$ and $\check\calc{}'=([\check C'\,],z'\, )$ be
lifts to the Heisenberg group $\Heis\big(\check\K{}^j(N)\big)$. From
(\ref{Heisgroupcomm}), with the commutator understood as the graded
group commutator in $\Heis\big(\check\K{}^j(N)\big)$, together with
$\cO\big((0,z)\big)=z\, \Id_{\hil_{\rm RR}}$ for all $z\in\uo$, it follows that
the commutation relations among the corresponding unitary operators on the
(infinite-dimensional) quantum Hilbert space $\hil_{\rm RR}$ of the
Ramond--Ramond gauge theory are given by
\beq
\big[\cO(\check\calc\, )\,,\, \cO(\check\calc\,'\,)\big] =
\widetilde{s}\big( [\check C]\,,\, [\check C'\,] \big) \
\Id_{\hil_{\rm RR}}
\label{quantumcomms}\eeq
in $\End_\IC(\hil_{\rm RR})$.

\subsection{Noncommutative quantum flux sectors}

The quantum flux sectors of the free abelian gauge theory can be
described as follows. A state $\psi\in{\rm
  L}^2\big(\check\RH{}^p(N)\big)$ of definite electric flux $
E\in\RH^{n-p+1}(N;\IZ)$ is an eigenstate of translation by flat
fields, i.e.
\beqa
\psi(\check A+\phi) = \exp\Big(2\pi \ii \int^{\RH{}}\!\!\!\!\!\!\int_N
\, E\smile \phi\Big) \
\psi(\check A) \qquad \mbox{for} \quad \phi\in \RH^{p-1}(N;\torus) \ .
\eeqa
This defines a decomposition of the Hilbert space $\hil$ into electric
flux sectors labelled by $E\in\RH^{n-p+1}(N;\IZ)$. Note that an
analogous definition using electric fields $\check
E\in\check\RH{}^{n-p+1}(N)$ and arbitrary translations
$\check\phi\in\check\RH{}^{p-1}(N)$ would lead to wavefunctionals
$\psi$ which are neither compactly supported nor decaying; in
particular, $\check E_1$ is homotopic to $\check E_2$ if and only if
$\int^{\check\RH{}}\!\!\!\int_N\, \check\phi\smile\check
E_1=\int^{\check\RH{}}\!\!\!\int_N\, \check\phi\smile\check E_2$ for
$\phi\in\RH^{p-1}(N;\bbt)$, or equivalently if and only if
$\int^{\RH{}}\!\!\!\!\int_N\, \phi\smile c(\check E_1)= \int^{\RH{}}\!\!\!\!\int_N\, \phi\smile c(\check E_2)$. Using dual
flat fields, we also get a decomposition of $\hil$ into magnetic flux sectors
labelled by $B\in\RH^{p}(N;\IZ)$. However, one cannot simultaneously
decompose $\hil$ into both electric and magnetic flux sectors because of the
Heisenberg commutation relations
\beqa
\big[\calu_E(\eta_e)\,,\,\calu_B(\eta_m)\big] = \exp\Big(2\pi \ii
\int^{\RH{}}\!\!\!\!\!\!\int_N \, \eta_e\smile \beta(\eta_m)\Big)
\ \Id_\hil
\ ,
\eeqa
where $\calu_{E}:\RH^{p-1}(N;\torus)\to \hil$ and
$\calu_{B}:\RH^{n-p}(N;\torus)\to \hil$ are quantization
maps corresponding to the electric and magnetic gradings of the Hilbert
space $\hil$, and $\beta$ is the Bockstein homomorphism (see
(\ref{CSflatseq})). Hence non-trivial commutators are related to
torsion in the cohomology $\RH^\bullet(N;\IZ)$; this means that the
Hilbert space $\hil$ can be simultaneously graded by electric and
magnetic fluxes only modulo torsion. See~\cite{Freed:2006ya,Freed:2006yc} for
further details. 

For $p=1$ and $M=S^1$, the generalized abelian gauge theory is the
theory of a periodic scalar field $g:S^1\to S^1$; in this case the
magnetic flux is the winding number and the electric flux is the
momentum of the field $g$.

For $p=2$ and $N=L_k=S^3/\IZ_k$ a lens space, there is only torsion in the
relevant cohomology groups $\RH^2(L_k;\IZ)=\IZ_k$ and
$\RH^1(L_k;\bbt)=\IZ_k$. The quantum Hilbert space is the unique
finite-dimensional irreducible representation of the Heisenberg group
extension
\beqa
1 \ \longrightarrow \ \IZ_k \ \longrightarrow \ {\tt
  Heis}(\IZ_k\times\IZ_k) \ \longrightarrow \ \IZ_k\times\IZ_k \
\longrightarrow \ 0 \ ,
\eeqa
and hence one cannot simultaneously measure
electric and magnetic flux in this case. A potential experimental test of this phenomenon is
described in~\cite{Kitaev:2007ed}: Although any embedded codimension
zero three-manifold in $\IR^3$ has torsion-free cohomology, it might
be possible to find a configuration where the effective space is only
immersed, with a line of double points, by using Josephson junctions.

Now let us look more closely at the quantum flux sectors of the
Ramond--Ramond gauge theory. As first pointed out in~\cite{Moore2000}, the quantization of \emph{flat}
Ramond--Ramond fields is of particular interest; these fluxes are described by classes in the K-module
theory which is isomorphic to the kernel of the curvature morphism
\beq
\K^{j-1}(N;\bbt)=\ker\Big(\check\K{}^j(N)\xrightarrow{ \ F \
}\Omega(N;K^\bullet)^j\Big) \ \xrightarrow{ \ \check \imath \ } \
\check\K{}^j(N) \ ,
\label{flatgroup}\eeq
where $\check \imath$ denotes the embedding and we assume $n=\dim(N)=4k+1$
for some $k\in\IN$. The Chern character $\ch:\K^j(N)\to
\RH(N;K^\bullet)^j$ becomes an isomorphism after tensoring over
$\IR$; its kernel coincides with the image of the connecting
homomorphism $\beta:\K^{j-1}(N;\bbt)\to \K^j(N)$, so that ${\rm
  im}(\beta)=\Tor\, \K^j(N)\subseteq\K^j(N)$ is the torsion
subgroup. We will show that the quantum commutators
(\ref{quantumcomms}) restrict non-degenerately to the flat
Ramond--Ramond fields if and only if the K-theory $\K^j(N)$ has
non-trivial torsion subgroup (and the real cohomology vanishes in the
opposite parity). 
On general grounds, the non-degenerate pairing on the torsion group
arises from the fact that there exists a non-degenerate pairing
between the group of components of the flat part and the torsion
subgroup of topological K-theory. Note that this sector of the
Ramond--Ramond gauge theory is ``topological'', in the sense that the
corresponding hamiltonian vanishes and there is no time evolution. In
this case the Heisenberg group $\Heis\big(\K^{j-1}(N;\bbt)\big)$ is a
finite-dimensional torus extended by a finite abelian group which
plays the role of the (finite-dimensional) configuration space of
fields; by Prop.~\ref{representation} it is represented uniquely on a
$\IZ_2$-graded finite-dimensional Hilbert space $\hil$. These torsion
fluxes arise entirely from Dirac quantization, and the corresponding
quantum operators do not all commute in the quantization by the full
K-theory group $\K^j(N)$. 

\begin{proposition}
Suppose that $\Tor\, \K^j(N)=0$. Let $\omega,\omega'\in
\K^{j-1}(N;\bbt)$ be classes of flat fields with lifts
$\tilde\omega,\tilde\omega'$ to the Heisenberg group
$\Heis\big(\K^{j-1}(N;\bbt)\big)$. Then
$$
\big[\Ocal(\tilde\omega)\,,\,\Ocal(\tilde\omega'\,)\big] =
\Id_{\hil_{\rm RR}} \ .
$$
\label{Ktors0thm}\end{proposition}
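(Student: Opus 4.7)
The plan is to show directly that the alternating commutator map $\widetilde{s}$ from (\ref{tildecommmap}) evaluates to $1$ on any pair of flat fields when $\Tor\,\K^j(N)=0$, after which the claim follows from the commutation relation (\ref{quantumcomms}) and the fact that $\cO\big((0,z)\big)=z\,\Id_{\hil_{\rm RR}}$.

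The key step is to evaluate the pairing $\langle\check\imath(\omega),\check\imath(\omega')\rangle$ when both arguments are flat. By the K-theoretic analog of Property~7 in \S\ref{GAGT} (which follows from the definition of the pairing via integration together with the field strength/characteristic class splitting encoded in (\ref{diffKexactseqs})), the pairing simplifies dramatically when one entry has vanishing curvature: for a flat field $\check\imath(\omega)$ with $F(\check\imath(\omega))=0$ paired against any class $[\check C']\in\check\K{}^j(N)$, one has
\beqa
\big\langle\check\imath(\omega)\,,\,[\check C'\,]\big\rangle \ = \ \Big[\,\int^{\K{}}\!\!\!\!\!\int_N\,\omega\smile\Psi^{-1}\big(c([\check C'\,])\big)\,\Big]_{u^0}\quad\mbox{mod}\ \IZ \ ,
\eeqa
where the integration is now the topological K-theory pushforward paired with the $\bbt$-coefficients on $\omega$. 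In particular, when $[\check C']=\check\imath(\omega')$ for another flat field, the characteristic class is
\beqa
c\big(\check\imath(\omega')\big) \ = \ \beta(\omega')\ \in\ \Tor\,\K^j(N) \ ,
\eeqa
since $\check\imath$ is the inclusion of flat fields and the map $c\circ\check\imath$ on $\K^{j-1}(N;\bbt)$ is precisely the Bockstein $\beta$ of the long exact sequence (\ref{Ktheorylongexact}); the image lands in the torsion subgroup because $\beta$ annihilates the divisible part $\K^{j-1}(N)\otimes\IR\to\K^{j-1}(N;\bbt)$.

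Now invoke the hypothesis $\Tor\,\K^j(N)=0$: this forces $\beta(\omega')=0$ for every $\omega'\in\K^{j-1}(N;\bbt)$, hence $c(\check\imath(\omega'))=0$ in $\K^j(N)$. Therefore $\langle\check\imath(\omega),\check\imath(\omega')\rangle=0$ in $\bbt$ for all flat $\omega,\omega'$. Specializing $\omega'=\omega$, this also gives $\varepsilon[\check\imath(\omega)]=2\langle\check\imath(\omega),\check\imath(\omega)\rangle=0$ in $\IZ_2$, so every flat field is of even degree. Plugging these two vanishings into (\ref{tildecommmap}),
\beqa
\widetilde{s}\big(\check\imath(\omega)\,,\,\check\imath(\omega')\big) \ = \ \exp\Big(2\pi\ii\big(0-\mbox{$\frac12$}\,0\cdot 0\big)\Big) \ = \ 1 \ .
\eeqa
Applying the representation $\cO$ to the graded group commutator (\ref{Heisgroupcomm}) with lifts $\tilde\omega,\tilde\omega'$ then yields $\big[\cO(\tilde\omega),\cO(\tilde\omega'\,)\big]=\widetilde{s}\big(\check\imath(\omega),\check\imath(\omega')\big)\,\Id_{\hil_{\rm RR}}=\Id_{\hil_{\rm RR}}$, as required.

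The main obstacle is justifying the reduction formula for $\langle\check\imath(\omega),[\check C']\rangle$ when the first argument is flat; this is the K-theoretic counterpart of the explicit formula stated in Property~7 for ordinary differential cohomology, and it requires unpacking the definition of the Adams-twisted integration pairing (\ref{diffKpairing}) on the flat subgroup. Once this is accepted—either by direct computation using any of the cocycle models of $\check\K{}^\bullet$ reviewed above, or by appeal to the general theory of Pontrjagin self-dual generalized differential cohomology—the conclusion is immediate from the vanishing of the Bockstein under the torsion-free hypothesis.
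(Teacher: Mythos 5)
Your proposal is correct and follows essentially the same route as the paper's own proof: both reduce the differential K-theory pairing on flat classes, via the module structure $\check\imath(\omega)\smile[\check C{}'\,]=\check\imath\big(\omega\,\dot\smile\, c([\check C{}'\,])\big)$, to a topological pairing against the characteristic class $c\circ\check\imath(\omega'\,)=\beta(\omega'\,)\in\Tor\,\K^j(N)$, which vanishes under the torsion-free hypothesis. Your explicit check that $\varepsilon[\check\imath(\omega)]=0$ (needed because the commutator is governed by $\widetilde{s}$ rather than $s$) is a useful detail left implicit in the paper, but it follows from the same vanishing by setting $\omega'=\omega$, so the argument is not substantively different.
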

\begin{proof}
Since $\ch\circ c =[\, - \,]_{\rm dR}\circ F$, we have
\begin{equation*}
c\circ{\check \imath}(\omega) \ \in \ \Tor\, \K^j(N)
\end{equation*}
for $\omega\in \K^{j-1}(N;\bbt)$. One also has
\begin{displaymath}
\check\imath(\omega)\smile{[\check C]}=\check \imath\big(\omega \,
\dot{\smile} \, c([\check C])\big)
\end{displaymath} 
for $\omega \in \K^{j-1}(N;\bbt)$ and $[\check C]\in\check\K{}^{j'}(N)$, where
\begin{equation*}
\dot{\smile}\,:\, \K^{j-1}(N;\bbt)\otimes{\K}^{j'}
(N) ~ \longrightarrow~ \K^{j+j'-1}(N;\bbt)
\end{equation*}
denotes the restriction of the cup product. Then
\begin{eqnarray*}
\big\langle\,  \check\imath(\omega)\,,\, \check\imath(\omega'\,) \,
\big\rangle &=& \Big[\,\int^{\check
  \K}\!\!\!\!\!\int_{N} \, \check
\imath(\omega)\smile\check\Psi^{-1}\big(\, \check\imath(\omega'\,)  \, \big)\,
\Big]_{u^0}\\[4pt]
&=&\Big[\,\int^{\check
  \K}\!\!\!\!\!\int_{N} \, \check\imath(\omega)\smile
\check\imath \big(\Psi^{-1}(\omega'\,)\big) \, \Big]_{u^0} \\[4pt]
&=&\Big[\,\int^{\check
  \K}\!\!\!\!\!\int_{N} \, \check\imath\Big(\omega\,
\dot{\smile}\, c\circ\check{\imath}\big(\Psi^{-1}(\omega'\,)\big)\Big) \,
\Big]_{u^0} \ = \ 0 \ ,
\end{eqnarray*}
where we have used the fact that the Adams operation $\check\Psi^{-1}$
commutes with the embedding $\check\imath$ and the last equality
follows from $\Tor\,\K^j(N)=0$.
\end{proof}

The converse of Prop.~\ref{Ktors0thm} gives an explicit criterion and
formula for the non-trivial quantum commutators of flat fields in the
Ramond--Ramond gauge theory; its crux is the fact that the pairing (\ref{diffKpairing})
factors to a non-degenerate pairing on the torsion part of topological K-theory.

\begin{proposition}
If $\Tor\, \K^j(N)\neq0$, then there exist classes $\omega,\omega'\in\K^{j-1}(N;\bbt)$ such that
\beqa
\big[\cO(\tilde \omega)\,,\, \cO(\tilde\omega'\,)\big] \neq
\Id_{\hil_{\rm RR}} \ .
\eeqa
\label{torsthm}\end{proposition}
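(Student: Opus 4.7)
The plan is to promote the calculation from the proof of Proposition \ref{Ktors0thm} into an explicit construction of a pair $(\omega,\omega')$ witnessing the non-triviality. For flat fields $\check\imath(\omega),\check\imath(\omega')\in\check\K{}^j(N)$ with $\omega,\omega'\in\K^{j-1}(N;\bbt)$, the identical manipulation used there gives
$$\big\langle\check\imath(\omega),\check\imath(\omega')\big\rangle \,=\, \Big[\int^{\check\K}\!\!\!\!\!\int_N \,\check\imath\big(\omega\,\dot\smile\,\beta\circ\Psi^{-1}(\omega')\big)\Big]_{u^0}\,\in\,\bbt,$$
where $\beta=c\circ\check\imath:\K^{j-1}(N;\bbt)\to\K^j(N)$ is the Bockstein homomorphism from (\ref{Ktheorylongexact}); unlike in Proposition~\ref{Ktors0thm}, here $\beta\circ\Psi^{-1}(\omega')$ need not vanish. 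The first and key step is to identify $\int^{\check\K}\!\!\int_N\circ\,\check\imath$, restricted to flat cycles, with the natural Pontrjagin pairing
$$\langle-,-\rangle_{\rm P}\,:\,\K^{j-1}(N;\bbt)\otimes\K^j(N)\longrightarrow\bbt,$$
the perfectness of which follows from (\ref{PontrdualityKth}) combined with Poincar\'e duality for the $n$-dimensional spin$^c$ manifold $N$; the upshot is the closed formula $\langle\check\imath(\omega),\check\imath(\omega')\rangle=\langle\omega,\beta\circ\Psi^{-1}(\omega')\rangle_{\rm P}$.

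Next I would observe that $\langle-,-\rangle_{\rm P}$ restricts non-degenerately to the Bockstein image $\Tor\,\K^j(N)$. The identity component $\K^{j-1}(N)\otimes\bbt$ of $\K^{j-1}(N;\bbt)$ consists of divisible elements and so annihilates all of $\Tor\,\K^j(N)$ under $\langle-,-\rangle_{\rm P}$; the pairing therefore factors through the Bockstein to a pairing of finite abelian groups
$$\Tor\,\K^j(N)\otimes\Tor\,\K^j(N)\longrightarrow\bbt,$$
whose perfectness follows from the universal coefficient argument invoked in \S\ref{DiffK} (namely $\Ext_{\scrAb}(\K_j(N),\bbt)=0$), which guarantees that $\Hom_{\scrAb}(-,\bbt)$ detects all torsion. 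Given $\Tor\,\K^j(N)\neq 0$, pick a non-zero $t\in\Tor\,\K^j(N)$; by exactness of (\ref{Ktheorylongexact}) there is $\tau\in\K^{j-1}(N;\bbt)$ with $\beta(\tau)=t$, and by the just-established perfect torsion pairing there is $\omega\in\K^{j-1}(N;\bbt)$ with $\langle\omega,t\rangle_{\rm P}\neq 0$ in $\bbt$. Setting $\omega':=\Psi(\tau)$ (the Adams operation is an involution on the flat group) and taking arbitrary lifts $\tilde\omega,\tilde\omega'$ to the Heisenberg group, formula (\ref{tildecommmap}) yields $\widetilde{s}(\check\imath(\omega),\check\imath(\omega'))\neq 1$ in $\uo$ after possibly rescaling $t$ to absorb the mod-$\tfrac{1}{2}$ correction produced by the $\varepsilon$-grading; the commutation relation (\ref{quantumcomms}) then gives the required $[\Ocal(\tilde\omega),\Ocal(\tilde\omega')]\neq\Id_{\hil_{\rm RR}}$.

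The principal obstacle is the first identification: showing rigorously that $\int^{\check\K}\!\!\int_N\circ\,\check\imath$ descends on flat classes to the Pontrjagin pairing $\langle-,-\rangle_{\rm P}$. Concretely, this amounts to identifying the integration map of differential K-theory restricted to flat cycles with the Umkehr map in $\bbt$-coefficient K-theory induced by the spin$^c$ Gysin pushforward $N\to\pt$, and verifying that $\Psi^{-1}$ together with the projection $[-]_{u^0}$ correctly realise the required duality in the appropriate $\IZ_2$-graded sense (with the right Bott-periodicity shift). Once this cornerstone is in place, the remainder of the argument is purely algebraic and relies only on the standard fact that Pontrjagin duality for complex K-theory restricts non-degenerately to torsion subgroups, a well-known consequence of the vanishing of $\Ext$-obstructions in $\bbt$-coefficient theories.
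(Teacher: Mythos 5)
Your proposal is correct and follows essentially the same route as the paper: reduce the differential K-theory pairing on flat fields, via the Bockstein, to a pairing that descends to a perfect pairing on $\Tor\,\K^j(N)$ (the paper establishes this perfectness by a commutative-diagram chase between the flat-field exact sequence and the $\Hom_{\scrAb}(-,\bbt)$-dual of the Chern character kernel sequence, whereas you argue via divisibility of the identity component plus $\Ext$-vanishing --- equivalent packagings of the same fact), and then use $\Tor\,\K^j(N)\neq 0$ to produce the witnesses. The technical cornerstone you flag (identifying the flat restriction of $\int^{\check\K}\!\!\!\int_N$ with the Pontrjagin pairing) is likewise asserted rather than proved in the paper, so your treatment is at the same level of rigor.
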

\begin{proof}
We show that the pairing $\langle-,-\rangle$ on
differential K-theory restricts non-degenerately to the subring
$\K^{j-1}(N;\bbt)\subset \check\K{}^j(N)$. Since $\K^{j-1}(N;\bbt)\cong\Hom_{\scrAb}(\K^j(N),\bbt)$, there is a non-degenerate pairing
\beqa
(-,-)_{\K} \,:\, \K^{j-1}(N;\bbt) \otimes \K^j(N) \ \longrightarrow \
\bbt
\eeqa
given by a formula like that in (\ref{intformdiffK}). Combined with
the pairing
\beqa
\Omega_\IZ(N;K^\bullet)^j\otimes \big( \Omega(N;K^\bullet)^{j-1}\,
\big/ \, \Omega_\IZ(N;K^\bullet)^{j-1}\big) \ \longrightarrow \ \bbt \
, \qquad \big(F\,,\,[C]\big) \ \longmapsto \ \Big[\, \int_N\, F\wedge
C\, \Big]_{u^0}
\eeqa
and the two exact sequences (\ref{diffKexactseqs}) of differential
K-theory, one proves explicitly that the pairing (\ref{intformdiffK})
is a perfect pairing. Recall from \S\ref{DiffK} that
the group of flat fields (\ref{flatgroup}) sits in the long exact sequence
\beqa
\cdots \ \longrightarrow \ \K^{j-1}(N) \ \xrightarrow{ \ \ch \ } \ \RH(N;K^\bullet)^{j-1} \
\longrightarrow \ \K^{j-1}(N;\bbt) \ \xrightarrow{ \ \delta \ } \
\K^j(N) \ \xrightarrow{ \ \ch \ } \ \RH(N;K^\bullet)^j \
\longrightarrow \ \cdots
\eeqa
which induces the short exact sequence
\beq
0 \ \longrightarrow \ \RH(N;K^\bullet)^{j-1} \, \big/ \,
\ch\big(\K^{j-1}(N)\big) \ \longrightarrow \ \K^{j-1}(N;\bbt) \
\xrightarrow{ \ \beta \ } \ \Tor\, \K^j(N) \ \longrightarrow 0 \ .
\label{flatshortexact}\eeq
There is also a short exact sequence
\begin{eqnarray*}
0~\longrightarrow~ \Hom_{\scrAb}\big(\RH(N;K^\bullet)^j\,,\,\bbt\big)
&\longrightarrow& \Hom_{\scrAb}\big(\K^j(N)\,,\,\bbt\big)~
\longrightarrow \\ &\longrightarrow&  \Hom_{\scrAb}\big(\Tor\,\K^j(N)\,,\,\bbt\big)~
\longrightarrow~ 0
\end{eqnarray*}
obtained by applying the exact contravariant functor
$\Hom_{\scrAb}\big(-,\bbt)$ to the short exact sequence induced by taking the kernel of the Chern character homomorphism
$\ch:\K^j(N)\to \RH(N;K^\bullet)^j$. We then obtain a commutative diagram with exact
horizontal sequences given by
$$
{\scriptsize\xymatrix{
0~
\ar[r]&~\dfrac{\RH(N;K^\bullet)^{j-1}}{{\ch}\big(\K^{j-1}(N)
  \big)}~ \ar[r]\ar[d]& ~\K^{j-1}(N;\bbt)~
\ar[r]^{\beta}\ar[d]^{\approx}& ~\Tor\, \K^j(N) ~ \ar[r]\ar[d]& ~0\\
0~ \ar[r]& \ \Hom_{\scrAb}\big(\RH(N;K^\bullet)^j \,,\,\bbt\big)~ \ar[r]&
~\Hom_{\scrAb}\big(\K^j(N) \,,\,\bbt \big)~
\ar[r]& ~\Hom_{\scrAb}\big(\Tor\, \K^j(N) \,,\,\bbt \big)~ \ar[r]&
~0
}}
$$
where the vertical morphism on the left is given by composing wedge product,
integration, and reduction modulo $\IZ$, and in the middle isomorphism
we have used Pontrjagin duality of the $\IR/\IZ$ K-theory
$\K^{j-1}(N;\bbt)$. By Pontrjagin duality on $K^\bullet$-valued cohomology
$\RH(N;K^\bullet)^j$ it follows that the left vertical morphism is an isomorphism,
and hence so is the right vertical morphism. Denote by
\beq
(-,-)_{\Tor} \,:\, \Tor\, \K^j(N) \otimes\Tor\, \K^j(N) \
\longrightarrow \ \bbt
\label{torsionpairing}\eeq
the \emph{torsion pairing} associated to these morphisms. Then the
calculation in the proof of Prop.~\ref{Ktors0thm} shows that
$$
\big\langle \, \check\imath(\omega)\,,\, \check\imath(\omega'\,)
\big\rangle=\big(\omega\,,\,c(\,\check\imath(\omega'\,)\, ) \big)_{\K}
\ ,
$$
and since $\ch\circ
c=[-]_{\rm dR}\circ F$, we have $c(\, \check i(\omega'\,))\in \Tor\, \K^j(N)$
and it follows that
$$
\big\langle \, \check\imath(\omega)\,,\, \check\imath(\omega'\,)
\big\rangle =\big(\beta(\omega)\,,\,c(\, \check\imath(\omega'\,) \,)
\big)_{\Tor} \ ,
$$
where we have used the fact that the kernel torus $\ker(\beta)$ of the Bockstein
homomorphism has trivial cup product with the torsion elements in $\Tor\,
\K^j(N)$. Since the pairing $(-,-)_{\Tor}$ is
non-degenerate, and $\Tor\, \K^j(N) \neq0$, we can always find classes
$\omega$ and $\omega'$ in $\K^{j-1}(N;\bbt)$ such that $\langle \,
\check\imath(\omega), \check\imath(\omega'\,)\rangle \neq0$. 
\end{proof}

An explicit formula for the torsion pairing (\ref{torsionpairing})
can be written as follows. Let $E\to N$ be a
complex vector bundle and $\rank(E)$ the trivial vector bundle over
$N$ of the same rank as $E$. Then the K-theory class
$\xi=[E]-[\rank(E)]$ is torsion, so there exits an integer $k$ and an
isomorphism on K-theory $\psi:E^{\oplus k}\to k\, [\rank(E)]$. Let
$\nabla$ be a connection on $E$, and $\nabla_0$ the trivial connection
with vanishing holonomy. Then the K-theory integral of the class $\xi$
can be expressed as
\beqa
\int^{\K}\!\!\!\!\!\int_N \,
\xi=\eta(\Dirac_\nabla)-\eta(\Dirac_{\nabla_0})- \frac1k\, \int_N\,
\CS\big(\psi^*(\nabla^{\oplus k})\,,\, \nabla_0^{\oplus k}\big) \wedge
\widehat{A}(N) \quad {\rm mod} \ \IZ \ ,
\eeqa
where $\eta(\Dirac_\nabla)$ is the spectral asymmetry of the Dirac operator
on $N$ coupled to the bundle $E$.

The flat fluxes play a crucial role in the grading on the quantum
Hilbert space $\mathcal{H}_{\rm RR}$ of the
Ramond--Ramond gauge theory into topological
sectors~\cite{Freed:2006ya,Freed:2006yc}. As explained
by~\cite{Moore2000,Freed:2006ya,Freed:2006yc}, the subgroup of
differential K-theory comprinsing flat cocycles is the group of
unbroken gauge symmetries of the Ramond--Ramond gauge theory. The
equivalence classes comprising shifts of cycles by flat fields define
the topological classes of Ramond--Ramond fluxes. Recall that the characteristic classes
$\xi\in\K^j(N)$ label the connected components of
$\check\K{}^j(N)$. Hence there is a natural grading of the quantum Hilbert
space $\hil_{\rm RR}$ of the Ramond--Ramond gauge theory into topological sectors by the K-theory group $\K^j(N)$ modulo
torsion; it is induced by diagonalising the translation action by the
flat Ramond--Ramond fields. The group of components of this subgroup,
which is isomorphic to the torsion part of the topological
K-theory, can shift the Hilbert space gradings. In that case, the
grading can only be defined modulo these torsion subgroups. By Prop.~\ref{torsthm}, if the topological K-theory $\K^j(N)$
of the space $N$ has non-trivial torsion subgroup, then elements of the
subgroup of flat Ramond--Ramond fields do not commute in the Heisenberg extension of the
differential K-theory group and quantize to
operators which do not all
commute among themselves; this expresses an uncertainty
principle which asserts that the K-theory class of a Ramond--Ramond
field cannot be measured. Explicit examples exhibiting this phenomenon
can be found in~\cite[Ex.~2.21]{Freed:2006ya}
and~\cite[\S5]{Freed:2006yc}. Flat Ramond--Ramond fluxes also give rise to novel effects in certain flux
compactifications of Type~IIA string
theory~\cite[\S4.5]{dBDHKMMS}. If $N$ is any smooth manifold of dimension
$4k+1$ with finite
abelian fundamental group, then the K-theory
$\K^0(N)$ contains a non-trivial torsion subgroup. A simple example is the manifold
$N=L_k\times S^6$ where $L_k\cong S^3/\IZ_k$ is a three-dimensional
lens space; then $\Tor\, \K^0(N)=\IZ_k\oplus\IZ_k$.

\bigskip

\section*{Acknowledgments}

\noindent
This article is based partly on a series of lectures given by the
author at the
school on ``Higher Index Theory and Differential K-Theory'' which was
held at the Mathematisches Institut of Georg-August-Universit\"at in
G\"ottingen, Germany on October 18--22, 2010; the author thanks Alexander Kahle,
Thomas Schick and Alessandro Valentino for the invitation and
hospitality during the school. Some of the details in these notes
arose in the course of discussions and correspondence with Ulrich Bunke,
Thomas Schick and Alessandro Valentino, whom we warmly thank for the
collaboration. Some parts of this paper are
also influenced by discussions and presentations at the workshop on ``Differential
Cohomology'' which was held at the Simons Center for Geometry
and Physics in Stonybrook, New York on January 10--14, 2011; the
author warmly thanks Dan Freed, Greg Moore and Dennis Sullivan for
the invitation to participate. Portions of this paper were 
developed at the Erwin Schr\"odinger International Institute for Mathematical
Physics in Vienna, Austria during July 2012 under the auspices of the
programme ``K-Theory and Quantum Field Theory''; we thank
Alan Carey, Harald Grosse and Jouko Mickelsson for the invitation to
participate and present some of this material. Finally, the author would like to thank Andrey Bytsenko
for the invitation to contribute this article to the proceedings.
This work was supported in part by the
Consolidated Grant ST/J000310/1 from the UK Science and Technology
Facilities Council, and by Grant RPG-404 from the Leverhulme Trust.

\bigskip


\end{document}